\newcolumntype{Z}{>{\setbox0=\hbox\bgroup}c<{\egroup}@{\hspace*{-\tabcolsep}}}
\newtheorem{thm}{Theorem}[section]
\newtheorem{ass}{Assumption}[section]
\newtheorem{cor}{Corollary}[section]
\newtheorem{rem}{Remark}[section]
\newtheorem{lem}{Lemma}[section]
\pgfplotsset{compat=1.7}
\tikzset{mycolor/.style = {dashed,rounded corners,line width=1bp,color=#1}}%
\tikzset{myfillcolor/.style = {draw,fill=#1}}%
\tikzset{
	declare function={
		normcdf(\x,\m,\s)=1/(1 + exp(-0.07056*((\x-\m)/\s)^3 - 1.5976*(\x-\m)/\s));
	}
}
\newcommand*{\addFileDependency}[1]{
  \typeout{(#1)}
  \@addtofilelist{#1}
  \IfFileExists{#1}{}{\typeout{No file #1.}}
}
\newcommand*\showfontsize{\f@size{} point}
\date{\vspace{-5ex}}
\newcommand{\blind}{1}
\begin{document}
\def\spacingset#1{\renewcommand{\baselinestretch}%
{#1}\small\normalsize} \spacingset{1}

\if1\blind
{
 \title{Two-Sample and Change-Point Inference for Non-Euclidean Valued Time Series}
		
    \author[1]{Feiyu Jiang}
    \author[2]{Changbo Zhu\footnote{Corresponding author. Email address: czhu4@nd.edu.}}
    \author[3]{Xiaofeng Shao}
        \affil[1]{Department of Statistics and Data Science, Fudan University}
    \affil[2]{Department of Applied and Computational Mathematics and Statistics, University of Notre Dame}

    \affil[3]{Department of Statistics, University of Illinois at Urbana Champaign}
		
	\date{}	
	\maketitle

} \fi

\if0\blind
{
	\title{Two-Sample and Change-Point Inference for Non-Euclidean Valued Time Series}	
	\author{}
	\date{}
	\maketitle
} \fi

\begin{abstract}
Data objects taking value in a general metric space  have become increasingly common in modern data analysis. In this paper, we study two important statistical inference problems, namely, two-sample testing and change-point detection, for such non-Euclidean data under temporal dependence. Typical examples of non-Euclidean valued time series include yearly mortality distributions,  time-varying networks, and covariance matrix time series. To accommodate unknown temporal dependence, we advance  the self-normalization (SN) technique \citep{shao2010} to the inference of non-Euclidean time series, which is substantially different from the existing SN-based inference for functional time series that reside in Hilbert space \citep{zhang2011}.
Theoretically, we propose new regularity conditions that could be easier to check than those in the recent literature, and derive the limiting distributions of the proposed test statistics under both null and local alternatives. For change-point detection problem, we also derive the consistency for the change-point location estimator, and combine our proposed change-point test with wild binary segmentation to perform multiple change-point estimation. Numerical simulations demonstrate the effectiveness and robustness of our proposed tests compared with existing  methods in the literature. Finally, we apply our tests  to two-sample inference in mortality data and change-point detection in cryptocurrency data.
\end{abstract}




\section{Introduction}

Statistical analysis of non-Euclidean data that reside in a metric space is gradually emerging as an important  branch of functional data analysis, motivated by increasing encounter of such data in many modern applications. Examples include 
the analysis of sequences of age-at-death distributions over calendar years \citep{mazzuco2015,shang2017},
covariance matrices in the analysis of diffusion tensors in medical imaging \citep{dryden2009non}, and  graph Laplacians of networks \citep{ginestet2017}.
One of the main challenges in dealing with such data is that the usual vector/Hilbert space operation, such as
projection and inner product may not be well defined and only the distance between two non-Euclidean
data objects is available. 

Despite the challenge, the list of papers that propose new statistical techniques to analyze 
non-Euclidean data has been growing. Building on Fr\'echet mean and variance \citep{frechet1948}, which are counterparts of mean and variance for metric space valued random object, \cite{dubey2019frechet} proposed a test for comparing $N(\geq 2)$ populations of metric space valued data. \cite{dubey2020frechet} developed a novel test to detect a change point in the Fr\'echet mean and/or variance in a sequence of independent non-Euclidean data. The classical linear and nonparametric regression has also been extended to  metric spaced valued data; see \cite{petersen2019}, \cite{tucker2022}, and \cite{zhang2022}, among others. 
So far, the majority of the literature on
non-Euclidean data has been limited to independent  data, and the only exceptions are \cite{zkp2022} and \cite{zhu2021,zhu2022}, which mainly focused on the autoregressive modeling of non-Euclidean valued time series. To the best of our knowledge, no  inferential tools are available for non-Euclidean valued time series in the literature.

In this paper, we address two important problems: two-sample testing and change-point detection, in the analysis of non-Euclidean  valued time series. These two problems are also well motivated by the data we analyzed in the paper, namely, the yearly age-at-death distributions for countries in Europe and  daily Pearson correlation matrices  for five cryptocurrencies.  For time series data, serial dependence is the rule rather than the exception. This motivates us  to develop new tests for non-Euclidean time series that is robust to temporal dependence. 
Note that the two testing problems have been addressed by \cite{dubey2019frechet} and \cite{dubey2020frechet}, respectively for independent non-Euclidean data, but as expected, their tests fail to control the size when there is temporal dependence in the series; see Section \ref{sec:simu} for simulation evidence. 

To accommodate unknown temporal dependence, we develop test statistics based on self-normalization \citep{shao2010,shaozhang2010}, which is a nascent inferential technique for time series data. It has been mainly developed for vector time series and has been extended to  functional time series in Hilbert space \citep{zhang2011,zhangshao2015}. The functional extension is however based on reducing the infinite dimensional functional data to finite dimension via functional principal component analysis, and then applying SN to the finite-dimensional vector time series. 
Such SN-based inference developed for time series in Hilbert space cannot be applied to non-Euclidean valued time series, since  the projection and inner product commonly used for data in Hilbert space  are not available for data objects that live in a general metric space. 
The SN-based extension to non-Euclidean valued time series is therefore fairly different from that in \cite{zhang2011} and \cite{zhangshao2015}, in terms of both methodology and theory. For independent non-Euclidean valued data, \cite{dubey2019frechet,dubey2020frechet} build on the empirical process theory \citep{vaart1996weak} by regulating the complexity of the analyzed metric space, which is in general abstract and  may not be easy to  verify. In our paper, we take a different approach that is inspired by the M-estimation theory  in \cite{pollard1985new} and \cite{hjort2011asymptotics} for Euclidean data, and extend  it to non-Euclidean setting.  { We assume that the metric distance between data and the estimator of the Fr\'echet mean admits certain decomposition, which includes a bias term, a leading stochastic term, and a remainder term.} Our technical assumptions are more intuitive and { could be }easier to check in practice. Furthermore,  we are able to obtain explicit asymptotic distributions of our test statistics under the local alternatives of rate $O(n^{-1/2})$, where $n$ is the sample size,  under our assumptions, whereas they seem difficult to derive under the entropy integral type conditions employed by \cite{dubey2019frechet,dubey2020frechet}.

The remainder of the paper is organized as follows. Section \ref{sec:pre} provides background of non-Euclidean metric space in which random objects of interest reside in, and some basic assumptions that will be used throughout the paper. Section \ref{sec:two} proposes SN-based two-sample tests for non-Euclidean time series. Section \ref{sec:cpt}  considers SN-based change-point tests. Numerical studies for the proposed tests are presented in Section \ref{sec:simu}, and Section \ref{sec:app} demonstrates the applicability of these tests through real data examples. Section \ref{sec:con} concludes.  Proofs of all results are relegated to Appendix \ref{sec:proof}. Appendix \ref{sec:example} summarizes the examples that satisfy assumptions in Section \ref{sec:pre}, and Appendix \ref{sec:simu_functional} provides simulation results for functional time series.


Some notations used throughout the paper are defined as follows. Let $\|\cdot\|$ denote the conventional Euclidean norm.  Let $D[0,1]$ denote the space of functions on $[0, 1]$ which are right continuous with left limits, endowed with the Skorokhod topology  \citep{Billingsley1968}. We use $\Rightarrow$ to denote weak convergence in $D[0,1]$ or more generally in $\mathbb{R}^m$-valued function space $D^m[0,1]$, where $m\in\mathbb{N}$; $\to_d$ to denote convergence in distribution; and $\to_p$ to denote convergence in probability. 
A sequence  of random variables $X_n$ is said to be $O_p(1)$ if it is bounded in
probability. For $x\in\mathbb{R}$, define $\lfloor x\rfloor$ as the largest integer that is smaller than or equal to $x$, and $\lceil x \rceil$ as the smallest integer that is greater than or equal to $x$.

\section{Preliminaries and Settings}\label{sec:pre}
In this paper, we consider a metric space $(\Omega,d)$ that is totally bounded, i.e. for any $\epsilon>0$, there exist a finite number of open $\epsilon$-balls whose union can cover $\Omega$. For a sequence of {\it stationary} random objects $\{Y_t\}_{t\in\mathbb{Z}}$ defined on $(\Omega,d)$, we follow \cite{frechet1948}, and define their Fr\'echet mean and variance  by
\begin{equation}\label{defn_frecht}
\mu=\arg\min_{\omega\in\Omega}\mathbb{E}d^2(Y_t,\omega),\quad V=\mathbb{E}d^2(Y_t,\mu),
\end{equation}
respectively. Fr\'echet mean extends the traditional mean in linear spaces to more general metric spaces by minimizing  expected squared metric distance between the random object $Y_t$ and the centroid akin to the conventional mean by minimizing the expected sum of residual squares. It is particularly useful for objects that lie in abstract spaces without explicit algebraic structure. Fr\'echet variance, defined by such expected  squared metric  distance, is then used for measuring the dispersion in data.

Given finite samples $\{Y_t\}_{t=1}^{n}$, we define their  Fr\'echet subsample mean and variance as 
\begin{align}\label{substat}
\begin{split}
    & \hat{\mu}_{[a,b]}=\arg\min_{\omega\in\Omega}\sum_{t=1+\lfloor na\rfloor}^{\lfloor nb\rfloor}d^2(Y_t,\omega),\\ &\hat{V}_{[a,b]}=\frac{1}{\lfloor nb\rfloor-\lfloor na\rfloor}\sum_{t=1+\lfloor na\rfloor}^{\lfloor nb\rfloor}d^2(Y_t,\hat{\mu}_{[a,b]}),
\end{split}
\end{align}
where $(a,b)\in\mathcal{I}_{\eta}$, $\mathcal{I}_{\eta}=\{(a,b): 0\leq a<b\leq 1, b-a\geq \eta \}$ for some trimming parameter $\eta\in(0,1)$.  The case  corresponding to $a=0$ and $b\geq \eta$ is further denoted as 
$$\hat{\mu}_{[0,b]}=\hat{\mu}_{b},\quad \hat{V}_{[0,b]}=\hat{V}_{b},$$
with special case of $b=1$ corresponding to  Fr\'echet sample mean and variance \citep{petersen2019}, respectively.

Note that both Fr\'echet  (subsample) mean and variance depend on the space $\Omega$ and metric distance $d$, which require further regulation for desired inferential purposes. In this paper, we do not impose independence assumptions, and our technical treatment differs substantially from those in the literature, c.f. \cite{petersen2019,dubey2019frechet,dubey2020frechet,dubey2020functional,dubey2021modeling}.

\begin{ass}\label{ass_diff}
	$\mu$ is unique, and for some $\delta>0$, there exists a constant $K>0$ such that,
	$$
	\inf _{d(\omega, \mu)<\delta}\left\{\mathbb{E}\left(d^{2}(Y_0, \omega)\right)-\mathbb{E}\left(d^{2}(Y_0, \mu)\right)-K d^{2}(\omega, \mu)\right\} \geq 0.
	$$
\end{ass}

\begin{ass}\label{ass_unique}
	For any $(a,b)\in\mathcal{I}_{\eta}$, $\hat{\mu}_{[a,b]}$ exists and is unique almost surely.
\end{ass}
\begin{ass}\label{ass_LLN}
	For any $\omega\in \Omega$, and $(a,b)\in\mathcal{I}_{\eta}$, as $n\to\infty$,
	$$
	\frac{1}{\lfloor nb\rfloor-\lfloor na\rfloor}\sum_{t=\lfloor na\rfloor+1}^{\lfloor nb\rfloor}[d^2(Y_t,\omega)-\mathbb{E}d^2(Y_t,\omega)]\to_p 0.
	$$
\end{ass}
\begin{ass}\label{ass_FCLTo}
	For some constant $\sigma>0$, 
	$$
	\frac{1}{\sqrt{n}}\sum_{t=1}^{\lfloor nr\rfloor}\left(d^2(Y_t,\mu)-V\right)\Rightarrow \sigma B(r),\quad r\in(0,1],
	$$ 
	where $B(\cdot)$ is a  standard  Brownian motion. 
\end{ass}
\begin{ass}\label{ass_expand}
	Let $B_{\delta}(\mu) \subset \Omega$ be a ball of radius $\delta$ centered at $\mu$.  For $\omega\in B_{\delta}(\mu)$, i.e. $d(\omega,\mu)\leq \delta$,  we assume the following expansion  
	$$
	d^2(Y_t,\omega)-d^2(Y_t,\mu)= K_dd^2(\omega,\mu)+ g(Y_t,\omega,\mu)+R(Y_t,\omega,\mu),\quad t\in\mathbb{Z},
	$$
	where $K_d\in(0,\infty)$ is a constant, and $g(Y_t,\omega,\mu)$ and $R(Y_t,\omega,\mu)$  satisfy that, as $n\to\infty$,
	$$ \sup_{(a,b)\in\mathcal{I}_{\eta}}\sup_{\omega\in B_{\delta}(\mu)}\left|\frac{ n^{-1/2}\sum_{t=\lfloor n a\rfloor+1}^{\lfloor n b\rfloor} g(Y_t,\omega,\mu)}{d(\omega,\mu)}\right|=O_p(1),
	$$
	and
	$$ \sup_{(a,b)\in\mathcal{I}_{\eta}}\sup_{\omega\in B_{\delta}(\mu)}\left|\frac{n^{-1/2} \sum_{t=\lfloor n a\rfloor+1}^{\lfloor n b\rfloor} R(Y_t,\omega,\mu)}{d(\omega,\mu)+n^{1/2}d^2(\omega,\mu)}\right|\to_p 0,
	$$
	respectively.
\end{ass}

Several remarks are given in order.  Assumptions \ref{ass_diff}-\ref{ass_LLN} are standard and  similar conditions can be found in \cite{dubey2019frechet,dubey2020frechet}  and \cite{petersen2019}.  Assumptions \ref{ass_diff} and \ref{ass_unique} are adapted from Assumption (A1) in \cite{dubey2020frechet}, and are required for identification purpose.  In particular, { Assumption \ref{ass_diff} requires that the expected squared metric distance $\mathbb{E}d^2(Y_t,\omega)$ can be well separated from the Fr\'echet variance, and the separation is quadratic in terms of the distance $d(\omega,\mu).$ }Assumption \ref{ass_unique} is useful for obtaining the uniform convergence of the subsample estimate of  Fr\'echet mean, i.e.,  $\hat{\mu}_{[a,b]}$, which is a key ingredient  in forming the self-normalizer in SN-based inference. 
Assumption \ref{ass_LLN} is a pointwise weak law of large numbers,  c.f. Assumption (A2) in \cite{dubey2020frechet}.  Assumption \ref{ass_FCLTo}  requires the invariance principle to hold to regularize the partial sum that appears in Fr\'echet subsample variances.   Note that 
$d^2(Y_t,\omega)$ takes value in $\mathbb{R}$ for any fixed $\omega\in\Omega$, thus  both Assumption \ref{ass_LLN} and \ref{ass_FCLTo}  could be implied by  high-level weak temporal dependence conditions (e.g., strong mixing) in conventional Euclidean space, see \cite{shao2010, shao2015} for discussions.

 \Cref{ass_expand} distinguishes our theoretical analysis from the existing literature.
Its idea is inspired by  \cite{pollard1985new} and \cite{hjort2011asymptotics} for M-estimators. { In the conventional Euclidean space, i.e. $(\Omega,d)=(\mathbb{R}^m,\|\cdot\|)$ for $m\geq 1$, it is easy to see that the expansion in \Cref{ass_expand} holds with $K_d=1$, $
       g(Y_t,\omega,\mu)=2(\mu-\omega)^{\top}(Y_t-\mu)$ and $R(Y_t,\omega,\mu)\equiv 0.$ In more general cases,  Assumption \ref{ass_expand} can be interpreted as the expansion of $d^2(Y_t,\omega)$ around the target value $d^2(Y_t,\mu)$. In particular, $K_dd^2(\omega,\mu)$ can be viewed as the bias term,
  $g(Y_t,\omega,\mu)$ works as the asymptotic leading term  that is proportional to the distance $d(\omega,\mu)$ while $R(Y_t,\omega,\mu)$ is the asymptotically negligible remainder term. More specifically, after suitable normalization, it reads as,
  \begin{flalign*}
    &n^{-1/2} \sum_{t=\lfloor na\rfloor+1}^{\lfloor nb\rfloor} [d^2(Y_t,\omega)-d^2(Y_t,\mu)] \\= &\underbrace{ n^{1/2}(b-a)K_dd^2(\omega,\mu)}_{\text{bias  term}} + \underbrace{d(\omega,\mu)\frac{n^{-1/2}\sum_{t=\lfloor na\rfloor+1}^{\lfloor nb\rfloor}g(Y_t,\omega,\mu)}{d(\omega,\mu)}}_{\text{stochastic  term}}\\&+\underbrace{n^{-1/2}\sum_{t=\lfloor na\rfloor+1}^{\lfloor nb\rfloor} R(Y_t,\omega,\mu)}_{\text{remainder term}}.
\end{flalign*}
And the verification of this assumption can be done by analyzing each term. } 
In comparison, existing literature, e.g. \cite{petersen2019},   \cite{dubey2019frechet,dubey2020frechet,dubey2021modeling},  impose assumptions on the complexity of $(\Omega,d)$. These assumptions typically involve the behaviors of entropy  integral and covering numbers rooted in the  empirical process theory \citep{vaart1996weak}, which are abstract and difficult to check in practice, see  Propositions 1 and 2 in \cite{petersen2019}.  Assumption \ref{ass_expand}, on the contrary, regulates directly on the metric $d$ and  could be easily checked  for the examples below. Moreover, Assumption \ref{ass_expand} is useful for deriving local powers of tests to be developed in this paper, see Section \ref{sec:theory_two} and \ref{sec:theory_cpt} for more details. 
Examples that can satisfy Assumptions \ref{ass_diff}-\ref{ass_expand} include: \begin{itemize}
	\item $L_2$ metric $d_L$ for $\Omega$ being the set of square integrable functions on $[0,1]$;
	\item 2-Wasserstein metric $d_W$ for $\Omega$ being the set of univariate probability distributions on $\mathbb{R}$;
	\item Frobenius metric $d_F$ for $\Omega$  being the set of square matrices, including the special cases of covariance matrices and graph Laplacians;
	\item log-Euclidean metric $d_E$ for $\Omega$  being the set of covariance matrices.
\end{itemize}
We refer to Appendix \ref{sec:example}  for more details of these examples and  verifications of above assumptions for them.

\section{Two-Sample Testing}\label{sec:two}
This section considers two-sample testing  in metric space under temporal dependence. 
For two sequences of temporally dependent random objects $\{Y_t^{(1)},Y_t^{(2)}\}_{t\in\mathbb{Z}}$ on $(\Omega,d)$, we denote $Y_t^{(i)}\sim P^{(i)}$, where $P^{(i)}$ is the underlying marginal distribution of $Y_t^{(i)}$ with Fr\'echet mean and variance $\mu^{(i)}$ and $V^{(i)}$, $i=1,2$. Given finite sample observations $\{Y_t^{(1)}\}_{t=1}^{n_1}$ and $\{Y_t^{(2)}\}_{t=1}^{n_2}$, we are interested in the following two-sample testing problem, 
$$
\mathbb{H}_0: P^{(1)}=P^{(2)},~~\mbox{  v.s. }\mathbb{H}_a: P^{(1)}\neq P^{(2)}.
$$
Let $n=n_1+n_2$, we assume two samples are balanced, i.e.  $n_1/n\to\gamma_1$ and $n_2/n\to\gamma_2$ with $\gamma_1,\gamma_2\in(0,1)$ and $\gamma_1+\gamma_2=1$ as $\min(n_1,n_2)\to\infty.$ For $r\in(0,1]$, we define their recursive Fr\'echet sample mean and variance by 
$$\hat{\mu}^{(i)}_{r}=\arg\min_{\omega\in\Omega}\sum_{t=1}^{\lfloor rn_i\rfloor }d^2(Y_t^{(i)},\omega),\quad \hat{V}^{(i)}_{r}=\frac{1}{\lfloor rn_i\rfloor}\sum_{t=1}^{\lfloor rn_i\rfloor}d^2(Y_t^{(i)},\hat{\mu}^{(i)}_{r}),\quad i=1,2.$$

A natural candidate test of $\mathbb{H}_0$ is to compare their Fr\'echet sample mean and variance by contrasting $(\hat{\mu}^{(1)}_{1},\hat{V}^{(1)}_{1})$ and $(\hat{\mu}^{(2)}_{1},\hat{V}^{(2)}_{1})$. For the mean part,  it is tempting to use  $d(\hat{\mu}^{(1)}_{1},\hat{\mu}^{(2)}_{1})$ as the testing statistic. However, this is a non-trivial task as the limiting behavior of $d(\hat{\mu}^{(1)}_{1},\hat{\mu}^{(2)}_{1})$ depends heavily on the structure of the metric space, which may not admit  conventional algebraic operations. 
Fortunately, both $\hat{V}^{(1)}_{1}$ and $\hat{V}^{(2)}_{1}$ take value in $\mathbb{R}$, and it is thus intuitive  to compare their difference. In fact, \cite{dubey2019frechet} propose the test statistic of the form
$$
U_n= \frac{n_1n_2}{n \hat{\sigma}_1^2\hat{\sigma}_2^2}(\hat{V}^{(1)}_{1}-\hat{V}^{(2)}_{1})^2,
$$
where $\hat{\sigma}_i^2$ is a consistent estimator of $\lim_{n_i\to\infty}\mathrm{Var}\{\sqrt{n}(\hat{V}^{(i)}_{1}-V^{(i)})\}$, $i=1,2.$ 

However,  $U_n$ requires both  within-group and  between-group independence, which is too stringent to be realistic for  applications in this paper. When either of such  independence is violated, the test may fail to control size, see Section \ref{sec:simu} for numerical evidence. Furthermore, taking into account the temporal dependence   requires replacing the variance by long-run variance, whose consistent estimation  usually involves laborious tuning  such as choices of kernels and bandwidths  \citep{newey1987simple,andrews1991heteroskedasticity}.  To this end, we invoke self-normalization technique to bypass the foregoing issues. 

The core principle of self-normalization for the time series inference is to  use an inconsistent long-run variance estimator that is a function of recursive estimates to
yield an asymptotically pivotal statistic. The SN procedure does not involve any tuning parameter or involves less number of tuning parameters compared to traditional counterparts. See \cite{shao2015} for a comprehensive review of recent developments
for low dimensional time series. For recent extension to inference for high-dimensional time series, we refer to 
\cite{wang2020} and \cite{wzvs2022}.

\subsection{Test Statistics}
Define the recursive subsample test statistic based on Fr\'echet variance as
$$
T_n(r)=r(\hat{V}^{(1)}_{r}-\hat{V}^{(2)}_{r}),~r\in [\eta,1],
$$
and then construct the SN based test statistic as 
\begin{equation}\label{Dn1}
	D_{n,1}=\frac{n\left[T_n(1)\right]^2}{\sum_{k=\lfloor n\eta\rfloor}^n \left[T_n(\frac{k}{n})-\frac{k}{n}T_n(1)\right]^2},
\end{equation}
where $\eta\in(0,1)$ is a  trimming parameter  for controlling the estimation effect of $T_n(r)$ when $r$ is close to 0, which is important for deriving the uniform convergence of $\{\sqrt{n}T_n(r), r\in[\eta,1]\}$,  see \cite{zhou2013inference} and \cite{jiang2022modelling} for similar technical treatments.

The testing statistic \eqref{Dn1} is composed of the numerator $n[T_n(1)]^2$, which captures the difference in Fr\'echet variances, and the denominator  $\sum_{k=\lfloor n\eta\rfloor}^n \big[T_n(\frac{k}{n})-$ $\frac{k}{n}T_n(1)\big]^2$, which   is called self-normalizer and mimics the behavior of the numerator with suitable centering and trimming. For each $r\in[\eta,1]$, $T_n(r)$ is expected to be a consistent estimator for $r(V^{(1)}-V^{(2)}).$ Therefore,  under $\mathbb{H}_a$, $T_n(1)$ is large when there is significant difference in Fr\'echet variance, whereas the key element $T_n(r)-rT_n(1)$ in  self-normalizer remains to be small. This suggests that we should reject $\mathbb{H}_0$ for large values of $D_{n,1}$.  

Note that  \eqref{Dn1} only targets at difference in Fr\'echet variances. To detect the  difference in Fr\'echet means, we can use  contaminated Fr\'echet variance \citep{dubey2020frechet}. Let
$$
\hat{V}^{C,(1)}_{r}=\frac{1}{\lfloor rn_1\rfloor }\sum_{t=1}^{\lfloor rn_1\rfloor}d^2(Y_t^{(1)},\hat{\mu}^{(2)}_{r}),\quad \text{and}\quad\hat{V}^{C,(2)}_{r}=\frac{1}{\lfloor rn_2\rfloor }\sum_{t=1}^{\lfloor rn_2\rfloor}d^2(Y_t^{(2)},\hat{\mu}^{(1)}_{r}),
$$
and  $$
T_n^C(r)=r(\hat{V}^{C,(1)}_{r}+\hat{V}^{C,(2)}_{r}-\hat{V}^{(1)}_{r}-\hat{V}^{(2)}_{r}).
$$
The contaminated Fr\'echet  sample  variances $\hat{V}^{C,(1)}_{r}$ and $\hat{V}^{C,(2)}_{r}$ switch the role of $\hat{\mu}_r^{(1)}$ and $\hat{\mu}_r^{(2)}$ in $\hat{V}^{(1)}_{r}$ and $\hat{V}^{(2)}_{r}$, respectively, and could be viewed as proxies for measuring  Fr\'echet mean differences. 

Intuitively, it is expected that $\hat{V}^{C,(i)}_{r}\approx \mathbb{E}d^2(Y_t^{(i)},\mu^{(3-i)}),$ and $\hat{V}^{(i)}_{r}\approx \mathbb{E}d^2(Y_t^{(i)},$ $\mu^{(i)}), i=1,2$.  Under $\mathbb{H}_0$, both $\hat{\mu}_r^{(1)}$ and $\hat{\mu}_r^{(2)}$ are consistent estimators for ${\mu}^{(1)}={\mu}^{(2)}$, thus $\hat{V}^{C,(i)}_{r}\approx\hat{V}^{(i)}_{r}, i=1,2$, which indicates a small value for $T_n^C(r)$.  On the contrary, when $d({\mu}^{(1)},{\mu}^{(2)})>0$,   $\hat{V}^{C,(i)}_{r}$ could be  much larger than $\hat{V}^{(i)}_{r}$ as $\mathbb{E}d^2(Y_t^{(i)},\mu^{(3-i)})>\mathbb{E}d^2(Y_t^{(i)},\mu^{(i)})=\arg\min_{\omega\in\Omega}\mathbb{E}d^2(Y_t^{(i)},\omega)$, $i=1,2$, resulting in large value of $T_n^C(r)$.

The power-augmented test statistic is thus defined by     
\begin{equation}\label{D2}
	D_{n,2}=\frac{n\left\{\left[T_n(1)\right]^2+\left[T_n^C(1)\right]^2\right\}}{ \sum_{k=\lfloor n\eta\rfloor}^n  \left\{\left[T_n(\frac{k}{n})-\frac{k}{n}T_n(1)\right]^2+  \left[T_n^C(\frac{k}{n})-\frac{k}{n}T_n^C(1)\right]^2\right\}},
\end{equation}
where the additional term $\sum_{k=\lfloor n\eta\rfloor}^n  \left[T_n^C(\frac{k}{n})-\frac{k}{n}T_n^C(1)\right]^2$ that appears in the self-normalizer is used to stabilize finite sample performances.   

{
\begin{rem}
    Our proposed tests could be adapted to comparison of $N$-sample populations \citep{dubey2019frechet}, where $N\geq 2$. An natural way of extension would be   aggregating all the  pairwise differences in Fr\'echet variance and contaminated variance. Specifically, let the $N$ groups of random data objects be $\{Y_t^{(i)}\}_{t=1}^{n_i}$, $i=1,\cdots,N$.   The null hypothesis is given as  $$\mathbb{H}_0: P^{(1)}=\cdots=P^{(N)},$$ for some $N\geq 2$. 
    
    Let $\hat{\mu}^{(i)}_{r}$ and $\hat{V}^{(i)}_{r}$, $r\in[\eta,1]$ be the   Fr\'echet subsample mean and variance, respectively, for the $i$th group, $i=1,\cdots, N$.  
For $1\leq i\neq j\leq N$, define the pairwise contaminated Fr\'echet  subsample  variance as 
$$
\hat{V}^{C,(i,j)}_{r}=\frac{1}{\lfloor rn_i\rfloor }\sum_{t=1}^{\lfloor rn_i\rfloor}d^2(Y_t^{(i)},\hat{\mu}^{(j)}_{r}),~r\in [\eta,1],
$$
and define the recursive statistics  
    $$
    T_n^{i,j}(r)=r(\hat{V}^{(i)}_{r}-\hat{V}^{(j)}_{r}),\quad T_n^{C,i,j}(r)=r(\hat{V}^{C,(i,j)}_{r}+\hat{V}^{C,(j,i)}_{r}-\hat{V}^{(i)}_{r}-\hat{V}^{(j)}_{r}),~r\in [\eta,1].
    $$

In the same spirit of the test statistics $D_{n,1}$ and $D_{n,2}$, for $n=\sum_{i=1}^N n_i,$ we may construct their counterparts for the $N$-sample testing problem as 
$$
D^{(N)}_{n,1}=\frac{n\sum_{i<j}\left[T_n^{i,j}(1)\right]^2}{\sum_{k=\lfloor n\eta\rfloor}^n \sum_{i<j}\left[T_n^{i,j}(\frac{k}{n})-\frac{k}{n}T_n^{i,j}(1)\right]^2},
$$
and 
$$
D^{(N)}_{n,2}=\frac{n\sum_{i<j}\left\{\left[T_n^{i,j}(1)\right]^2+\left[T_n^{C,i,j}(1)\right]^2\right\}}{\sum_{k=\lfloor n\eta\rfloor}^n \sum_{i<j}\left\{\left[T_n^{i,j}(\frac{k}{n})-\frac{k}{n}T_n^{i,j}(1)\right]^2+\left[T_n^{C,i,j}(\frac{k}{n})-\frac{k}{n}T_n^{C,i,j}(1)\right]^2\right\}}.
$$
\end{rem}
Compared with classical $N$-sample testing problem in Euclidean spaces, e.g. analysis of variance (ANOVA), the above modification does not require Gaussianity, equal variance, or serial independence. Therefore, they could be  work for  broader classes of distributions.  We leave out the details for the sake of space.
}

\subsection{Asymptotic Theory}\label{sec:theory_two}
Before we present asymptotic results of the proposed tests, we need a slightly stronger assumption than Assumption \ref{ass_FCLTo} to regulate the joint behavior of partial sums for both samples. 
\begin{ass}\label{ass_FCLT}
	For some $\sigma_1>0$ and $\sigma_2>0$, we have 
	$$
	\frac{1}{\sqrt{n}}\sum_{t=1}^{\lfloor nr\rfloor} 
	\left(\begin{matrix}
		d^2(Y_t^{(1)},\mu^{(1)})-V^{(1)}\\
		d^2(Y_t^{(2)},\mu^{(2)})-V^{(2)}
	\end{matrix}\right)\Rightarrow  \left(\begin{matrix}
		\sigma_1B^{(1)}(r)\\
		\sigma_2B^{(2)}(r)
	\end{matrix}\right),
	$$
	where $B^{(1)}(\cdot)$ and $B^{(2)}(\cdot)$ are two  standard Brownian motions with unknown correlation parameter $\rho\in (-1,1)$, and $\sigma_1,\sigma_2\neq 0$ are unknown parameters characterizing the long-run variance.
\end{ass}

\begin{thm}\label{thmtwo_0}
	Suppose Assumptions \ref{ass_diff}-\ref{ass_expand} (with \ref{ass_FCLTo} replaced by  \ref{ass_FCLT}) hold for both $\{Y_t^{(1)}\}_{t=1}^{n_1}$ and $\{Y_t^{(2)}\}_{t=1}^{n_2}$.   Then as $n\to\infty$,  under $\mathbb{H}_0$, for $i=1,2$,
	$$
	D_{n,i}\to_d \frac{\xi^2_{\gamma_1,\gamma_2}(1;\sigma_1,\sigma_2)}{\int_{\eta}^{1}\left[\xi_{\gamma_1,\gamma_2}(r;\sigma_1,\sigma_2)-r\xi_{\gamma_1,\gamma_2}(1;\sigma_1,\sigma_2)\right]^2dr}:=\mathcal{D}_{\eta},\quad 
	$$
	where \begin{equation}\label{xi}
	    \xi_{\gamma_1,\gamma_2}(r;\sigma_1,\sigma_2)=\gamma_1^{-1}\sigma_1B^{(1)}(\gamma_1r)-\gamma_2^{-1}\sigma_2B^{(2)}(\gamma_2r).
	\end{equation}
\end{thm}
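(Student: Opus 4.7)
The plan is to derive a joint functional central limit theorem for $\{\sqrt{n}\,T_n(r):r\in[\eta,1]\}$ and $\{\sqrt{n}\,T_n^C(r):r\in[\eta,1]\}$ under $\mathbb{H}_0$, and then feed this into the continuous functional
\[
f(\xi)=\frac{\xi(1)^2}{\int_\eta^1\bigl(\xi(r)-r\xi(1)\bigr)^2\,dr}
\]
defining $D_{n,1}$, together with its two-component analog defining $D_{n,2}$, via the continuous mapping theorem.

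\textbf{Step 1 (uniform $\sqrt{n}$-rate for subsample means).} First I would show that $\sup_{r\in[\eta,1]} d(\hat{\mu}^{(i)}_r,\mu^{(i)})=O_p(n^{-1/2})$ for $i=1,2$. Consistency $\hat{\mu}^{(i)}_r\in B_\delta(\mu^{(i)})$ with probability tending to one follows from Assumptions \ref{ass_diff} and \ref{ass_LLN} by a standard argmin argument. Once inside $B_\delta(\mu^{(i)})$, plugging the expansion of Assumption \ref{ass_expand} into the defining inequality
\[
\sum_{t=1}^{\lfloor rn_i\rfloor}\bigl[d^2(Y_t^{(i)},\hat{\mu}^{(i)}_r)-d^2(Y_t^{(i)},\mu^{(i)})\bigr]\le 0
\]
and using the suprema bounds for $g$ and $R$ gives, with $\hat d_r:=d(\hat{\mu}^{(i)}_r,\mu^{(i)})$,
\[
K_d\lfloor rn_i\rfloor\hat d_r^{\,2}\le O_p(\sqrt{n})\,\hat d_r+o_p(\sqrt{n}\,\hat d_r+n\hat d_r^{\,2}),
\]
uniformly in $r\in[\eta,1]$; dividing by $n\hat d_r$ (on $\{\hat d_r>0\}$) yields the claimed rate.

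\textbf{Step 2 (linearization and weak convergence of $\sqrt{n}\,T_n$).} Inserting the expansion once more and using $\lfloor rn_i\rfloor/n_i\to r$ together with $n/n_i\to\gamma_i^{-1}$, the $K_d d^2$, $g$, and $R$ contributions are each $o_p(1)$ after multiplication by $\sqrt{n}/n_i$, giving uniformly in $r\in[\eta,1]$
\[
\sqrt{n}\,r\bigl(\hat V^{(i)}_r-V^{(i)}\bigr)=\frac{1}{\gamma_i}\cdot\frac{1}{\sqrt n}\sum_{t=1}^{\lfloor rn_i\rfloor}\bigl[d^2(Y_t^{(i)},\mu^{(i)})-V^{(i)}\bigr]+o_p(1).
\]
Under $\mathbb{H}_0$, $V^{(1)}=V^{(2)}$, so $\sqrt{n}\,T_n(r)$ is the difference of these two partial-sum processes plus $o_p(1)$. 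Assumption \ref{ass_FCLT} combined with the continuous time change by $\gamma_i$ on $D[0,1]$ then delivers
\[
\sqrt{n}\,T_n(\cdot)\Rightarrow \gamma_1^{-1}\sigma_1 B^{(1)}(\gamma_1\cdot)-\gamma_2^{-1}\sigma_2 B^{(2)}(\gamma_2\cdot)=\xi_{\gamma_1,\gamma_2}(\cdot;\sigma_1,\sigma_2)\quad\text{in }D[\eta,1].
\]

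\textbf{Step 3 (contaminated part and CMT).} For $D_{n,2}$ I would additionally show $\sqrt{n}\,T_n^C(r)=o_p(1)$ uniformly on $[\eta,1]$ under $\mathbb{H}_0$. Since $\mu^{(1)}=\mu^{(2)}$ is a common value and both $\hat{\mu}^{(i)}_r$ are $\sqrt{n}$-consistent for it, applying Assumption \ref{ass_expand} to the difference $d^2(Y_t^{(1)},\hat{\mu}^{(2)}_r)-d^2(Y_t^{(1)},\hat{\mu}^{(1)}_r)$ (and the mirror expression for sample 2) shows that the quadratic bias pieces are $O_p(n^{-1})$ and the $g$ pieces are $O_p(n^{-1})$ after averaging, so $\sqrt{n}(\hat V^{C,(i)}_r-\hat V^{(i)}_r)=o_p(1)$, hence $\sqrt{n}\,T_n^C(r)=o_p(1)$. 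Rewriting $\sum_{k=\lfloor n\eta\rfloor}^n[T_n(k/n)-(k/n)T_n(1)]^2$ as the Riemann sum $\int_\eta^1(\xi_n(r)-r\xi_n(1))^2\,dr$ for $\xi_n:=\sqrt{n}\,T_n$ (and similarly for the $T_n^C$ contribution, which vanishes in the limit), the continuous mapping theorem applied to $f$ on $D[\eta,1]$, whose limit has an a.s.\ strictly positive denominator by properties of Brownian motion, yields $D_{n,i}\to_d\mathcal{D}_\eta$ for $i=1,2$.

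The main obstacle is Step 1: classical M-estimation proofs yield the $\sqrt{n}$-rate at a single $r$, but we need it \emph{uniformly} in $r\in[\eta,1]$. This is exactly why Assumption \ref{ass_expand} is formulated with suprema over $(a,b)\in\mathcal{I}_\eta$; the trimming $\eta>0$ is also essential because it keeps $\lfloor rn_i\rfloor$ bounded away from zero, so that the quadratic bias term $K_d\lfloor rn_i\rfloor \hat d_r^2$ actually dominates the linear stochastic term.
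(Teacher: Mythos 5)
Your proposal is correct and follows essentially the same route as the paper: your Step 1 is the paper's Lemma \ref{lem_drate} (uniform $\sqrt{n}$-consistency of the recursive Fr\'echet means via Assumption \ref{ass_expand}), your Step 2 is the decomposition \eqref{T_decomp} combined with Lemma \ref{lem_Vhat} and the time-changed FCLT, and your Step 3 reproduces Lemma \ref{lem_VhatC} together with the continuous mapping theorem. The only differences are cosmetic (you work directly with the recursive windows $[0,r]$ rather than general $(a,b)\in\mathcal{I}_\eta$).
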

Theorem \ref{thmtwo_0} obtains the same limiting null distribution for  Fr\'echet variance based test $D_{n,1}$ and its power-augmented version $D_{n,2}$. Although $D_{n,2}$ contains contaminated variance $T_n^C(1)$, its contribution is asymptotically vanishing as $n\to\infty$. This is  an immediate consequence of the fact that  $$\sup_{r\in[\eta,1]}|\sqrt{n}T_n^C(r)|\to_p0,$$
see proof of Theorem \ref{thmtwo_0} in Appendix \ref{sec:proof}. Similar phenomenon has been documented in \cite{dubey2019frechet} under different assumptions.

We next consider the power behavior under the Pitman local alternative,
$$
\mathbb{H}_{an}:\quad V^{(1)}-V^{(2)}=n^{-\kappa_V}\Delta_V,\quad  \mbox{ and        }\quad d^2(\mu^{(1)},\mu^{(2)})=n^{-\kappa_M}\Delta_M,$$  with  $\Delta_V\in\mathbb{R}$, $\Delta_M\in(0,\infty)$, and $\kappa_V,\kappa_M\in (0,\infty).$ 
\begin{thm}\label{thmtwo_alter}
	Suppose Assumptions \ref{ass_diff}-\ref{ass_expand} (with \ref{ass_FCLTo} replaced by  \ref{ass_FCLT}) hold for both $\{Y_t^{(1)}\}_{t=1}^{n_1}$ and $\{Y_t^{(2)}\}_{t=1}^{n_2}$.  As $n\to\infty$,  under $\mathbb{H}_{an}$, 
	\begin{itemize}
		\item if $\max\{\kappa_V,\kappa_M\}\in(0,1/2)$, then for $i=1,2$,
		$D_{n,i}\to_p\infty$;  
		\item if $\min\{\kappa_V,\kappa_M\}\in(1/2,\infty)$, then for $i=1,2$,
		$D_{n,i}\to_d\mathcal{D}_{\eta}$; 
		\item if $\kappa_V=1/2$ and $\kappa_M\in(1/2,\infty)$, then for $i=1,2$,
		$$D_{n,i}\to_d \frac{\left(\xi_{\gamma_1,\gamma_2}(1;\sigma_1,\sigma_2)+\Delta_V\right)^2}{\int_{\eta}^{1}\left(\xi_{\gamma_1,\gamma_2}(r;\sigma_1,\sigma_2)-r\xi_{\gamma_1,\gamma_2}(1;\sigma_1,\sigma_2)\right)^2dr};$$
		\item if  $\kappa_V\in (1/2,\infty)$ and $\kappa_M=1/2$, then $D_{n,1}\to_d\mathcal{D}_{\eta}$, and  $$ D_{n,2}\to_d \frac{\left(\xi_{\gamma_1,\gamma_2}(1;\sigma_1,\sigma_2)\right)^2+4K_d^2\Delta_M^2}{\int_{\eta}^{1}\left(\xi_{\gamma_1,\gamma_2}(r;\sigma_1,\sigma_2)-r\xi_{\gamma_1,\gamma_2}(1;\sigma_1,\sigma_2)\right)^2dr};$$
		\item if $\kappa_V=\kappa_M=1/2$, then  
		\begin{flalign*}
			&D_{n,1}\to_d \frac{\left(\xi_{\gamma_1,\gamma_2}(1;\sigma_1,\sigma_2)+\Delta_V\right)^2}{\int_{\eta}^{1}\left(\xi_{\gamma_1,\gamma_2}(r;\sigma_1,\sigma_2)-r\xi_{\gamma_1,\gamma_2}(1;\sigma_1,\sigma_2)\right)^2dr},\\ &D_{n,2}\to_d \frac{\left(\xi_{\gamma_1,\gamma_2}(1;\sigma_1,\sigma_2)+\Delta_V\right)^2+4K_d^2\Delta_M^2}{\int_{\eta}^{1}\left(\xi_{\gamma_1,\gamma_2}(r;\sigma_1,\sigma_2)-r\xi_{\gamma_1,\gamma_2}(1;\sigma_1,\sigma_2)\right)^2dr};
		\end{flalign*}
		
	\end{itemize}
	where $K_d$ is defined in Assumption \ref{ass_expand}.
\end{thm}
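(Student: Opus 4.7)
The plan is to derive uniform-in-$r\in[\eta,1]$ expansions for $\sqrt{n}\,T_n(r)$ and $\sqrt{n}\,T_n^C(r)$ under $\mathbb{H}_{an}$ and then apply the continuous mapping theorem. First I would argue that the uniform M-estimator rate $\sup_{r\in[\eta,1]} d(\hat{\mu}_r^{(i)},\mu^{(i)})=O_p(n^{-1/2})$ continues to hold under $\mathbb{H}_{an}$ for $i=1,2$. This reuses the standard Pollard--Hjort basic-inequality argument from the null analysis: Assumption \ref{ass_diff} supplies the quadratic separation, Assumption \ref{ass_expand} controls the stochastic and remainder pieces after inserting $\omega=\hat{\mu}_r^{(i)}$, and the local alternative does not interfere because the argument is carried out within each group separately.

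Second, I would expand $T_n(r)$. Applying Assumption \ref{ass_expand} with $\omega=\hat{\mu}_r^{(i)}$, $\mu=\mu^{(i)}$, averaging, and absorbing $K_d d^2(\hat{\mu}_r^{(i)},\mu^{(i)})=O_p(n^{-1})$ together with the controlled $g$ and $R$ terms, yields
$$
\sqrt{n}\bigl(\hat V_r^{(i)}-V^{(i)}\bigr)=\frac{n}{\lfloor rn_i\rfloor}\cdot\frac{1}{\sqrt n}\sum_{t=1}^{\lfloor rn_i\rfloor}\bigl(d^2(Y_t^{(i)},\mu^{(i)})-V^{(i)}\bigr)+o_p(1)
$$
uniformly on $[\eta,1]$. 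Invoking Assumption \ref{ass_FCLT} and $n_i/n\to\gamma_i$ gives $\sqrt n\,T_n(r)-r\sqrt n(V^{(1)}-V^{(2)})\Rightarrow \xi_{\gamma_1,\gamma_2}(r;\sigma_1,\sigma_2)$. Under $\mathbb{H}_{an}$ the bias is $r\,n^{1/2-\kappa_V}\Delta_V$, which vanishes, equals $r\Delta_V$, or diverges at rate $n^{1/2-\kappa_V}$ according to whether $\kappa_V>1/2$, $=1/2$, or $<1/2$.

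Third, for the contaminated piece, Assumption \ref{ass_expand} with $\omega=\hat{\mu}_r^{(3-i)}$, $\mu=\mu^{(i)}$ (legitimate for large $n$ since $d(\hat{\mu}_r^{(3-i)},\mu^{(i)})\to 0$) gives
$$
\hat V_r^{C,(i)}-\hat V_r^{(i)}=K_d\bigl[d^2(\hat{\mu}_r^{(3-i)},\mu^{(i)})-d^2(\hat{\mu}_r^{(i)},\mu^{(i)})\bigr]+o_p(n^{-1/2}).
$$
The triangle inequality $\lvert d(\hat{\mu}_r^{(3-i)},\mu^{(i)})-d(\mu^{(3-i)},\mu^{(i)})\rvert\le d(\hat{\mu}_r^{(3-i)},\mu^{(3-i)})=O_p(n^{-1/2})$, together with $d^2(\mu^{(1)},\mu^{(2)})=n^{-\kappa_M}\Delta_M$ and Step 1, yields $d^2(\hat{\mu}_r^{(3-i)},\mu^{(i)})=n^{-\kappa_M}\Delta_M+O_p(n^{-(1+\kappa_M)/2})$ and $d^2(\hat{\mu}_r^{(i)},\mu^{(i)})=O_p(n^{-1})$. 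Summing over $i$ gives $\sqrt n\,T_n^C(r)\to_p 2rK_d\Delta_M$ uniformly on $[\eta,1]$ when $\kappa_M=1/2$; the limit is $0$ when $\kappa_M>1/2$ and diverges at rate $n^{1/2-\kappa_M}$ when $\kappa_M<1/2$.

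Finally, the continuous mapping theorem is applied to the numerators $n[T_n(1)]^2$ and $n\{[T_n(1)]^2+[T_n^C(1)]^2\}$ and to the Riemann-sum denominators, using joint weak convergence of $\sqrt n\,T_n(\cdot)$ and the degenerate convergence of $\sqrt n\,T_n^C(\cdot)$. The key observation, in every case, is that the additive drifts $r\Delta_V$ and $2rK_d\Delta_M$ lie in the kernel of the centering $T_n(k/n)-(k/n)T_n(1)$, so they cancel inside the self-normalizer but survive in the numerator; combined with the divergence rates computed above, this reproduces the five cases. The main obstacle is the third step: without algebraic structure on $\Omega$ one must squeeze $d^2(\hat\mu_r^{(3-i)},\mu^{(i)})$ between the M-estimator rate $n^{-1/2}$ and the Pitman rate $n^{-\kappa_M/2}$ using only the triangle inequality and Assumption \ref{ass_expand}, and verify the resulting uniform-in-$r$ deterministic limit cleanly enough that joint weak convergence with $\sqrt n\,T_n(\cdot)$ follows from Slutsky.
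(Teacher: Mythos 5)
Your proposal is correct and follows essentially the same route as the paper: the per-group uniform $O_p(n^{-1/2})$ rate for the recursive Fr\'echet means, replacement of $\hat V_r^{(i)}$ by the oracle partial sum via Assumption \ref{ass_expand}, the triangle-inequality squeeze of $d^2(\hat\mu_r^{(3-i)},\mu^{(i)})$ between the M-estimation rate and the Pitman rate to extract the $2rK_d\Delta_M$ drift, and cancellation of the linear-in-$r$ drifts inside the self-normalizer followed by the continuous mapping theorem. The only place the paper is more explicit is the divergent case for $D_{n,2}$ when $\kappa_M<1/2$, where it multiplies numerator and denominator by $n^{2\kappa_M-1}$ and checks that the rescaled self-normalizer vanishes while the rescaled numerator stays bounded away from zero — a detail your sketch implies but does not spell out.
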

Theorem \ref{thmtwo_alter} presents the asymptotic  behaviors for both test statistics under local alternatives in various regimes. In particular,  $D_{n,1}$ can detect differences in Fr\'echet variance at local rate $n^{-1/2}$, but possesses trivial power against   Fr\'echet mean difference regardless of the regime of $\kappa_M$. In comparison,  $D_{n,2}$ is powerful for differences in both Fr\'echet variance  and Fr\'echet mean at local rate $n^{-1/2}$, which validates our claim  that $D_{n,2}$ indeed augments power.

Our results merit additional remarks when compared with \cite{dubey2019frechet}. In \cite{dubey2019frechet}, they only  obtain
the consistency of their test under either  $n^{1/2}|V^{(1)}-V^{(2)}|\to\infty$  or $n^{1/2}d^2(\mu^{(1)},\mu^{(2)})\to\infty$, while Theorem \ref{thmtwo_alter} explicitly characterizes the asymptotic distributions of our test statistics under local alternatives of order $O(n^{-1/2})$, which depend on $\kappa_V$ and $\kappa_M$. Such theoretical improvement relies crucially on our newly developed proof techniques based on Assumption  \ref{ass_expand}, and it seems difficult to derive such limiting distributions under empirical-process-based assumptions in \cite{dubey2019frechet}.  { However, we do admit that self-normalization could result in moderate power loss compared with $t$-type test statistics, see \citep{shaozhang2010} for evidence in Euclidean space.  }

{
Note that the limiting distributions derived in \Cref{thmtwo_0} and \Cref{thmtwo_alter} contain a key quantity $\xi_{\gamma_1,\gamma_2}(r;\sigma_1,\sigma_2)$ defined in \eqref{xi}, which depends on nuisance parameters $\sigma_1,\sigma_2$ and $\rho$. This may hinder the practical use of the tests.  The following corollary, however, justifies the wide applicability  of our tests. 
\begin{cor}\label{cor_xi}
     Under Assumption \ref{ass_FCLT}, if either $\gamma_1=\gamma_2=1/2$ or $\rho=0$, then for any constants $C_a,C_b\in\mathbb{R}$, 
     $$
     \frac{\left(\xi_{\gamma_1,\gamma_2}(1;\sigma_1,\sigma_2)+C_a\right)^2+C_b^2}{\int_{\eta}^{1}\left(\xi_{\gamma_1,\gamma_2}(r;\sigma_1,\sigma_2)-r\xi_{\gamma_1,\gamma_2}(1;\sigma_1,\sigma_2)\right)^2dr}=_d \frac{\left(B(1)+C_a/C_{\xi}\right)^2+ (C_b/C_{\xi})^2}{\int_{\eta}^{1}\left(B(r)-rB(1)\right)^2dr},
     $$
     where 
$$
C_{\xi}=\begin{cases}
 \sqrt{2\sigma_1^2+2\sigma_2^2-4\rho\sigma_1\sigma_2}, \quad & \text{if } \gamma_1=\gamma_2, \\
\sqrt{{\sigma_1^2/\gamma_1}+{\sigma_2^2/\gamma_2}}, \quad & \text{if }\rho=0.
\end{cases} 
$$
\end{cor}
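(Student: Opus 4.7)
The plan is to reduce the corollary to a single fact about the auxiliary process $\xi_{\gamma_1,\gamma_2}(\cdot;\sigma_1,\sigma_2)$: under either of the two stated conditions, it is equal in distribution, as a process on $[0,1]$, to $C_\xi B(\cdot)$ for a standard Brownian motion $B$. Once that is established, the numerator and denominator on the left-hand side are both continuous functionals of $\xi$, and substituting $C_\xi B$ for $\xi$ followed by pulling out $C_\xi^2$ from both the numerator $\bigl(C_\xi B(1)+C_a\bigr)^2+C_b^2=C_\xi^2\bigl[(B(1)+C_a/C_\xi)^2+(C_b/C_\xi)^2\bigr]$ and the denominator $\int_\eta^1\bigl(C_\xi B(r)-rC_\xi B(1)\bigr)^2dr=C_\xi^2\int_\eta^1(B(r)-rB(1))^2dr$ yields exactly the right-hand side. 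So the only real content is the process-level distributional identification.

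For that identification I would observe that $\xi_{\gamma_1,\gamma_2}(\cdot;\sigma_1,\sigma_2)$ is a mean-zero continuous Gaussian process (being a linear combination of two correlated Brownian motions under Assumption \ref{ass_FCLT}), so its law is determined by its covariance function. Using $\mathrm{Cov}(B^{(1)}(s),B^{(2)}(t))=\rho\min(s,t)$ from Assumption \ref{ass_FCLT}, a direct computation gives
\begin{equation*}
\mathrm{Cov}\bigl(\xi_{\gamma_1,\gamma_2}(s),\xi_{\gamma_1,\gamma_2}(t)\bigr)=\Bigl(\tfrac{\sigma_1^2}{\gamma_1}+\tfrac{\sigma_2^2}{\gamma_2}-2\rho\tfrac{\sigma_1\sigma_2}{\max(\gamma_1,\gamma_2)}\cdot\mathbf{1}\{\gamma_1=\gamma_2\}\Bigr)\min(s,t),
\end{equation*}
more precisely one gets $\tfrac{\sigma_1^2}{\gamma_1^2}\min(\gamma_1 s,\gamma_1 t)+\tfrac{\sigma_2^2}{\gamma_2^2}\min(\gamma_2 s,\gamma_2 t)-2\rho\tfrac{\sigma_1\sigma_2}{\gamma_1\gamma_2}\min(\gamma_1 s,\gamma_2 t)$, and the cross term simplifies only when the two time scalings match (i.e.\ $\gamma_1=\gamma_2$) or when the cross-covariance vanishes (i.e.\ $\rho=0$). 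In the case $\gamma_1=\gamma_2=1/2$ this collapses to $(2\sigma_1^2+2\sigma_2^2-4\rho\sigma_1\sigma_2)\min(s,t)$, and in the case $\rho=0$ it collapses to $(\sigma_1^2/\gamma_1+\sigma_2^2/\gamma_2)\min(s,t)$. Both give $C_\xi^2\min(s,t)$, which is the covariance of $C_\xi B(\cdot)$, so the two Gaussian processes agree in law on $[0,1]$.

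With $\xi_{\gamma_1,\gamma_2}(\cdot;\sigma_1,\sigma_2)\stackrel{d}{=}C_\xi B(\cdot)$ as processes in $C[0,1]$, the joint continuous mapping theorem applied to the map $x(\cdot)\mapsto \bigl((x(1)+C_a)^2+C_b^2,\int_\eta^1(x(r)-rx(1))^2dr\bigr)$ transports this equality into an equality in distribution of the (numerator, denominator) pair, and the ratio is then handled by one more continuous map on the subset where the denominator is positive (which holds almost surely for Brownian motion since the integrand is a nondegenerate Brownian bridge). The outcome is precisely the right-hand side of the displayed equation. The only point that needs care is keeping the cross-term $2\rho\sigma_1\sigma_2\,\mathrm{Cov}(B^{(1)}(\gamma_1 s),B^{(2)}(\gamma_2 t))$ correct when the two time scalings differ, which is why the corollary restricts to the two clean cases; I do not expect any deeper obstacle.
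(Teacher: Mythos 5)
Your proposal is correct and follows essentially the same route as the paper: identify $\xi_{\gamma_1,\gamma_2}(\cdot;\sigma_1,\sigma_2)$ as equal in distribution to $C_\xi B(\cdot)$ (the paper asserts this directly, you justify it via the Gaussian covariance computation) and then conclude by the continuous mapping theorem after factoring $C_\xi^2$ out of numerator and denominator. The only cosmetic slip is that your cross-covariance term should be the symmetrized $-\rho\sigma_1\sigma_2(\gamma_1\gamma_2)^{-1}[\min(\gamma_1 s,\gamma_2 t)+\min(\gamma_1 t,\gamma_2 s)]$, which does not affect either of the two cases treated.
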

Therefore, by choosing $C_a=C_b=0$ in \Cref{cor_xi}, we obtain the pivotal limiting distribution 
$$
\mathcal{D}_{\eta}=_d\frac{B^2(1)}{\int_{\eta}^{1}\left(B(r)-rB(1)\right)^2dr}.
$$
The asymptotic distributions in \Cref{thmtwo_alter} can be similarly derived by letting either $C_a=\Delta_V$ or $C_b=2K_d\Delta_M$.
}

Therefore, when  either two samples are of the same length $(\gamma_1=\gamma_2)$  or  two samples are asymptotically independent $(\rho=0)$, the limiting distribution $\mathcal{D}_{\eta}$ is pivotal.   In practice, we reject $\mathbb{H}_0$ if $D_{n,i}>Q_{\mathcal{D}_{\eta}}(1-\alpha)$ where $Q_{\mathcal{D}_{\eta}}(1-\alpha)$ denotes the $1-\alpha$ quantile of (the pivotal) 
${D}_{\eta}$.  

In Table \ref{tab_two_critic}, we tabulate commonly used critical values under various choices of $\eta$ by simulating 50,000 i.i.d. $\mathcal{N}$(0,1) random variables 10,000 times and approximating a standard Brownian motion by standardized partial sum of i.i.d. $\mathcal{N}$(0,1) random variables.

\begin{table}[H]
	\centering
	\caption{Simulated critical values $Q_{\mathcal{D}_{\eta}}(1-\alpha)$}
	\label{tab_two_critic}
	\begin{tabular}{ccccc}
		\hline
		\diagbox{$\alpha$}{$\eta$} & 0.02   & 0.05   & 0.1    & 0.15   \\ \hline
		10\%                                         & 28.51  & 28.88  & 30.02  & 31.87   \\
		5\%                                          & 46.10  & 46.72  & 48.80  & 51.87   \\
		1\%                                          & 101.58 & 103.70 & 108.93 & 116.72  \\
		0.5\%                                        & 131.55 & 134.00 & 142.34 & 151.93\\
		\hline
	\end{tabular}
\end{table}

\section{Change-Point Test}\label{sec:cpt}
Inspired by the two-sample tests developed in Section \ref{sec:two}, this section considers the change-point detection problem for a sequence of random objects $\{Y_t\}_{t=1}^n$, i.e.
$$
\mathbb{H}_{0}: Y_{1}, Y_{2}, \ldots, Y_{n} \sim P^{(1)}
$$
against the single change-point alternative,
$$
\mathbb{H}_{a}: \text { there exists } 0<\tau<1 \text { such that }Y_t=\left\{\begin{array}{l}
	Y_t^{(1)}\sim P^{(1)},   1\leq t\leq \lfloor n\tau\rfloor \\
	Y_t^{(2)}\sim P^{(2)},   \lfloor n\tau\rfloor +1\leq t \leq n.
\end{array}\right.
$$
The single change-point testing problem can be roughly viewed as  two-sample testing without knowing where 
the two samples split, and they share certain similarities in terms of statistical methods and theory. 

Recall the  Fr\'echet subsample mean $\hat{\mu}_{[a,b]}$  and variance  $\hat{V}_{[a, b]}$ in \eqref{substat}, we further define the pooled contaminated variance separated by $r\in(a,b)$ as
$$
\hat{V}_{[r ; a, b]}^{C}=\frac{1}{\lfloor n r\rfloor-\lfloor n a\rfloor} \sum_{i=\lfloor n a\rfloor+1}^{\lfloor n r\rfloor} d^{2}\left(Y_{i}, \hat{\mu}_{[r, b]}\right)+\frac{1}{\lfloor n b\rfloor-\lfloor n r\rfloor} \sum_{i=\lfloor n r\rfloor+1}^{\lfloor n b\rfloor} d^{2}\left(Y_{i}, \hat{\mu}_{[a, r]}\right).
$$
Define the subsample test  statistics
$$
T_{n}(r ; a, b)=\frac{(r-a)(b-r)}{b-a}\left(\hat{V}_{[a, r]}-\hat{V}_{[r, b]}\right),
$$
and 
$$
T_{n}^C(r ; a, b)=\frac{(r-a)(b-r)}{b-a}\left(\hat{V}_{[r ; a, b]}^{C}-\hat{V}_{[a, r]}-\hat{V}_{[r, b]}\right).
$$
Note  that $T_{n}(r ; a, b)$ and $T_{n}^C(r ; a, b)$ are natural extensions of $T_n(r)$ and $T_n^C(r)$ from two-sample testing problem to change-point detection problem by viewing  $\{Y_t\}_{t=\lfloor na\rfloor+1}^{\lfloor nr\rfloor}$ and $\{Y_t\}_{t=\lfloor nr\rfloor+1}^{\lfloor nb\rfloor}$ as two separated samples.  
Intuitively, the contrast statistics $T_{n}(r ; a, b)$ and $T_{n}^C(r ; a, b)$ are expected to attain their maxima (in absolute value) when $r$ is set at or close to the true change-point location $\tau$. 

\subsection{Test Statistics}
For some trimming parameters $\eta_1$ and $\eta_2$ such that $\eta_1>2\eta_2$, and $\eta_1\in(0,1/2)$, in the same spirit of $D_{n,1}$ and  $D_{n,2}$, and with a bit abuse of notation, we define the testing statistics $$
SN_i= \max _{\lfloor n\eta_1\rfloor\leq k\leq n-\lfloor n\eta_1\rfloor}D_{n,i}(k),\quad i=1,2,
$$
where \begin{flalign*}
	D_{n,1}(k)=& \frac{n\left[T_{n}\left(\frac{k}{n} ; 0,1\right)\right]^2}{ \sum_{l=\lfloor n\eta_2\rfloor}^{k-\lfloor n\eta_2\rfloor} \left[T_{n}\left(\frac{l}{n} ; 0, \frac{k}{n}\right)\right]^2+ \sum_{l=k+\lfloor n\eta_2\rfloor}^{n-\lfloor n\eta_2\rfloor} \left[T_{n}\left(\frac{l}{n} ; \frac{k}{n}, 1\right)\right]^2},\\
	D_{n,2}(k)=& \frac{n\left\{\left[T_{n}\left(\frac{k}{n} ; 0,1\right)\right]^2+\left[T_{n}^C\left(\frac{k}{n} ; 0,1\right)\right]^2\right\}}{ L_n(k)+R_n(k)},
\end{flalign*}
with 
\begin{flalign*}
	L_n(k)=&\sum_{l=\lfloor n\eta_2\rfloor}^{k-\lfloor n\eta_2\rfloor} \left\{\left[T_{n}\left(\frac{l}{n} ; 0, \frac{k}{n}\right)\right]^2+\left[T^C_{n}\left(\frac{l}{n} ; 0, \frac{k}{n}\right)\right]^2 \right\},\\
	R_n(k)=&\sum_{l=k+\lfloor n\eta_2\rfloor}^{n-\lfloor n\eta_2\rfloor} \left\{\left[T_{n}\left(\frac{l}{n} ; \frac{k}{n}, 1\right)\right]^2+ \left[T^C_{n}\left(\frac{l}{n} ; \frac{k}{n}, 1\right)\right]^2\right\}.
\end{flalign*}
The trimming parameter $\eta_1$ plays a similar role as $\eta$ in two-sample testing problem for stabilizing the estimation effect for relatively small sample sizes, while the additional trimming $\eta_2$ is introduced to ensure that  the subsample estimates in the self-normalizers are constructed with the subsample size proportional to  $n$. Furthermore, we note that  the self-normalizers here are modified to accommodate for the unknown change-point location, see \cite{shaozhang2010}, \cite{Zhang2018} for more discussion.



\subsection{Asymptotic Theory}\label{sec:theory_cpt}
\begin{thm}\label{thm_cpt_H0}
	Suppose Assumptions \ref{ass_diff}-\ref{ass_expand} hold. Then, under $\mathbb{H}_0$,  we have for $i=1,2$,
	$$
	SN_i=\max _{\lfloor n\eta_1\rfloor\leq k\leq n-\lfloor n\eta_1\rfloor}D_{n,i}(k) \Rightarrow \sup _{r\in[\eta_1,1-\eta_1]} \frac{[B(r)-rB(1)]^2}{V(r,\eta)}:=\mathcal{S}_\eta,
	$$
	where $V(r,\eta)=\int_{\eta_2}^{r-\eta_2} [B(u)-u/rB(r)]^2du+\int_{r+\eta_2}^{1-\eta_2} [B(1)-B(u)-(1-u)/(1-r)\{B(1)-B(r)\}]^2du$.
\end{thm}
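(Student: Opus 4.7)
The plan is to reduce each $D_{n,i}(k)$ to a continuous functional of a partial-sum process and invoke the functional CLT in Assumption~\ref{ass_FCLTo} via the continuous mapping theorem. Under $\mathbb{H}_0$ all $Y_t$ share the common Fr\'echet mean $\mu$ and variance $V$, so the left and right subsample Fr\'echet means both converge to $\mu$. The first step is to establish the uniform rate
$$\sup_{(a,b)\in\mathcal{I}_{\eta_2}} d(\hat{\mu}_{[a,b]}, \mu) = O_p(n^{-1/2}).$$
The argument uses the minimization property $\sum_t d^2(Y_t,\hat{\mu}_{[a,b]}) \leq \sum_t d^2(Y_t,\mu)$ combined with Assumption~\ref{ass_expand} evaluated at $\omega = \hat{\mu}_{[a,b]}$: the nonnegative quadratic bias $(\lfloor nb\rfloor - \lfloor na\rfloor)K_d\,d^2(\hat{\mu}_{[a,b]},\mu)$ must be dominated by the linear-in-$d$ stochastic term $O_p(\sqrt{n}\,d(\hat{\mu}_{[a,b]},\mu))$ plus the higher-order remainder, and solving the resulting quadratic inequality yields the rate uniformly on $\mathcal{I}_{\eta_2}$. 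Uniform consistency (without a rate) is obtained beforehand in standard fashion from Assumptions~\ref{ass_diff}--\ref{ass_LLN}. Plugging this rate back into the expansion produces the uniform Bahadur-type representation
$$\sqrt{n}\!\left[\tfrac{\lfloor nb\rfloor - \lfloor na\rfloor}{n}\hat{V}_{[a,b]} - (b-a)V\right] = S_n(b) - S_n(a) + o_p(1), \qquad (a,b)\in\mathcal{I}_{\eta_2},$$
where $S_n(r) = n^{-1/2}\sum_{t=1}^{\lfloor nr\rfloor}[d^2(Y_t,\mu) - V]$.

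This representation immediately gives the linear form
$$\sqrt{n}\,T_n(r;a,b) = S_n(r) - \tfrac{b-r}{b-a}S_n(a) - \tfrac{r-a}{b-a}S_n(b) + o_p(1)$$
uniformly over the trimmed region $\{r-a, b-r \geq \eta_2\}$. The same expansion applied to the contaminated variance $\hat{V}^C_{[r;a,b]}$, whose off-sample centering $\hat{\mu}_{[r,b]}$ (resp.\ $\hat{\mu}_{[a,r]}$) still lies within $O_p(n^{-1/2})$ of $\mu$ under $\mathbb{H}_0$, shows that its leading stochastic piece coincides exactly with that of $\hat{V}_{[a,r]} + \hat{V}_{[r,b]}$, so $\sqrt{n}\,T_n^C(r;a,b) = o_p(1)$ uniformly. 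Consequently $D_{n,1}(k)$ and $D_{n,2}(k)$ are asymptotically equivalent, and both reduce to the same functional of $\{S_n(r)\}_{r\in[0,1]}$.

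Finally, Assumption~\ref{ass_FCLTo} gives $S_n \Rightarrow \sigma B$ in $D[0,1]$. Writing $k = \lfloor nr\rfloor$ with $r\in[\eta_1,1-\eta_1]$, the numerator $n[T_n(k/n;0,1)]^2 \Rightarrow \sigma^2[B(r) - rB(1)]^2$, while the two sums in the self-normalizer are Riemann sums on a grid of width $1/n$ that converge jointly, as processes indexed by $r$, to
$$\sigma^2\!\int_{\eta_2}^{r-\eta_2}\!\!\bigl[B(u) - \tfrac{u}{r}B(r)\bigr]^2 du \quad\text{and}\quad \sigma^2\!\int_{r+\eta_2}^{1-\eta_2}\!\!\bigl[B(u) - \tfrac{1-u}{1-r}B(r) - \tfrac{u-r}{1-r}B(1)\bigr]^2 du,$$
the second integrand being the negative of $B(1) - B(u) - \tfrac{1-u}{1-r}(B(1) - B(r))$ and hence agreeing with the second term of $V(r,\eta)$ after squaring. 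The factor $\sigma^2$ cancels between numerator and denominator, and $\max_{k}$ passes to $\sup_{r\in[\eta_1,1-\eta_1]}$ via the continuous mapping theorem, yielding the pivotal limit $\mathcal{S}_\eta$. The main technical obstacle is the \emph{uniform} Bahadur expansion: although Assumption~\ref{ass_expand} supplies sup-type bounds on each of its three pieces separately, these must be combined carefully so that the $o_p(1)$ remainder in the linear representation of $\sqrt{n}\,T_n(r;a,b)$ holds as a single random quantity uniformly over all triples $(a,r,b)$ entering the numerator and the self-normalizer simultaneously; only then does the joint Riemann-sum convergence and the final continuous-mapping step go through.
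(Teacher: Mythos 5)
Your proposal is correct and takes essentially the same route as the paper: the uniform $O_p(n^{-1/2})$ rate for subsample Fr\'echet means obtained from the minimization property and Assumption~\ref{ass_expand} (the paper's Lemma~\ref{lem_drate}), the uniform $o_p(n^{-1/2})$ replacement of $\hat V_{[a,b]}$ by the oracle average $\tilde V_{[a,b]}$ (Lemma~\ref{lem_Vhat}), the uniform negligibility of $\sqrt{n}\,T_n^C(r;a,b)$ (Lemma~\ref{lem_VhatC}), and finally the invariance principle of Assumption~\ref{ass_FCLTo} combined with the continuous mapping theorem. Your linear representation $\sqrt{n}\,T_n(r;a,b)=S_n(r)-\tfrac{b-r}{b-a}S_n(a)-\tfrac{r-a}{b-a}S_n(b)+o_p(1)$ and the sign check on the second integrand of $V(r,\eta)$ both match the paper's computation.
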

Similar to Theorem \ref{thmtwo_0}, Theorem \ref{thm_cpt_H0} states that both change-point test statistics  have the same pivotal limiting null distribution $\mathcal{S}_{\eta}$.  
The test is thus rejected when   $SN_{i}>Q_{\mathcal{S}_{\eta}}(1-\alpha)$,  $i=1,2$, where $Q_{\mathcal{S}_{\eta}}(1-\alpha)$ denotes the $1-\alpha$ quantile of $\mathcal{S}_{\eta}$.
In Table \ref{tab_cpt_critic}, we tabulate commonly used critical values under various choices of $(\eta_1,\eta_2)$ by simulations.
\begin{table}[H]
	\centering
	\caption{Simulated critical values $Q_{\mathcal{S}_{\eta}}(1-\alpha)$}
	\label{tab_cpt_critic}
	\begin{tabular}{cccc}
		\hline
		\diagbox{$\alpha$}{$(\eta_1,\eta_2)$} & (0.02,0.05)   & (0.04,0.1)   & (0.05,0.15)    \\ \hline
		10\%                                         & 30.29  & 32.09  & 33.36   \\
		5\%                                          & 41.31  & 44.36  & 46.50   \\
		1\%                 &                         72.66 & 79.24 & 82.13  \\
		0.5\%                  &                       91.31 & 96.90 & 101.48\\
		\hline
	\end{tabular}
\end{table}

Recall in Theorem \ref{thmtwo_alter}, we have obtained the local power of two-sample tests $D_{n,1}$ and $D_{n,2}$ at rate $n^{-1/2}$. To this end, consider the local alternative 
$$\mathbb{H}_{an}: V^{(1)}-V^{(2)}=n^{-1/2}\Delta_V,\quad \mbox{and}\quad d^2(\mu^{(1)},\mu^{(2)})=n^{-1/2}\Delta_M,$$ where  $\Delta_V\in\mathbb{R}$ and $\Delta_M\in(0,\infty)$. The following theorem states the asymptotic power behaviors of $SN_1$ and $SN_2$.
\begin{thm}\label{thm_cpt_Ha}
	Suppose Assumptions \ref{ass_diff}-\ref{ass_expand} (with \ref{ass_FCLTo} replaced by \ref{ass_FCLT}) hold. If $\Delta_V\neq 0$ and $\Delta_M\neq 0$ are fixed, then under $\mathbb{H}_{an}$, if $\tau\in(\eta_1,1-\eta_1)$, then as $n\to\infty$, we have 
	\begin{align*}
		&\lim_{|\Delta_V|\to\infty}\lim_{n\to\infty} \left\{\max _{\lfloor n\eta_1\rfloor\leq k\leq n-\lfloor n\eta_1\rfloor}D_{n,1}(k)\right\}\to_p\infty,\\ &\lim_{\max\{|\Delta_V|,\Delta_M\}\to\infty}\lim_{n\to\infty} \left\{\max _{\lfloor n\eta_1\rfloor\leq k\leq n-\lfloor n\eta_1\rfloor}D_{n,2}(k)\right\}\to_p\infty.
	\end{align*}
\end{thm}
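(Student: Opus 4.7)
}
The plan is to lower bound $SN_i$ by its value at the true change-point index $k^* = \lfloor n\tau\rfloor$, and to show that at $k^*$ the numerator of $D_{n,i}(k^*)$ carries a deterministic drift proportional to $\Delta_V$ (and, for $i=2$, also $\Delta_M$) while the self-normalizer denominator remains $O_p(1)$. Since $SN_i=\max_{k}D_{n,i}(k)\geq D_{n,i}(k^*)$, it suffices to analyze this single slice. Throughout, we will first let $n\to\infty$ (to obtain a non-degenerate distributional limit of $D_{n,i}(k^*)$ that depends continuously on $\Delta_V$ and $\Delta_M$), and then send the parameters to infinity, exactly matching the iterated limit in the statement.

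For the numerator, I would expand $T_n(\tau;0,1)=\tau(1-\tau)(\hat{V}_{[0,\tau]}-\hat{V}_{[\tau,1]})$ using Assumptions \ref{ass_diff}--\ref{ass_expand} applied separately on $[0,\tau]$ (to sample 1) and $[\tau,1]$ (to sample 2). This gives $\hat{V}_{[0,\tau]}=V^{(1)}+O_p(n^{-1/2})$ and $\hat{V}_{[\tau,1]}=V^{(2)}+O_p(n^{-1/2})$, whence the joint FCLT in Assumption \ref{ass_FCLT} yields $n\bigl[T_n(\tau;0,1)\bigr]^2\Rightarrow[\tau(1-\tau)]^2(\Delta_V+Z_1)^2$ for a Gaussian $Z_1$. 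For $T_n^C(\tau;0,1)$, the key step is to apply Assumption \ref{ass_expand} with $\omega=\hat{\mu}_{[\tau,1]}$ (on sample 1 observations) and $\omega=\hat{\mu}_{[0,\tau]}$ (on sample 2 observations); combined with the $n^{-1/2}$-consistency of each subsample Fr\'echet mean and the triangle-inequality bound $d^2(\hat{\mu}_{[\tau,1]},\mu^{(1)})=n^{-1/2}\Delta_M(1+o_p(1))$, this produces
\[
n\bigl[T_n^C(\tau;0,1)\bigr]^2\Rightarrow [\tau(1-\tau)]^2(2K_d\Delta_M+Z_2)^2
\]
for another $O_p(1)$ term $Z_2$.

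For the denominator at $k=k^*$, observe that the left block $L_n(k^*)$ depends only on $Y_1,\dots,Y_{k^*}$, all drawn from $P^{(1)}$, and the right block $R_n(k^*)$ depends only on $Y_{k^*+1},\dots,Y_n$, all drawn from $P^{(2)}$. Hence the asymptotics of $L_n(k^*)$ and $R_n(k^*)$ are governed by the \emph{null-hypothesis} behaviour of self-normalizers on each stationary half, exactly as handled in the proof of Theorem \ref{thm_cpt_H0}. In particular, uniformly in $l$ within each range, Assumption \ref{ass_expand} gives $\sqrt{n}T_n(l/n;0,\tau)$ and $\sqrt{n}T_n(l/n;\tau,1)$ weakly converging to centered Gaussian processes, while the contaminated contrasts $\sqrt{n}T_n^C(l/n;0,\tau)$ and $\sqrt{n}T_n^C(l/n;\tau,1)$ vanish in probability (because both subsample means in each contaminated term consistently estimate the \emph{same} Fr\'echet mean). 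This yields a finite and positive random limit $\mathcal{V}(\tau)$ for $L_n(k^*)+R_n(k^*)$, of the Brownian-bridge type appearing in the definition of $\mathcal{S}_{\eta}$ restricted to the left and right subintervals around $\tau$.

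Combining the two pieces, $D_{n,1}(k^*)\Rightarrow [\tau(1-\tau)]^2(\Delta_V+Z_1)^2/\mathcal{V}(\tau)$ and $D_{n,2}(k^*)\Rightarrow [\tau(1-\tau)]^2\{(\Delta_V+Z_1)^2+(2K_d\Delta_M+Z_2)^2\}/\mathcal{V}(\tau)$. Since $(\Delta_V+Z_1)^2/\mathcal{V}(\tau)\to_p\infty$ as $|\Delta_V|\to\infty$ (dominated convergence, as $\mathcal{V}(\tau)$ is a fixed $a.s.$-finite positive random variable independent of $\Delta_V,\Delta_M$), and similarly for the second statement with $\max\{|\Delta_V|,\Delta_M\}$, the iterated limit $\lim_{|\Delta_V|\to\infty}\lim_{n\to\infty}$ (resp.\ $\lim_{\max\{|\Delta_V|,\Delta_M\}\to\infty}\lim_{n\to\infty}$) of $D_{n,i}(k^*)$ diverges in probability. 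The bound $SN_i\geq D_{n,i}(k^*)$ finishes the proof. The main technical obstacle is justifying that the denominator behaves asymptotically as in the null case at $k=k^*$: this requires a uniform-in-$l$ handling of $T_n$ and $T_n^C$ via Assumption \ref{ass_expand}, together with showing that the contaminated cross-terms are $o_p(n^{-1/2})$ uniformly, which is the same machinery underlying Theorem \ref{thm_cpt_H0} but now applied on two independent stationary halves.
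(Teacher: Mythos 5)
Your proposal is correct and follows essentially the same route as the paper: lower-bound $SN_i$ by $D_{n,i}(\lfloor n\tau\rfloor)$, treat the left and right blocks of the self-normalizer as stationary null-type segments (with the contaminated contrasts vanishing), and extract the deterministic drifts $\tau(1-\tau)\Delta_V$ and $2\tau(1-\tau)K_d\Delta_M$ in the numerator before taking the iterated limits. The only minor discrepancy is that the paper shows $\sqrt{n}T_n^C(\tau;0,1)=2\tau(1-\tau)K_d\Delta_M+o_p(1)$, i.e.\ your $Z_2$ is degenerate, which only strengthens your conclusion.
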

{ We note that \Cref{thm_cpt_Ha} deals with the alternative involving two different sequences before and after the change-point, while \Cref{thm_cpt_H0} only involves one stationary sequence. Therefore, we need to replace  \Cref{ass_FCLTo} by \Cref{ass_FCLT}. }

{  \Cref{thm_cpt_Ha} demonstrates  that our tests are capable of detecting local alternatives at rate  $n^{-1/2}$.} In addition, it is seen from Theorem \ref{thm_cpt_Ha} that $SN_1$ is consistent under the local alternative of Fr\'echet variance change  as $|\Delta_V|\to\infty$, while $SN_2$ is  consistent not only under  $|\Delta_V|\to\infty$ but also under the local alternative of Fr\'echet mean change as $\Delta_M\to\infty$.  {Hence $SN_2$ is expected to  capture a wider class of alternatives than $SN_1$, and 
these results are consistent with findings for two-sample problems in Theorem \ref{thmtwo_alter}.}

When $\mathbb{H}_0$ is rejected, it is natural to estimate the change-point location by  
\begin{equation}\label{cpt_est}
	\hat{\tau}_i=n^{-1}\hat{k}_i, \quad \hat{k}_i=\arg\max_{\lfloor n\eta_1\rfloor\leq k\leq n-\lfloor n\eta_1\rfloor}D_{n,i}(k), \quad\mbox{for $i=1,2$.}   
\end{equation}
We will show that the estimators are consistent under the fixed alternative, i.e. $\mathbb{H}_a: V^{(1)}-V^{(2)}=\Delta_V.$ Before that, we need to regulate the behaviour of  Fr\'echet mean and variance under $\mathbb{H}_a$. 

Let 
\begin{flalign*}
	\mu(\alpha)=&\arg\min_{\omega\in\Omega}\left\{\alpha \mathbb{E}(d^2(Y_t^{(1)},\omega))+(1-\alpha)\mathbb{E}(d^2(Y_t^{(2)},\omega))\right\},\\
	V(\alpha)=&\alpha \mathbb{E}(d^2(Y_t^{(1)},\mu(\alpha)))+(1-\alpha)\mathbb{E}(d^2(Y_t^{(2)},\mu(\alpha))),
\end{flalign*}
be the  limiting Fr\'echet mean and variance  of two mixture distributions indexed by $\alpha\in[0,1]$.
\begin{ass}\label{ass_mix}
	$\mu(\alpha)$ is unique for all $\alpha\in[0,1]$, and 
	$$
	|V^{(2)}-V(\alpha)|\geq \varphi(\alpha), \quad |V^{(1)}-V(\alpha)|\geq \varphi(1-\alpha),
	$$
	such that $\varphi(\alpha)\geq 0$ is a continuous, strictly increasing function of $\alpha\in[0,1]$ satisfying $\varphi(0)=0$ and $\varphi(1)\leq |\Delta_V|$.  
\end{ass}
The uniqueness of  Fr\'echet mean and variance for mixture distribution  is also imposed in \cite{dubey2020frechet}, see Assumption (A2) therein. Furthermore, Assumption \ref{ass_mix} imposes a bi-Lipschitz type condition on $V(\alpha)$, and is used to distinguish the Fr\'echet variance $V(\alpha)$ under mixture distribution from  $V^{(1)}$ and $V^{(2)}$. 

\begin{thm}\label{thm_cpt_con}
	Suppose Assumptions \ref{ass_diff}-\ref{ass_expand} (with \ref{ass_FCLTo} replaced by \ref{ass_FCLT}), and Assumption \ref{ass_mix} hold.  Under $\mathbb{H}_a$, for $i=1,2$,
	we have $\hat{\tau}_i\to_p\tau$, where $\hat{\tau}_i$ is defined in \eqref{cpt_est}.
\end{thm}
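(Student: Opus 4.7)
The plan is a standard argmax-consistency argument: I will show (i) that $D_{n,i}(\lfloor n\tau\rfloor)$ diverges at rate $n$, while (ii) that $\sup_{r \in [\eta_1, 1-\eta_1],\, |r - \tau| > \epsilon} D_{n,i}(\lfloor nr\rfloor) = O_p(1)$ for every $\epsilon > 0$. Combined, these pin $\hat{\tau}_i$ into an arbitrarily small neighborhood of $\tau$ with probability tending to one, giving consistency.

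First I would upgrade pointwise consistency of the subsample Fr\'echet mean and variance to uniform consistency over subintervals $[a,b]$ with $b - a$ bounded away from zero. Under $\mathbb{H}_a$, the relevant limits are $\hat{\mu}_{[a,b]} \to_p \mu(\alpha_{[a,b]})$ and $\hat{V}_{[a,b]} \to_p V(\alpha_{[a,b]})$, where $\alpha_{[a,b]} = |[a,b] \cap [0,\tau]|/(b-a)$ is the population-1 fraction inside $[a,b]$. This step combines Assumptions \ref{ass_diff}--\ref{ass_LLN} with total boundedness of $\Omega$ and the uniqueness of the mixture Fr\'echet mean provided by Assumption \ref{ass_mix}.

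Next I establish (i). Because both halves $[0,\tau]$ and $[\tau,1]$ are distributionally homogeneous, Assumption \ref{ass_FCLT} gives $\sqrt{n}\, T_n(l/n; 0, \tau) = O_p(1)$ uniformly in $l/n \in [\eta_2, \tau - \eta_2]$, and symmetrically for the right half; hence the denominator of $D_{n,i}(\lfloor n\tau\rfloor)$ is $O_p(1)$. Meanwhile $T_n(\tau; 0, 1) \to_p \tau(1-\tau)\Delta_V \neq 0$, and an analogous non-degenerate limit for $T_n^C(\tau; 0, 1)$ arises from $\mu^{(1)} \neq \mu^{(2)}$ via the argument already used in \Cref{thmtwo_alter}, so the numerator is of order $n$. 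Therefore $D_{n,i}(\lfloor n\tau\rfloor) = \Theta_p(n) \to_p \infty$. For (ii), fix $r \in [\eta_1, 1-\eta_1]$ with $|r - \tau| > \epsilon$ and, say, $r < \tau - \epsilon$. The numerator $n[T_n(r; 0, 1)]^2$ is $O_p(n)$ with constant continuous in $r$. For the denominator, the right sum $\sum_{l = k + \lfloor n\eta_2 \rfloor}^{n - \lfloor n\eta_2\rfloor}[T_n(l/n; r, 1)]^2$ contains a macroscopic range of indices whose defining subintervals straddle $\tau$; for those $l/n$, Step 1 together with Assumption \ref{ass_mix} forces the contrasting subsample variances to differ by a quantity bounded below by $\varphi(\cdot) > 0$, producing $\Theta_p(1)$ contributions. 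Summing $\Theta(n)$ such contributions yields a denominator of order at least $n$ with probability tending to one, so the ratio is $O_p(1)$. The symmetric case $r > \tau + \epsilon$ is identical, and uniformity in $r$ follows from the uniform consistency of Step 1 and continuity of the limits in $r$. The same reasoning handles $D_{n,2}$: its extra contaminated pieces contribute non-negative quantities to both numerator and denominator without altering rates.

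The main obstacle is Step 1, the uniform consistency of Fr\'echet subsample estimators over the continuum of subintervals in an abstract metric space where neither projection nor inner product is available. Rather than invoke empirical-process machinery tailored to Euclidean or Hilbert settings, I would rely on the pointwise law of large numbers (Assumption \ref{ass_LLN}), total boundedness of $\Omega$, and the identifiability supplied by Assumptions \ref{ass_diff}, \ref{ass_unique} and \ref{ass_mix}, together with Assumption \ref{ass_expand} to quantify the rate of approach of $\hat{\mu}_{[a,b]}$ to $\mu(\alpha_{[a,b]})$. A secondary technical nuisance is producing the positive lower bound on the denominator for $r$ bounded away from $\tau$; this is exactly where the strict monotonicity and positivity of $\varphi$ in Assumption \ref{ass_mix} are indispensable, since they rule out pathological cancellations that would otherwise spoil the $\Theta(n)$ lower bound.
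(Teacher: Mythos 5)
Your proposal is correct and follows essentially the same route as the paper: establish uniform convergence of the subsample Fr\'echet means and variances to their mixture limits $\mu(\alpha_{[a,b]})$, $V(\alpha_{[a,b]})$, observe that at $k=\lfloor n\tau\rfloor$ the self-normalizer collapses to $O_p(1)$ while the numerator is $\Theta_p(n)$, and for $|r-\tau|>\epsilon$ use the lower bound $\varphi$ from Assumption \ref{ass_mix} to force a $\Theta_p(n)$ denominator (the paper phrases this as $D_i(\tau)=\infty$ versus $\sup_{|r-\tau|>\epsilon}D_i(r)<\infty$ for the weak limit of the process $D_{n,i}(\lfloor nr\rfloor)$, and obtains the key bound $\int_{\eta_2}^{\tau-\eta_2}[T(u;0,r)]^2du\geq\eta_2^6\varphi^2(\epsilon)$ exactly as you describe). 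The only cosmetic difference is that you argue directly with stochastic orders while the paper first proves process-level weak convergence and then applies the continuous mapping theorem.
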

Theorem \ref{thm_cpt_con} obtains the consistency of $\hat{\tau}_i$, $i=1,2$ when Fr\'echet variance changes. 
We note that it is very challenging to derive the consistency result when $\mathbb{H}_a$ is caused by Fr\'echet mean change alone, which is partly due to the lack of  explicit algebraic structure on $(\Omega,d)$ that we can exploit and the use of self-normalization. We leave this problem for future investigation.

\subsection{Wild Binary Segmentation} \label{wbs}
To detect multiple change-points and identify the their locations  given the time series $\{Y_t\}_{t=1}^n$, we can combine our change-point test with the so-called wild binary segmentation (WBS) \citep{wbs}. The testing procedure in conjunction with WBS can be described as follows. 

Let $I_M = \{ (s_m, e_m)   \}_{m=1,2, \dots, M}$, where $s_m, e_m$ are drawn uniformly from $\{ 0, 1/n, 1/(n-1), \dots, 1/2, 1 \}$ such that $\lceil n e_m \rceil - \lfloor n s_m \rfloor \geq 20$. Then we simulate $J$ i.i.d samples, each sample is of size $n$, from multivariate Gaussian distribution with mean 0 and identity covariance matrix, i.e., for $j=1, 2, \dots, J$,  $\{ Z^{j}_i \}_{i=1}^n \overset{i.i.d.}{\sim} \mathcal{N}(0,1)$. For the $j$th sample $\{ Z^{j}_i \}_{i=1}^n$, let $\widetilde{D}(k; s_m,e_m; \{Z_i^j\}_{i=1}^n)$ be the statistic $D_{\lfloor n e_m \rfloor - \lceil n s_m \rceil +1, 2}(k)$ that is computed based on sample $ \{ Z_{\lceil n s_m \rceil}^j, Z_{ \lceil n s_m \rceil + 1}^j, \dots,$ $Z_{\lfloor n e_m \rfloor}^j \} $ and
\begin{align*}
	\xi_j = \max_{1 \leq m \leq M} \max_{\lfloor \tilde{n}_m \eta_1 \rfloor \leq k \leq \tilde{n}_m - \lfloor \tilde{n}_m \eta_1 \rfloor } \widetilde{D}(k; s_m,e_m;  \{Z_i^j\}_{i=1}^n),
\end{align*}
where $ \tilde{n}_m = \lceil n e_m \rceil - \lfloor n s_m \rfloor +1 $. Setting $\xi$ as the $95\%$ quantile of $ \xi_1, \xi_2, \dots, \xi_J $, we can apply our test in combination with WBS algorithm to the data sequence
$\{Y_1, Y_2, \dots Y_n\}$ by running Algorithm \ref{alg:WBS} as $\text{WBS}(0, 1, \xi)$.  The main rational behind this algorithm is that we exploit the asymptotic pivotality of our SN test statistic, and the limiting null distribution of our test statistic applied to random objects is identical to that applied to i.i.d $\mathcal{N}$(0,1) random variables.  
Thus this threshold is expected to well approximate the
95\% quantile of the finite sample distribution of the maximum SN test statistic on the $M$ random intervals under the null.

\begin{algorithm}
	\SetKwProg{Fn}{Function}{:}{}
	\Fn{WBS(s, e, $\xi$)}{
		\uIf {$\lceil n s \rceil - \lfloor n e \rfloor < 20$ }{
			STOP;
		}\Else{
			$\mathcal{M}_{s,e} \leftarrow$ set of those $1 \leq m \leq M$ for which $s \leq s_m, e_m \leq e$; \\ 
			$m_0 \leftarrow \arg \max_{m \in \mathcal{M}_{s,e}} \max_{\lfloor \tilde{n}_m \eta_1 \rfloor \leq k \leq \tilde{n}_m - \lfloor \tilde{n}_m \eta_1 \rfloor } \widetilde{D}(k;s_m,e_m; \{Y_i\}_{i=1}^n)$, where $\tilde{n}_m = \lceil n e_m \rceil - \lfloor n s_m \rfloor +1$; \\
			$k_0 \leftarrow \max_{\lfloor \tilde{n}_0 \eta_1 \rfloor \leq k \leq \tilde{n}_0 - \lfloor \tilde{n}_0 \eta_1 \rfloor } \widetilde{D}(k;s_{m_0},e_{m_0}; \{Y_i\}_{i=1}^n)$, where $\tilde{n}_0 = \lceil n e_{m_0} \rceil - \lfloor n s_{m_0} \rfloor +1$; \\
			\uIf {$ \widetilde{D}(k_0;s_{m_0},e_{m_0}; \{Y_i\}_{i=1}^n) > \xi$}{ add $k_0$ to the set of estimated change points; \\
				WBS($s,k_0/n,\xi$); \\
				WBS($k_0/n,e,\xi$);
			}\Else{
				STOP
			}
		}
	}
	\caption{WBS}
	\label{alg:WBS}
\end{algorithm}

\section{Simulation}\label{sec:simu}
In this section, we examine the size and power performance of our proposed tests in two-sample testing (Section \ref{sec:simu_two}), change-point detection (Section \ref{sec:simu_CP}) problems, and provide simulation results of  WBS based change-point estimation (Section \ref{sec:simu_WBS}).  We refer to Appendix \ref{sec:simu_functional} with additional simulation results regarding comparison with FPCA approach for two-sample tests in functional time series.

The time series random objects considered in this section include (i). univariate Gaussian probability distributions equipped with 2-Wasserstein metric $d_W$;  (ii). graph Laplacians of weighted graphs equipped with Frobenius metric $d_F$;   (iii). covariance matrices \citep{dryden2009non} equipped with log-Euclidean metric $d_E$. Numerical experiments are conducted according to the following data generating processes (DGPs):
\begin{itemize}
	\item[(i)]  Gaussian univariate probability distribution: we consider 
	\begin{align*}
	    &Y_t^{(1)}=\mathcal{N}(\arctan (U_{t,1}),[\arctan(U_{t,1}^2)+1]^2), \\
	    & Y_t^{(2)}=\mathcal{N}(\arctan (U_{t,2})+\delta_1, \delta_2^2[\arctan(U_{t,2}^2)+1]^2).
	\end{align*}
	\item[(ii)] graph Laplacians:  each graph has $N$ nodes ($N=10$ for two-sample test and $N=5$ for change-point test)  that are categorized into two communities with $0.4N$ and $0.6N$ nodes respectively, and the edge weight for the first  community, the second community and between community are set as $0.4+\arctan(U_{t,1}^2)$, $0.2+\arctan(U_{t,1}^{'2}), 0.1$ for the first sample $Y_t^{(1)}$, and $\delta_2[0.4+\arctan(U_{t,2}^{2})]$, $\delta_2[0.2+\arctan(U_{t,2}^{'2})], 0.1+\delta_1$ for the   second sample $Y_t^{(2)}$, respectively;
	\item[(iii)] covariance matrix: $Y_t^{(i)}=(2I_3+Z_{t,i})(2I_3+Z_{t,i})^{\top}$, $i=1,2$, such that all the entries of    $Z_{t,1}$ (resp. $Z_{t,2}$) are independent copies of  $\arctan(U_{t,1} )$  (resp. $\delta_1+\delta_2\arctan(U_{t,2})$).
\end{itemize}
For  DGP (i)-(iii),  $(U_{t,1},U_{t,2})^{\top}$ (with independent copies $(U'_{t,1},U'_{t,2})^{\top}$)  are generated according to the following VAR(1) process,

\begin{equation}\label{eq_U}
	\left(
	\begin{matrix}
		U_{t,1}\\
		U_{t,2}
	\end{matrix}
	\right)=\rho\left(
	\begin{matrix}
		U_{t-1,1}\\
		U_{t-1,2}
	\end{matrix}
	\right)+\epsilon_t,  \quad \epsilon_t\overset{i.i.d.}{\sim}\mathcal{N}\left(0,\left(\begin{matrix}
		1 & a\\
		a&1
	\end{matrix}\right)\right);
\end{equation}
where $a\in\{0,0.5\}$  measures the cross-dependence, and $\rho \in \{-0.4,0,0.4,0.7\}$ measures the temporal dependence within each sample (or each segment in change-point testing). For size evaluation in change-point tests, only $\{Y_t^{(1)}\}$ is used.

Furthermore, $\delta_1\in[0,0.3]$ and $\delta_2\in[0.7,1]$ are used to characterize the change in the underlying distributions. In particular, $\delta_1$  can only capture the location shift, while $\delta_2$ measures the scale change, and the case $(\delta_1,\delta_2)=(0,1)$ corresponds to $\mathbb{H}_0$.
For DGP (i) and (ii), i.e. Gaussian distribution with 2-Wasserstein metric $d_W$ and graph Laplacians with Euclidean metric $d_F$,  the location  parameter $\delta_1$ directly shifts Fr\'echet mean while keeping  Fr\'echet variance constant; and the scale  parameter $\delta_2$   works on  Fr\'echet variance only while holding the Fr\'echet mean fixed. For DGP (iii), i.e. covariance matrices, the log-Euclidean metric $d_E$ operates nonlinearly, and thus changes in either $\delta_1$ or $\delta_2$ will be reflected on  changes in both  Fr\'echet mean  and variance.

The comparisons of our proposed methods with \cite{dubey2019frechet}  for two-sample testing and \cite{dubey2020frechet} for change-point testing are also reported, which are generally referred to as $\mathrm{DM}$.

\subsection{Two-Sample Test}\label{sec:simu_two}
For the two-sample testing problems, we set the sample size as $n_1=n_2\in\{50,100,200,400\}$, and trimming parameter as $\eta=0.15$. 
Table \ref{tab_two} presents the sizes of our tests and DM  test for three DGPs  based on 1000 Monte Carlo replications at nominal significance level $\alpha=5\%$.

In all three {subtables}, we see that: (a) both $D_1$ and $D_2$ can deliver reasonable size under all settings; (b)  $\mathrm{DM}$  suffers from severe size distortion when dependence magnitude 
among data is strong; (c) when two samples are  dependent, i.e. $a=0.5$, $\mathrm{DM}$ is a bit undersized even when data is  temporally independent. These findings suggest that our SN-based tests provide more accurate size relative to $\mathrm{DM}$ when either within-group temporal dependence or between-group  dependence is exhibited.

\begin{table}[H]
	\centering
	\caption{Size Performance ($\times 100\%$) at $\alpha=$5\%  for all three DGPs.}
	\label{tab_two}
	\resizebox{\textwidth}{!}{
		\begin{tabular}{cccccccccccccccccccccc}
			\hline
			&       & \multicolumn{6}{c}{Gaussian Distribution based on $d_W$}                                      &  & \multicolumn{6}{c}{Graph Laplacian based on $d_F$}                                            &  & \multicolumn{6}{c}{Covariance Matrix based on $d_E$}                                          \\ \cline{1-8} \cline{10-15} \cline{17-22} 
			$\rho$               & $n_i$ & \multicolumn{2}{c}{$D_1$} & \multicolumn{2}{c}{$D_2$} & \multicolumn{2}{c}{DM} &  & \multicolumn{2}{c}{$D_1$} & \multicolumn{2}{c}{$D_2$} & \multicolumn{2}{c}{DM} &  & \multicolumn{2}{c}{$D_1$} & \multicolumn{2}{c}{$D_2$} & \multicolumn{2}{c}{DM} \\ \cline{1-8} \cline{10-15} \cline{17-22} 
			a                    &       & 0           & 0.5         & 0           & 0.5         & 0          & 0.5       &  & 0            & 0.5        & 0           & 0.5         & 0          & 0.5       &  & 0           & 0.5         & 0           & 0.5         & 0          & 0.5       \\ \hline
			-0.4                 & 50    & 6.1         & 7.1         & 6.2         & 7.1         & 10.1       & 7.0       &  & 5.6          & 7.2        & 4.4         & 5.5         & 9.4        & 6.7       &  & 6.4         & 6.0         & 6.1         & 6.4         & 11.4       & 5.5       \\
			& 100   & 4.6         & 5.2         & 4.6         & 5.2         & 7.4        & 6.7       &  & 5.8          & 5.3        & 5.1         & 4.8         & 8.0        & 5.4       &  & 5.8         & 6.0         & 5.7         & 6.0         & 9.2        & 6.0       \\
			& 200   & 5.0         & 5.1         & 5.1         & 5.2         & 8.9        & 6.4       &  & 5.7          & 4.8        & 5.3         & 4.1         & 6.7        & 4.7       &  & 5.7         & 5.7         & 5.8         & 5.8         & 8.3        & 5.8       \\
			& 400   & 4.1         & 5.1         & 4.2         & 5.1         & 8.2        & 6.0       &  & 4.4          & 4.2        & 4.2         & 4.2         & 6.5        & 5.3       &  & 5.0         & 5.8         & 4.9         & 5.9         & 9.4        & 5.5       \\ \hline
			\multirow{4}{*}{0}   & 50    & 4.5         & 5.5         & 4.8         & 4.9         & 5.0        & 4.2       &  & 4.6          & 5.9        & 3.9         & 4.9         & 5.3        & 5.6       &  & 5.9         & 5.8         & 5.3         & 5.7         & 5.9        & 4.0       \\
			& 100   & 3.9         & 4.9         & 3.8         & 4.8         & 4.4        & 3.2       &  & 5.8          & 4.6        & 4.7         & 4.4         & 4.8        & 3.3       &  & 5.0         & 5.8         & 4.8         & 5.6         & 5.0        & 3.6       \\
			& 200   & 5.9         & 6.0         & 6.0         & 5.9         & 5.9        & 2.4       &  & 5.1          & 6.1        & 5.0         & 5.8         & 4.8        & 2.8       &  & 6.2         & 4.9         & 6.2         & 4.6         & 5.0        & 2.4       \\
			& 400   & 5.5         & 4.8         & 5.3         & 4.8         & 4.5        & 2.8       &  & 4.6          & 4.0        & 4.6         & 3.7         & 4.8        & 3.5       &  & 5.7         & 4.9         & 5.7         & 4.8         & 4.8        & 2.7       \\ \hline
			\multirow{4}{*}{0.4} & 50    & 5.0         & 5.2         & 5.1         & 4.4         & 9.7        & 6.8       &  & 4.6          & 4.6        & 4.3         & 3.9         & 7.6        & 6.2       &  & 7.0         & 6.3         & 7.0         & 5.3         & 12.8       & 7.1       \\
			& 100   & 6.5         & 4.7         & 5.7         & 5.0         & 9.8        & 5.1       &  & 5.8          & 5.8        & 5.2         & 5.1         & 8.2        & 5.8       &  & 5.9         & 6.4         & 5.8         & 6.3         & 9.4        & 6.5       \\
			& 200   & 4.8         & 4.4         & 4.7         & 4.2         & 10.7       & 4.8       &  & 6.2          & 5.0        & 5.3         & 4.7         & 6.2        & 5.7       &  & 6.5         & 5.8         & 6.3         & 5.2         & 10.0       & 6.7       \\
			& 400   & 5.3         & 5.6         & 4.8         & 5.5         & 9.3        & 6.0       &  & 6.2          & 4.9        & 5.7         & 4.7         & 8.5        & 5.3       &  & 5.8         & 4.6         & 5.6         & 4.1         & 10.2       & 6.6       \\ \hline
			\multirow{4}{*}{0.7} & 50    & 6.4         & 8.1         & 7.1         & 7.1         & 30.1       & 21.1      &  & 4.8          & 5.9        & 6.1         & 6.2         & 12.1       & 9.7       &  & 6.3         & 7.8         & 8.3         & 7.4         & 33.3       & 20.9      \\
			& 100   & 7.8         & 6.2         & 7.1         & 5.1         & 27.9       & 18.4      &  & 5.0          & 4.6        & 5.3         & 5.3         & 12.0       & 8.2       &  & 6.5         & 7.5         & 5.8         & 6.7         & 26.7       & 18.5      \\
			& 200   & 6.3         & 4.2         & 5.3         & 3.9         & 23.9       & 17.1      &  & 4.9          & 5.4        & 5.3         & 4.6         & 10.0       & 7.0       &  & 5.8         & 6.2         & 5.5         & 5.3         & 24.5       & 21.7      \\
			& 400   & 4.6         & 5.7         & 3.7         & 4.8         & 23.6       & 18.7      &  & 4.9          & 5.4        & 4.2         & 4.8         & 10.3       & 7.3       &  & 4.5         & 5.5         & 4.3         & 4.5         & 24.2       & 19.3      \\ \hline
	\end{tabular}
}
\end{table}
In Figure \ref{Fig:power_two}, we further compare size-adjusted power of our SN-based tests  and DM  test, in view of   the size-distortion of DM. {That is, the critical values  are set as the empirical 95\% quantiles of the test statistics obtained in the size evaluation, so that all curves start from the nominal level at 5\%.  } For all settings, we note that  $D_2$ is more powerful  than (or equal to) $D_1$. In particular, $D_1$ has trivial power in DGP (i) and (ii) when only  Fr\'echet mean difference is present. In addition, $D_2$ is more powerful in detecting  Fr\'echet mean differences than DM for DGP (i) and (ii), and beats DM in DGP (i) for detecting Fr\'echet variance differences, although it is slightly worse than DM in detecting Fr\'echet variance differences for  DGP (ii) and (iii). Due to robust size and power performance, we thus recommend $D_2$ for practical purposes.  

\begin{figure}[H]
	\centering 
	\begin{subfigure}{0.32\textwidth}
		\centering
		\includegraphics[width=1\textwidth]{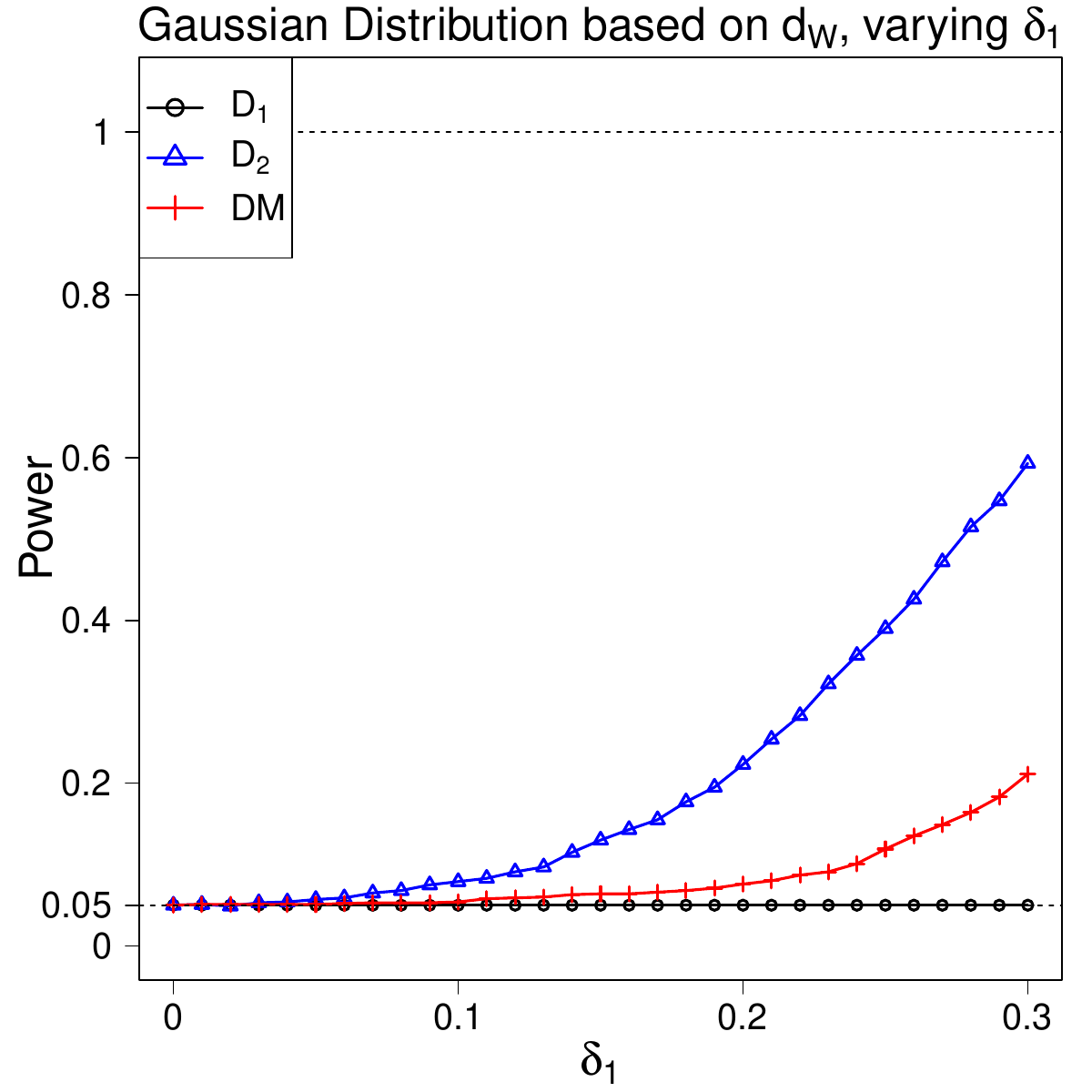}
	\end{subfigure}
	\begin{subfigure}{0.32\textwidth}
		\centering
		\includegraphics[width=1\textwidth]{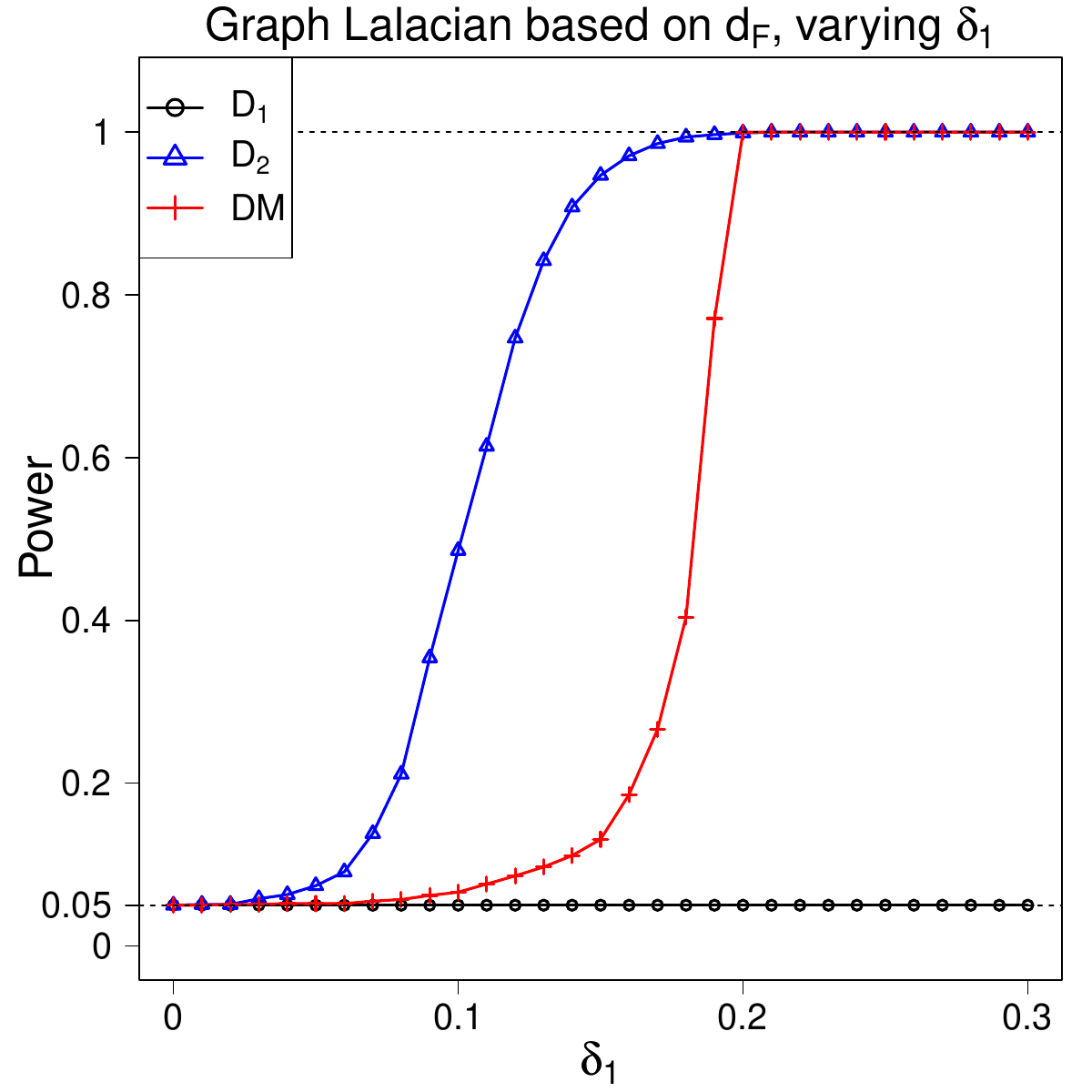}
	\end{subfigure}
	\begin{subfigure}{0.32\textwidth}
		\centering
		\includegraphics[width=1\textwidth]{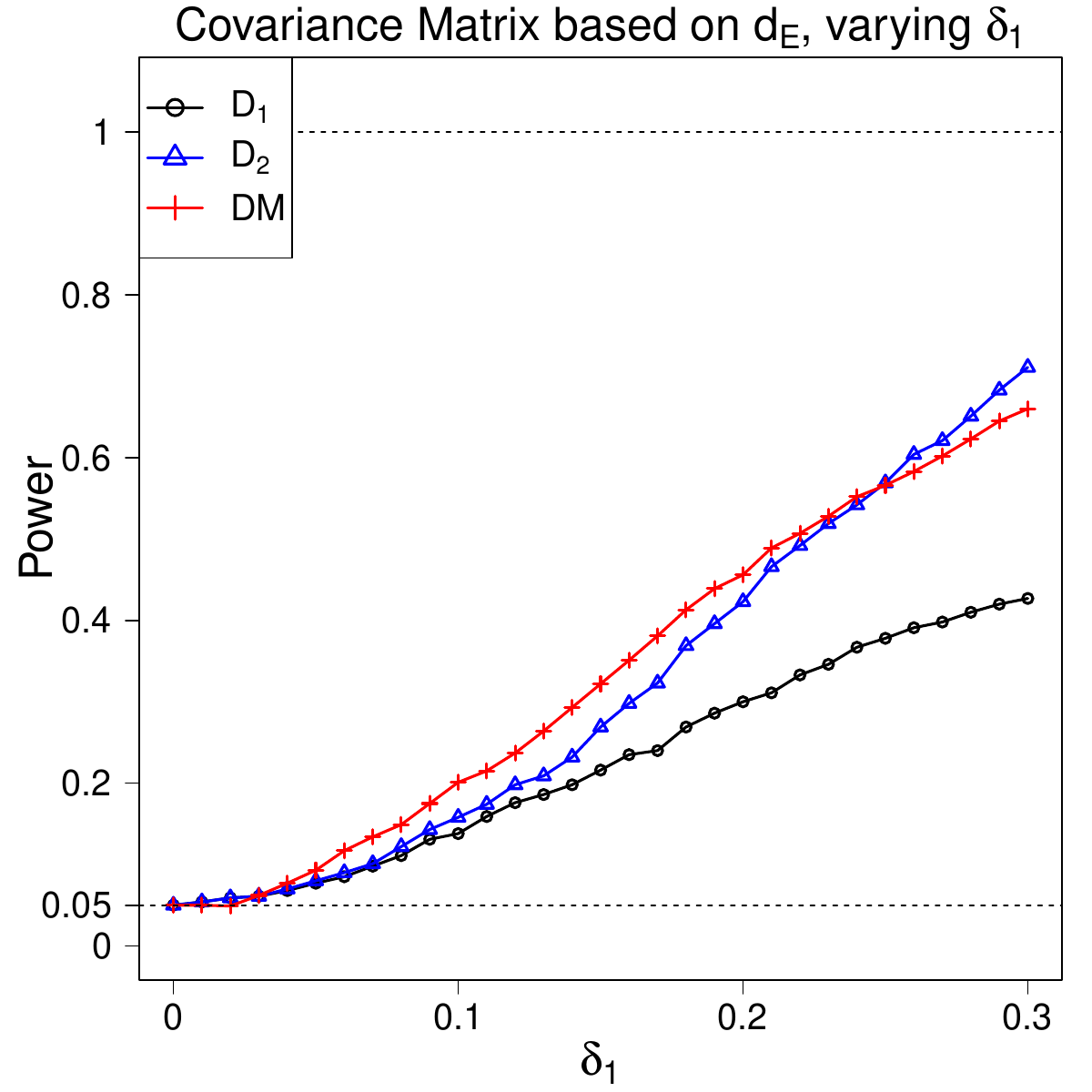}
	\end{subfigure}
	\centering 
	\begin{subfigure}{0.32\textwidth}
		\centering
		\includegraphics[width=1\textwidth]{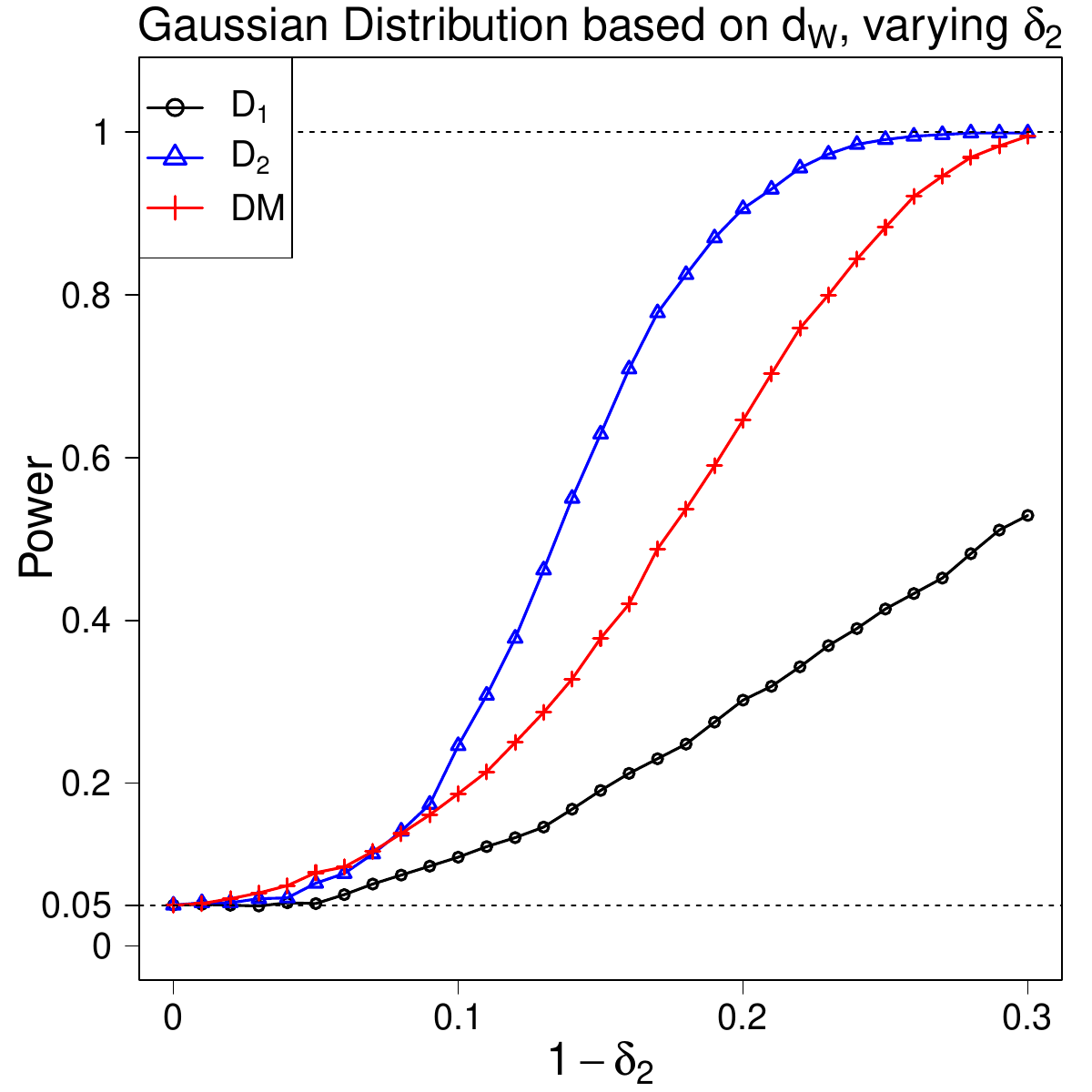}
	\end{subfigure}
	\begin{subfigure}{0.32\textwidth}
		\centering
		\includegraphics[width=1\textwidth]{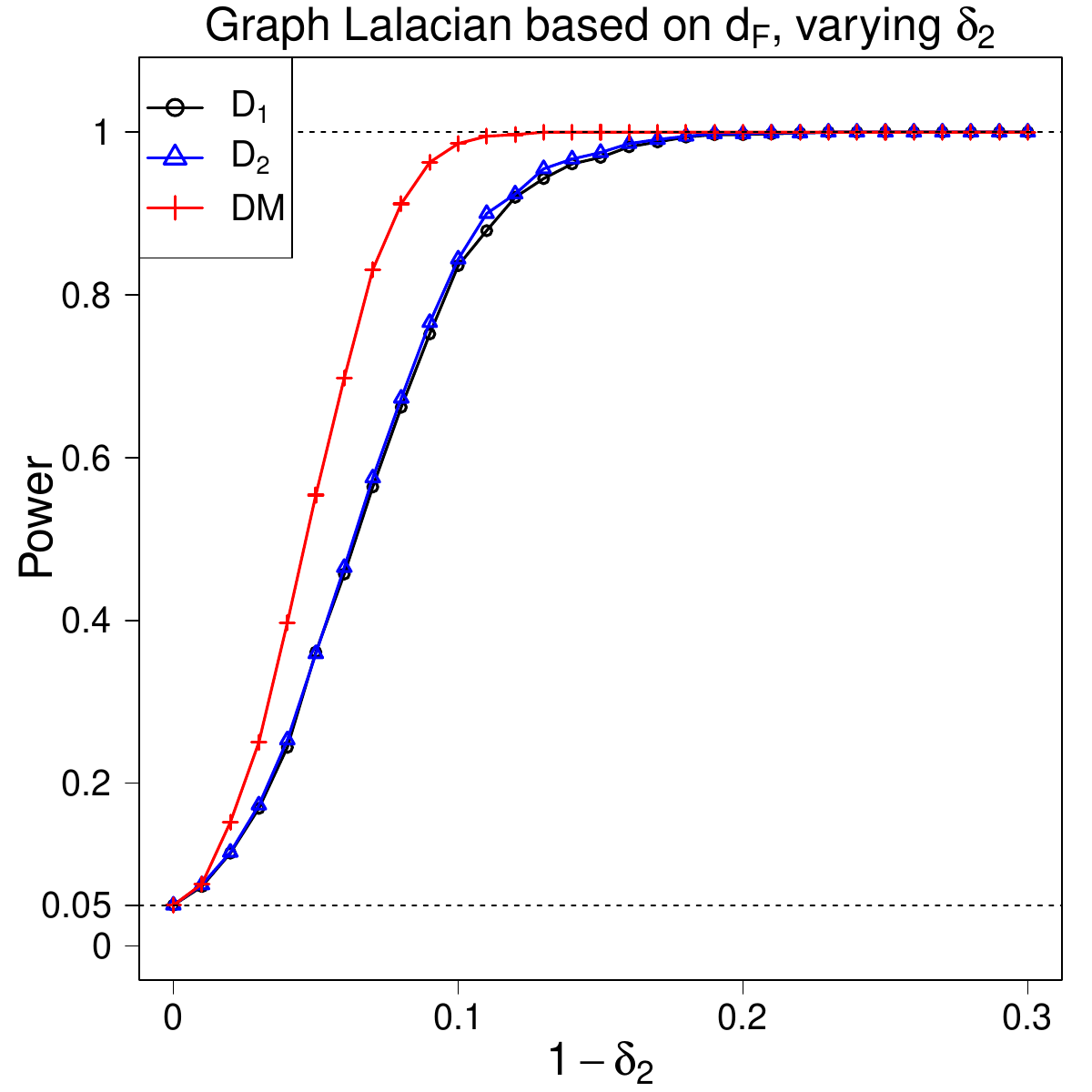}
	\end{subfigure}
	\begin{subfigure}{0.32\textwidth}
		\centering
		\includegraphics[width=1\textwidth]{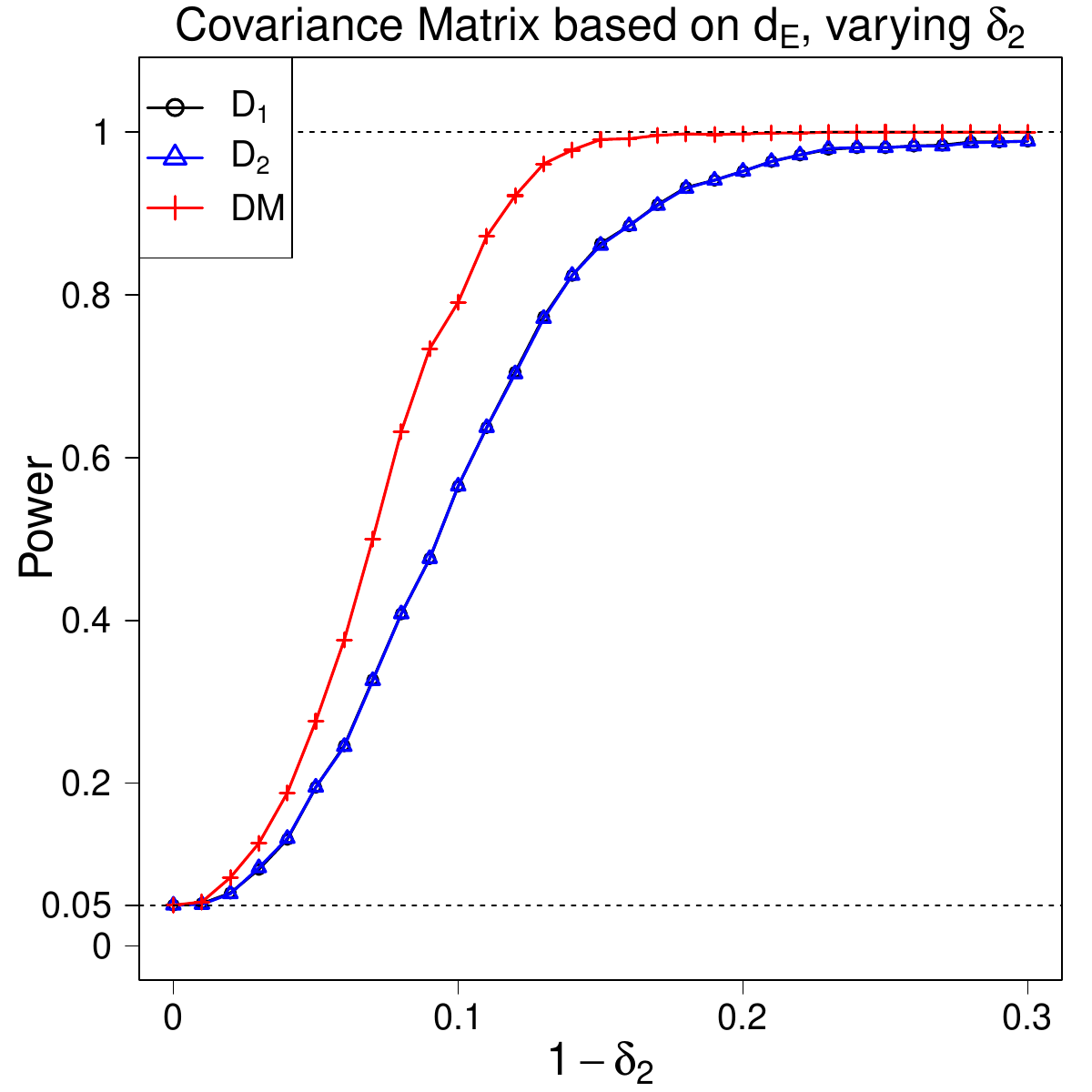}
	\end{subfigure}
	\caption{Size-Adjusted Power ($\times 100\%$) at $\alpha=5\%$, Two-Sample Test for all three DGPs, $n_i=400$ and $\rho=0.4$}
	\label{Fig:power_two}
\end{figure}

\subsection{Change-Point Test}
\label{sec:simu_CP}

For the change-point testing problems, we set the sample size $n\in \{200, 400,800\}$, and trimming
parameter as $(\eta_1,\eta_2)=(0.15,0.05)$. Table \ref{tab_cpt_size} outlines the size performance of our tests and DM test for three DGPs 
based on 1000 Monte Carlo replications at nominal significance level $\alpha=5\%$.  DM tests based asymptotic critical value and bootstraps (with 500 replications) are  denoted as $\mathrm{DM}^a$ and $\mathrm{DM}^b$, respectively. 

From Table \ref{tab_cpt_size}, we find that $SN_1$ always exhibits accurate size while $SN_2$ is a bit conservative. As a comparison, {the tests
based on $\mathrm{DM}^a$ and $\mathrm{DM}^b$  suffer}  from severe distortion when strong temporal  dependence is present, although $\mathrm{DM}^b$ is slightly better than $\mathrm{DM}^a$ in DGP (i) and (ii).

\begin{table}[H]
	\centering
	\caption{Size Performance ($\times 100\%$) at $\alpha=$5\%}
	\label{tab_cpt_size}
	\resizebox{\textwidth}{!}{
		\begin{tabular}{ccccccccccccccccccccccc}
			\hline
			&     &  & \multicolumn{6}{c}{Gaussian Distribution based on $d_W$}                      &  & \multicolumn{6}{c}{Graph Laplacian based on $d_F$}                        &  & \multicolumn{6}{c}{Covariance Matrix based on $d_E$}                       \\ \cline{1-2} \cline{4-9} \cline{11-16} \cline{18-23} 
			$\rho$                & $n$ &  & $SN_{1}$    &     & $SN_{2}$    &    & $\mathrm{DM}^a$ & $\mathrm{DM}^b$ &  & $SN_{1}$   &    & $SN_{2}$   &   & $\mathrm{DM}^a$ & $\mathrm{DM}^b$ &  & $SN_{1}$   &    & $SN_{2}$   &    &  $\mathrm{DM}^a$ & $\mathrm{DM}^b$ \\ \cline{1-2} \cline{4-9} \cline{11-16} \cline{18-23} 
			\multirow{3}{*}{-0.4} & 200 &  & 5.5        &     & 3.9        &    & 15.0              & 1.9              &  & 5.8       &    & 3.4       &   & 28.9              & 3.4              &  & 6.0       &    & 4.6       &    & 3.5               & 5.0              \\
			& 400 &  & 5.5        &     & 4.5        &    & 13.1              & 6.7              &  & 4.6       &    & 1.8       &   & 18.9              & 4.5              &  & 5.6       &    & 5.1       &    & 3.6               & 5.8              \\
			& 800 &  & 4.4        &     & 3.9        &    & 13.5              & 10.3             &  & 5.1       &    & 3.4       &   & 12.3              & 7.2              &  & 4.8       &    & 4.4       &    & 3.7               & 5.3              \\ \cline{1-2} \cline{4-9} \cline{11-16} \cline{18-23} 
			\multirow{3}{*}{0}    & 200 &  & 4.9        &     & 2.2        &    & 9.2               & 1.0              &  & 5.1       &    & 1.6       &   & 13.1              & 1.0              &  & 6.2       &    & 3.5       &    & 3.1               & 4.1              \\
			& 400 &  & 5.4        &     & 2.3        &    & 5.7               & 2.3              &  & 5.6       &    & 1.7       &   & 9.4               & 2.3              &  & 5.3       &    & 3.7       &    & 4.2               & 5.8              \\
			& 800 &  & 5.4        &     & 3.6        &    & 6.1               & 4.6              &  & 5.0       &    & 2.6       &   & 6.5               & 3.6              &  & 4.7       &    & 3.9       &    & 3.4               & 5.7              \\ \cline{1-2} \cline{4-9} \cline{11-16} \cline{18-23} 
			\multirow{3}{*}{0.4}  & 200 &  & 4.3        &     & 3.9        &    & 44.7              & 17.1             &  & 5.9       &    & 2.4       &   & 27.7              & 3.6              &  & 5.4       &    & 1.1       &    & 7.9               & 10.1             \\
			& 400 &  & 4.6        &     & 2.2        &    & 29.8              & 17.1             &  & 6.3       &    & 1.9       &   & 17.9              & 4.0              &  & 5.2       &    & 1.7       &    & 6.1               & 9.2              \\
			& 800 &  & 6.6        &     & 2.1        &    & 20.4              & 17.3             &  & 5.5       &    & 2.1       &   & 13.4              & 6.0              &  & 5.3       &    & 3.7       &    & 6.8               & 8.3              \\ \cline{1-2} \cline{4-9} \cline{11-16} \cline{18-23} 
			\multirow{3}{*}{0.7}  & 200 &  & 5.9        &     & 10.6       &    & 91.4              & 66.0             &  & 5.4       &    & 5.8       &   & 68.9              & 20.2             &  & 7.9       &    & 0.3       &    & 29.5              & 35.3             \\
			& 400 &  & 4.1        &     & 5.8        &    & 84.5              & 69.7             &  & 5.3       &    & 4.0       &   & 53.5              & 22.5             &  & 5.9       &    & 0.7       &    & 22.7              & 28.9             \\
			& 800 &  & 5.6        &     & 3.9        &    & 77.8              & 70.4             &  & 4.4       &    & 2.3       &   & 40.8              & 25.8             &  & 5.5       &    & 1.4       &    & 20.4              & 26.1             \\ \hline
		\end{tabular}
	}
\end{table}

In Figure \ref{Fig:power_cpt}, we plot the size-adjusted power of our tests and DM test based on bootstrap calibration. Here, the size-adjusted power of DM$^b$ is implemented following \cite{dominguez2000size}.  Similar to the findings in change-point tests, we find that $SN_1$ has trivial power in DGP (i) and (ii) when  there is only Fr\'echet mean change and is worst among all three tests.  Furthermore, $SN_2$ is slightly less powerful compared to DM and the power loss is moderate. Considering its better size control, $SN_2$ is preferred.

\begin{figure}[H]
	\centering 
	\begin{subfigure}{0.32\textwidth}
		\centering
		\includegraphics[width=1\textwidth]{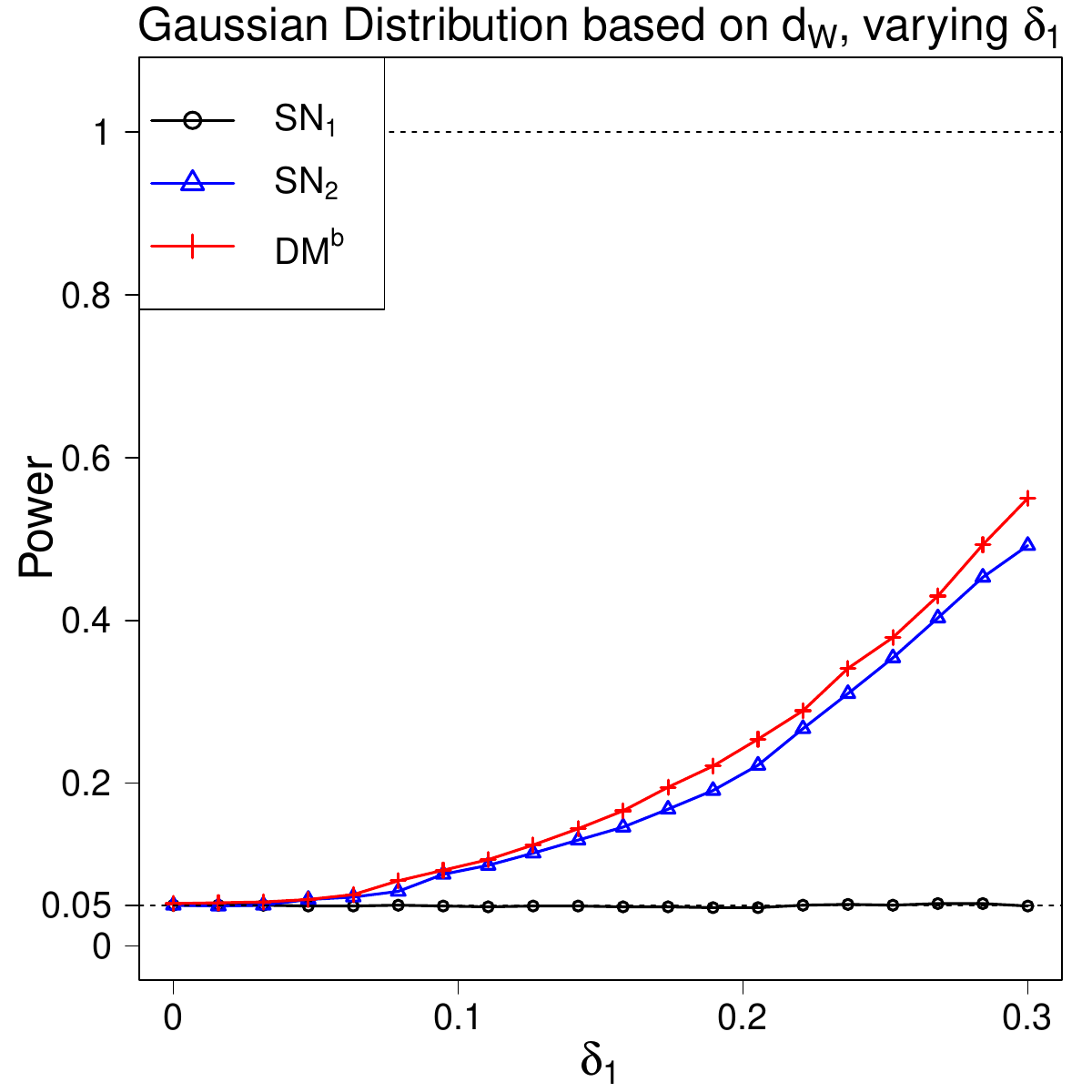}
	\end{subfigure}
	\begin{subfigure}{0.32\textwidth}
		\centering
		\includegraphics[width=1\textwidth]{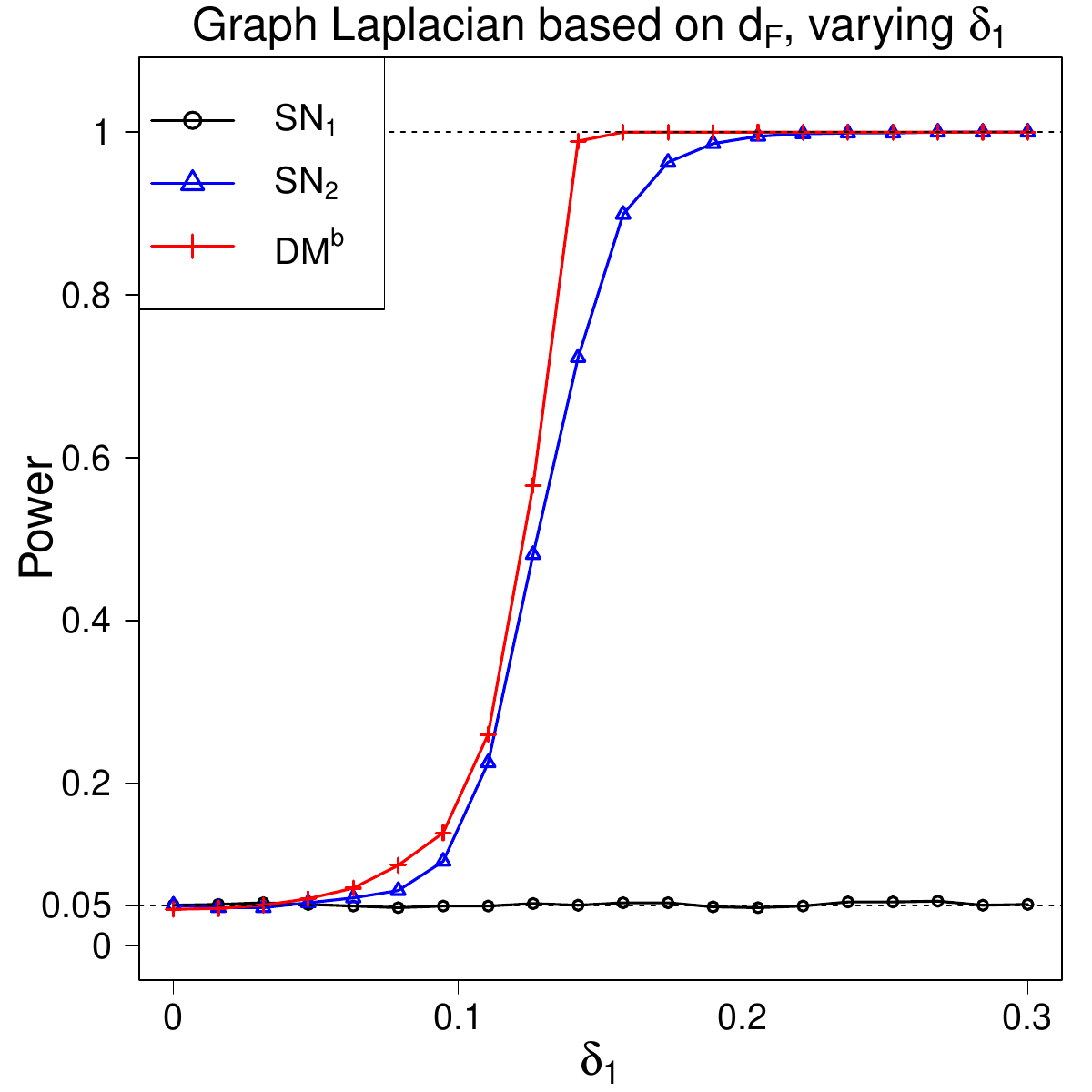}
	\end{subfigure}
	\begin{subfigure}{0.32\textwidth}
		\centering
		\includegraphics[width=1\textwidth]{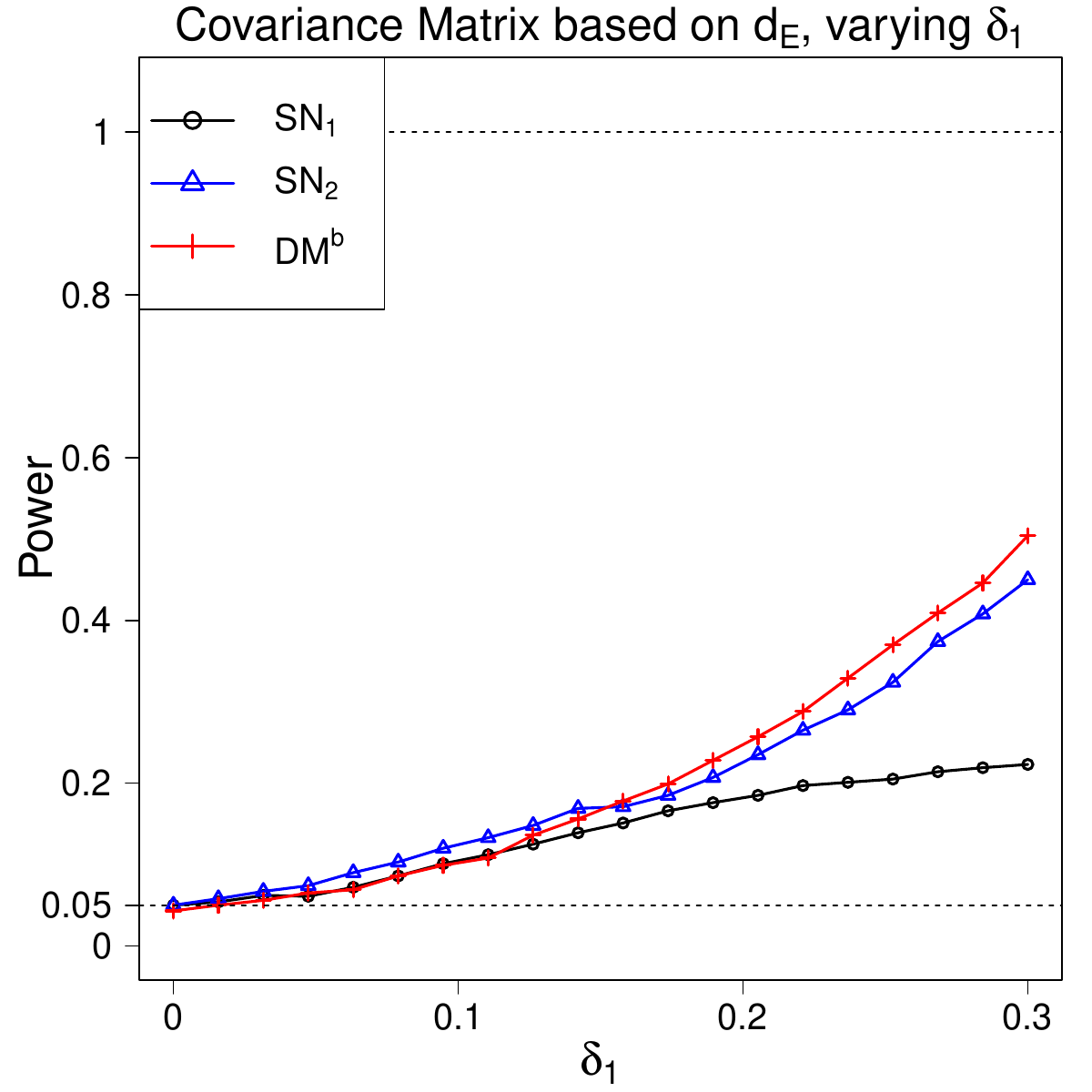}
	\end{subfigure}
	\centering 
	\begin{subfigure}{0.32\textwidth}
		\centering
		\includegraphics[width=1\textwidth]{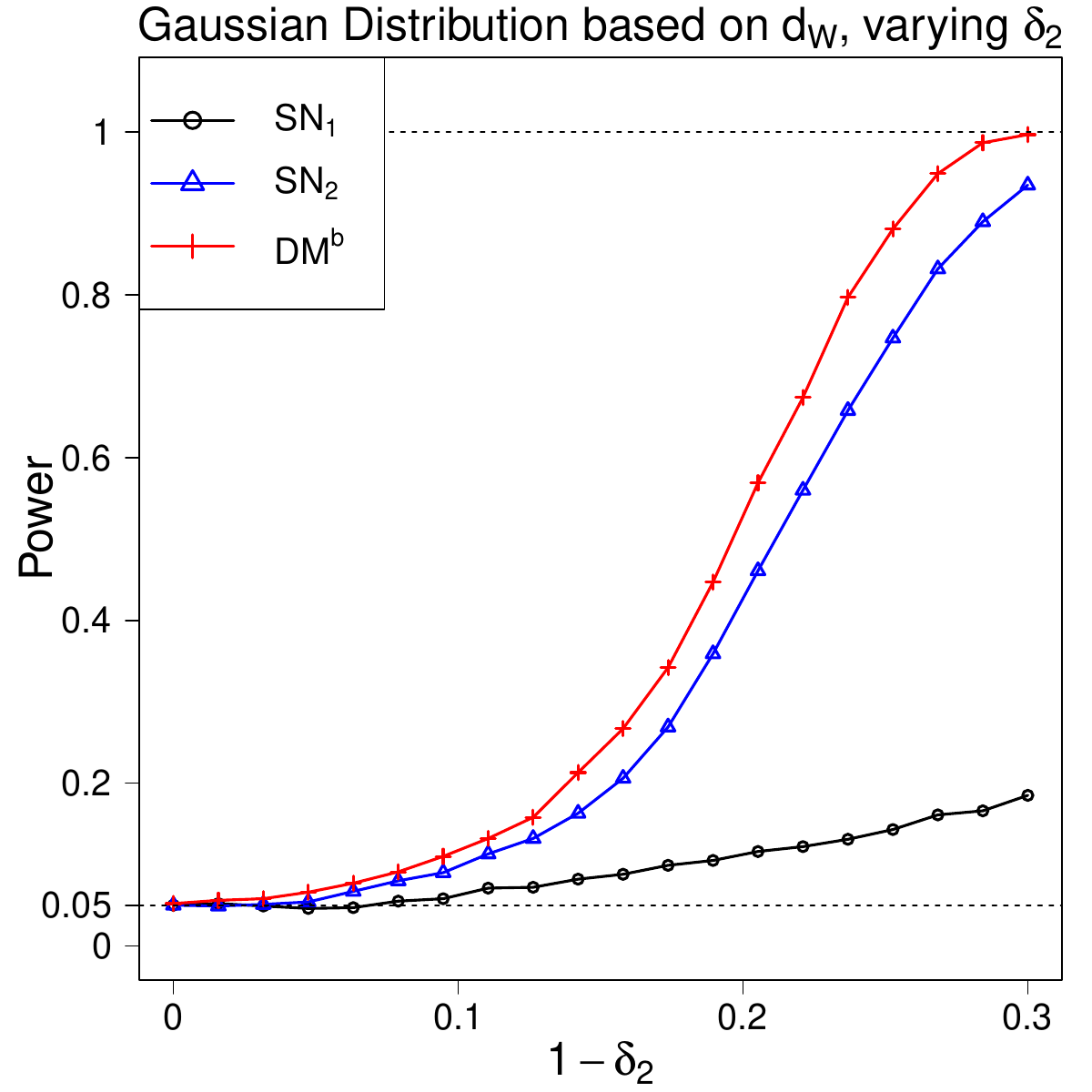}
	\end{subfigure}
	\begin{subfigure}{0.32\textwidth}
		\centering
		\includegraphics[width=1\textwidth]{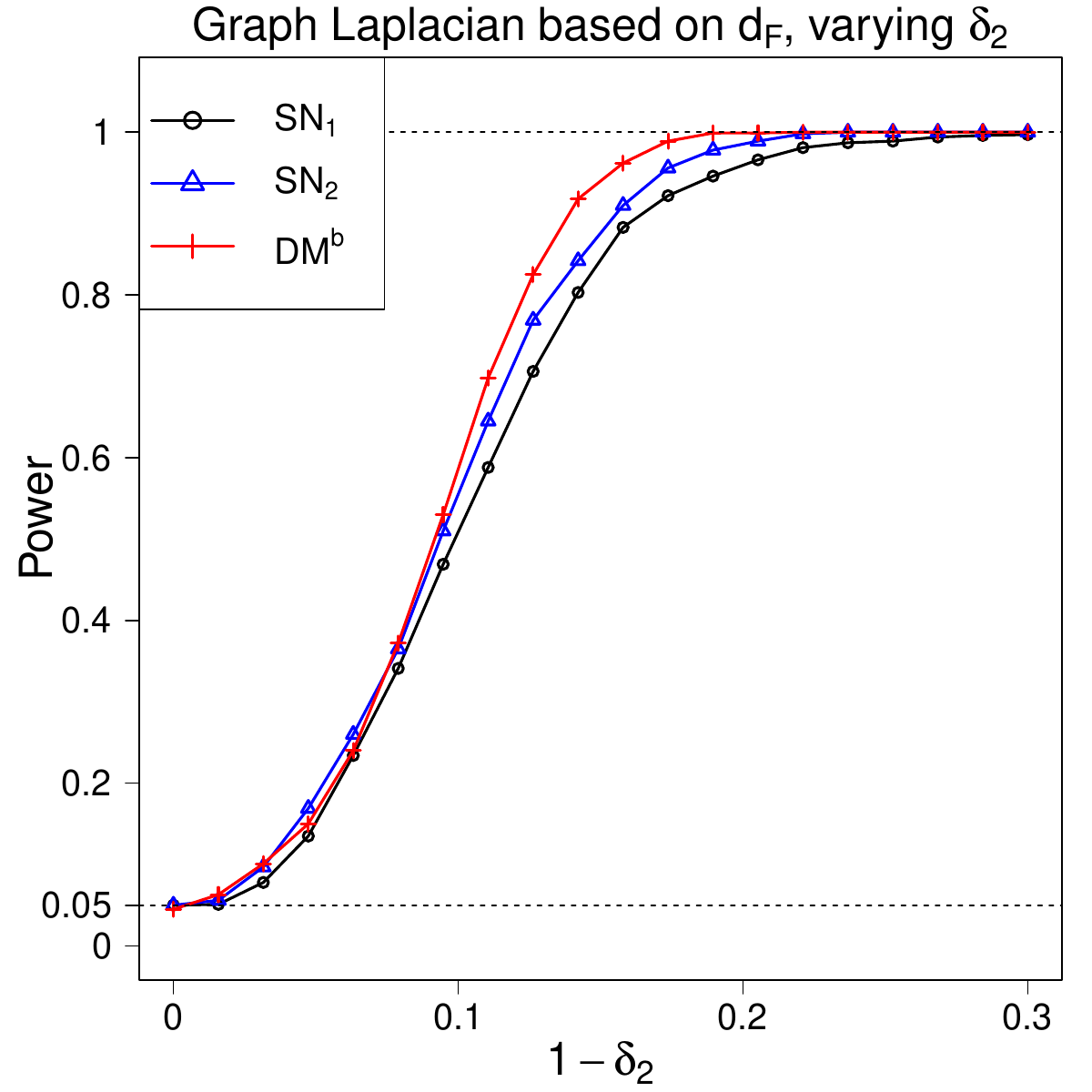}
	\end{subfigure}
	\begin{subfigure}{0.32\textwidth}
		\centering
		\includegraphics[width=1\textwidth]{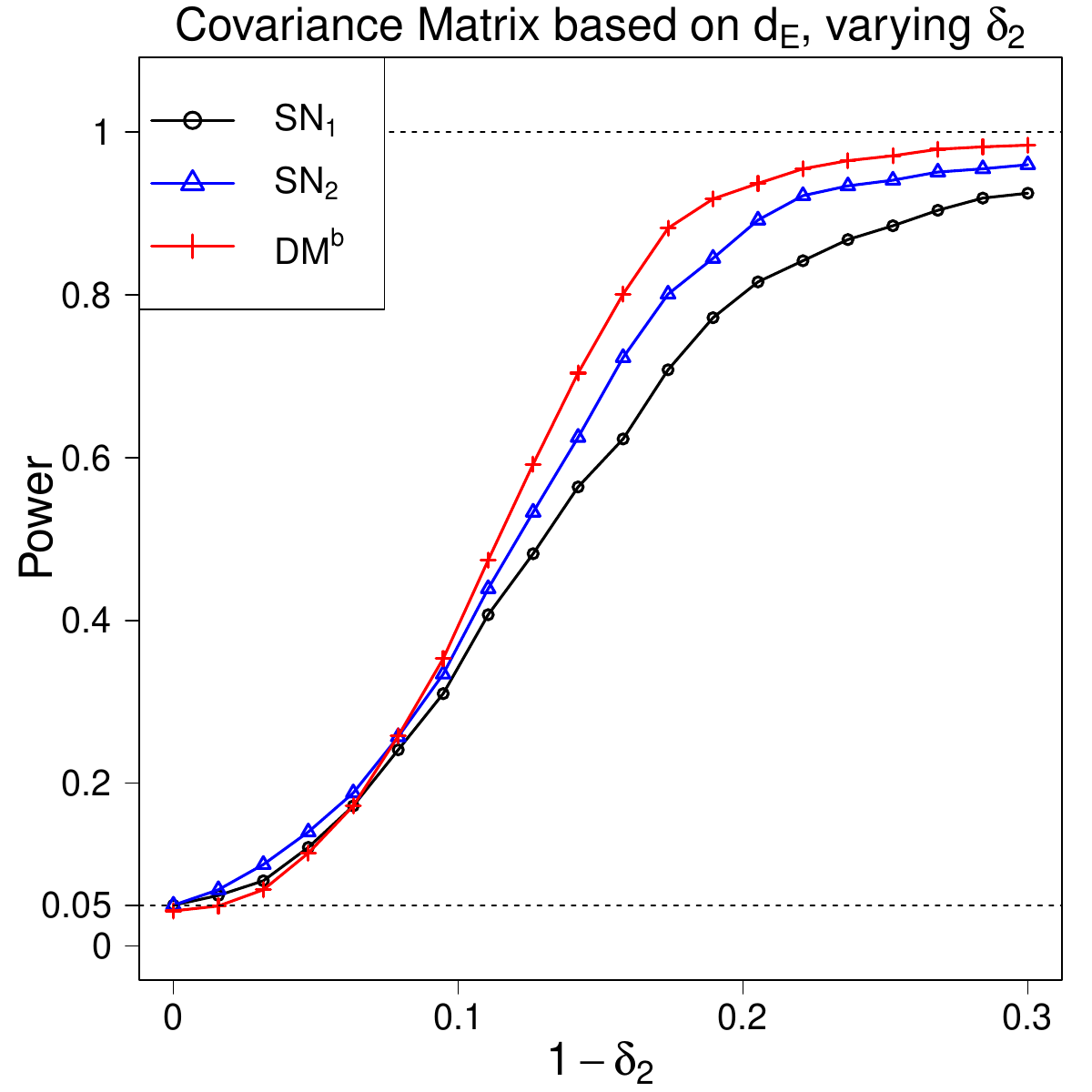}
	\end{subfigure}
	\caption{Size-Adjusted Power ($\times 100\%$) at $\alpha=5\%$, Change-Point Test for all three DGPs,  $n=400$, $\tau=0.5$, and $\rho=0.4$}
	\label{Fig:power_cpt}
\end{figure}

We further provide numerical evidence for the estimation accuracy by considering the alternative hypothesis of $\delta_1=1-\delta_2=0.3$ with true change-point location at $\tau=0.5$ for DGP (i)-(iii) in the main context. When varying sample size $n\in\{400,800,1600\}$, we find that for all DGPs, the  histograms 
of $\hat{\tau}$ (based on SN$_2$) plotted in  Figure \ref{Fig:cpt} get more concentrated around the truth $\tau=0.5$, when sample size increases, which is consistent with our theoretical consistency of $\hat{\tau}$.

\begin{figure}[H]
	\centering 
	\begin{subfigure}{0.32\textwidth}
		\centering
		\includegraphics[width=1\textwidth]{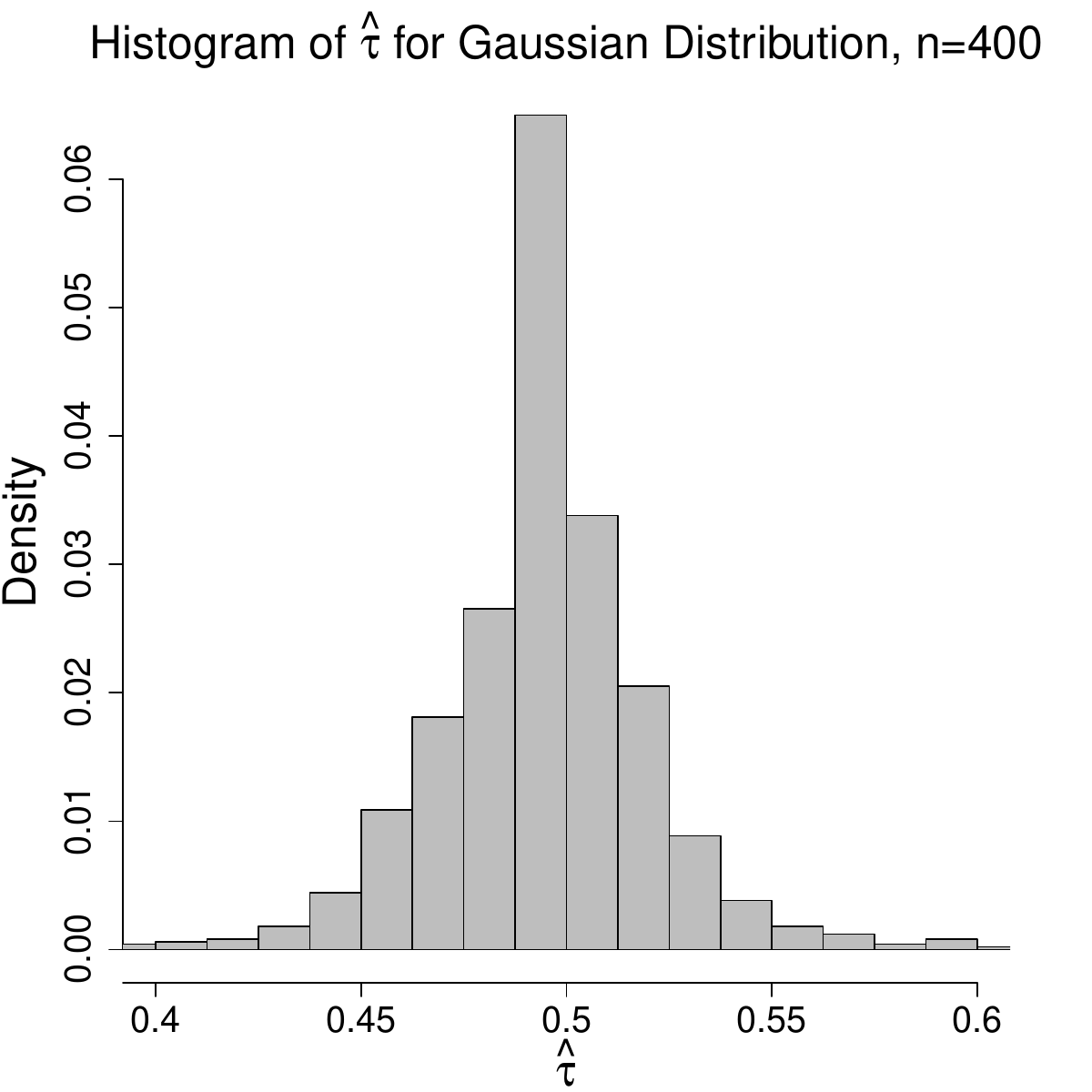}
	\end{subfigure}
	\begin{subfigure}{0.32\textwidth}
		\centering
		\includegraphics[width=1\textwidth]{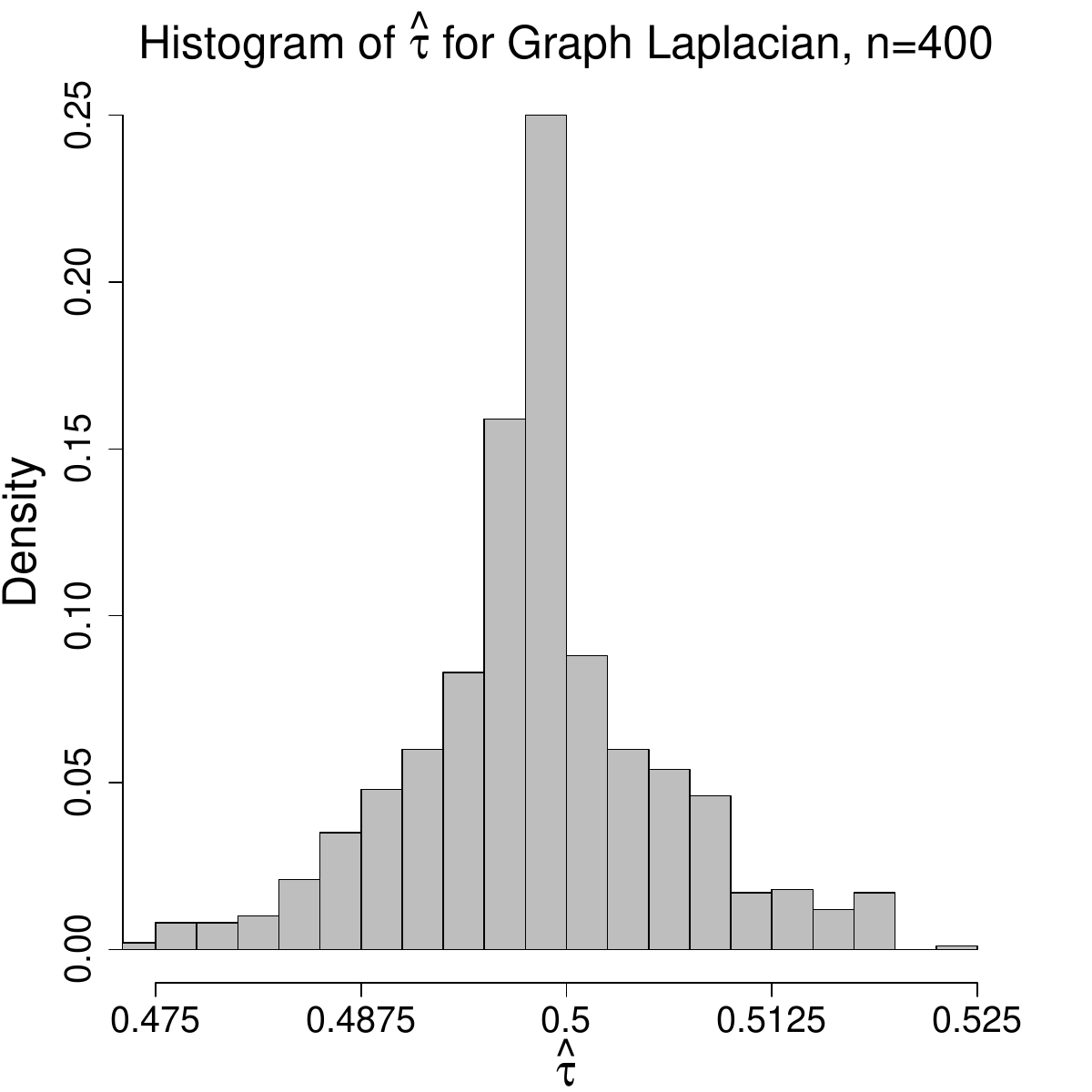}
	\end{subfigure}
	\begin{subfigure}{0.32\textwidth}
		\centering
		\includegraphics[width=1\textwidth]{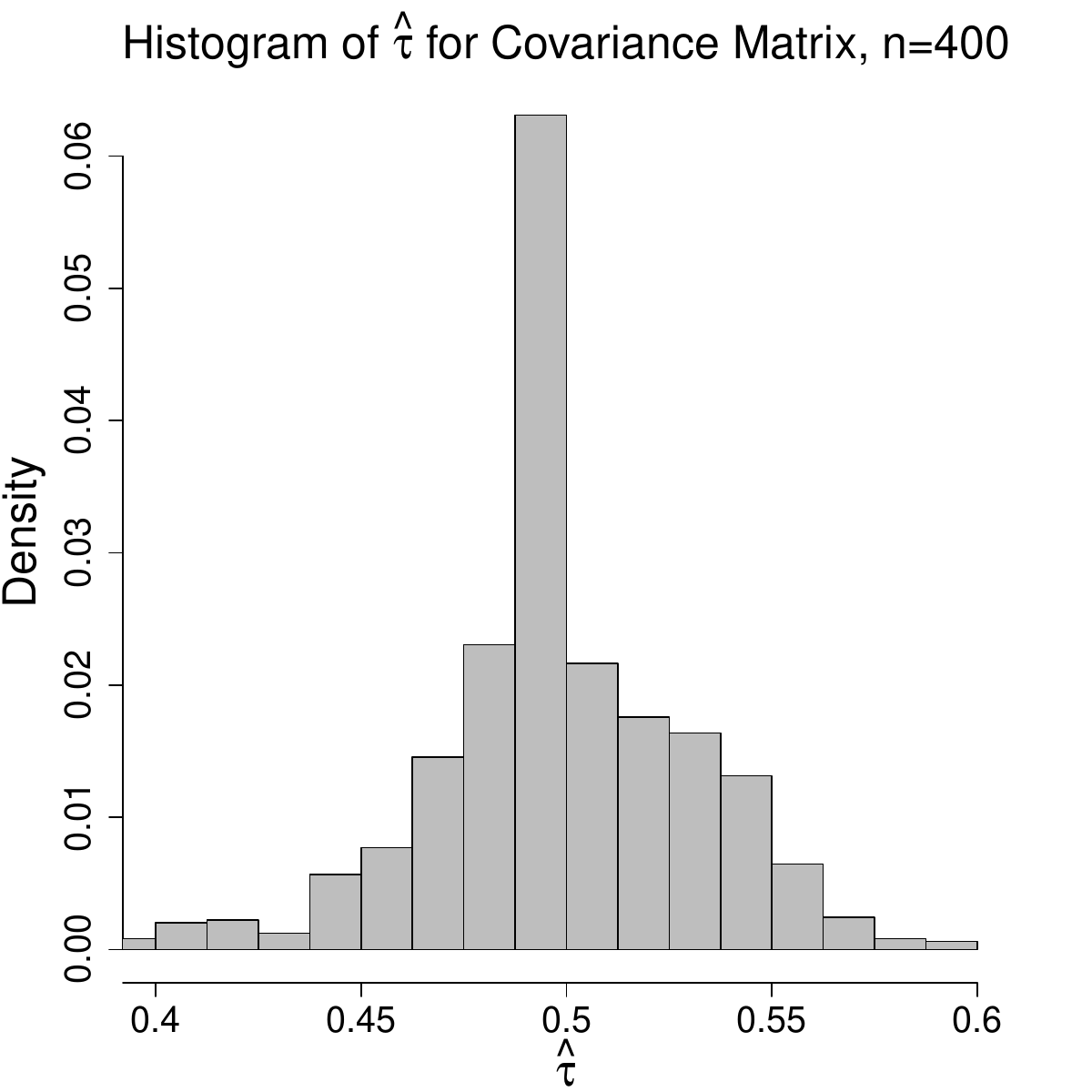}
	\end{subfigure}
	\begin{subfigure}{0.32\textwidth}
		\centering
		\includegraphics[width=1\textwidth]{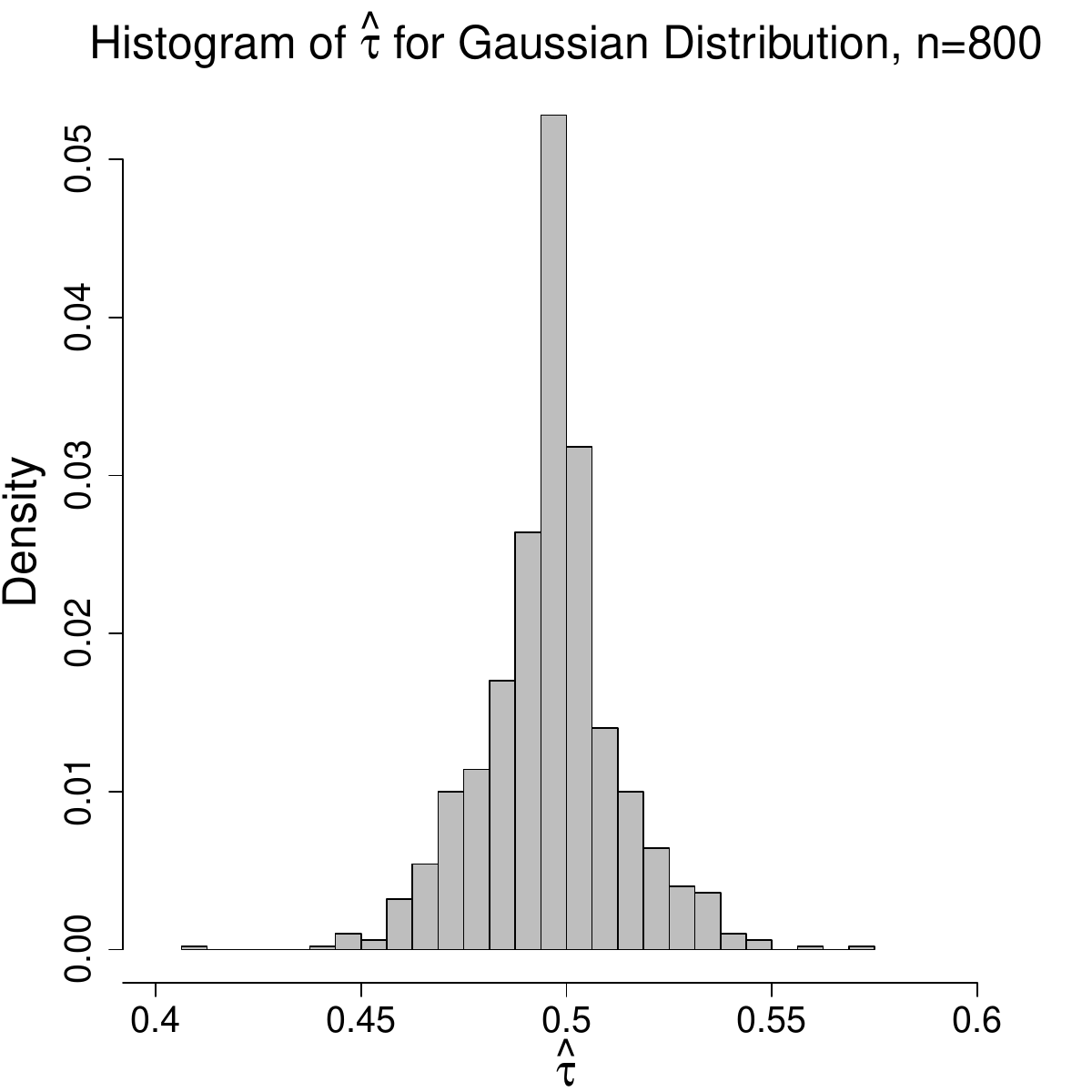}
	\end{subfigure}
	\begin{subfigure}{0.32\textwidth}
		\centering
		\includegraphics[width=1\textwidth]{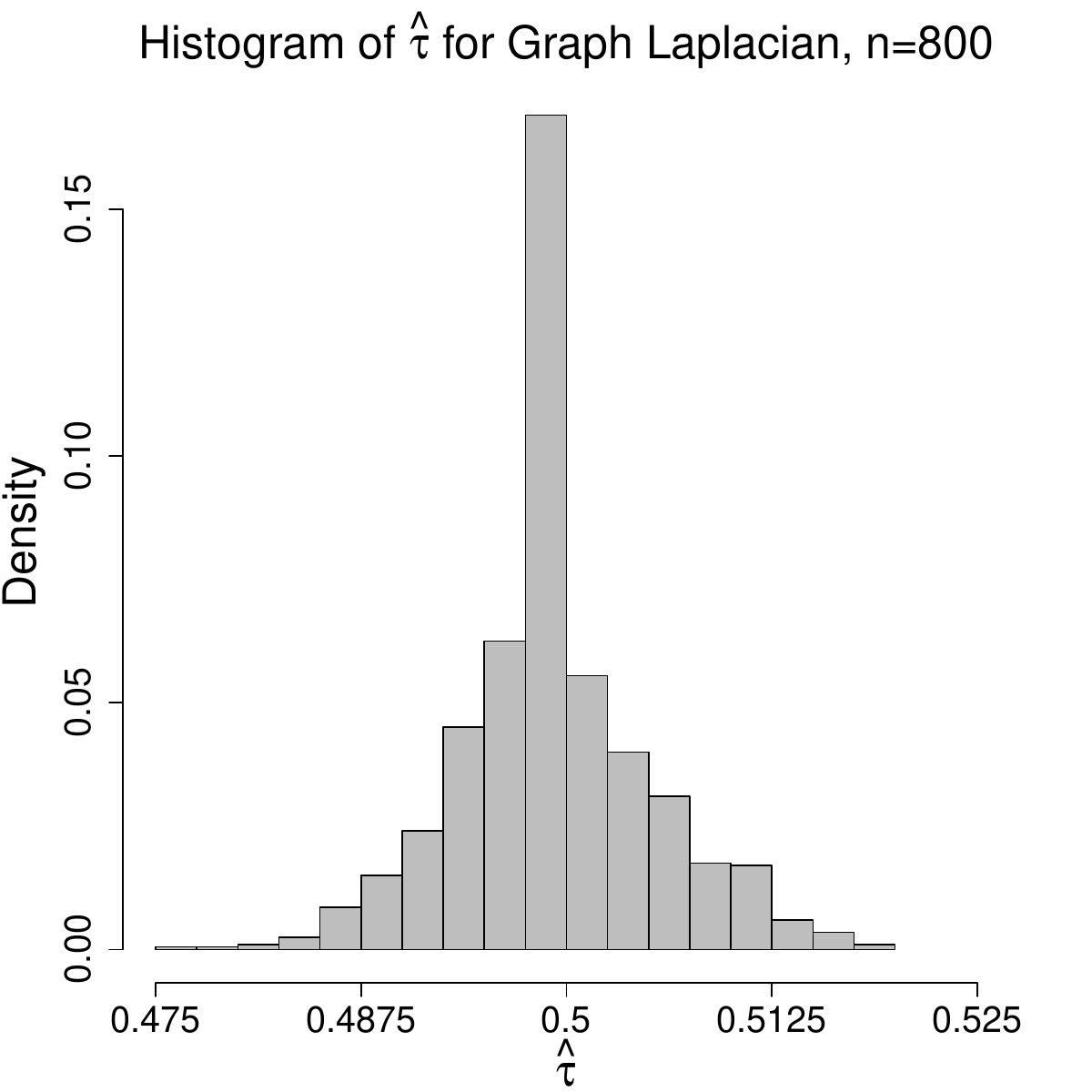}
	\end{subfigure}
	\begin{subfigure}{0.32\textwidth}
		\centering
		\includegraphics[width=1\textwidth]{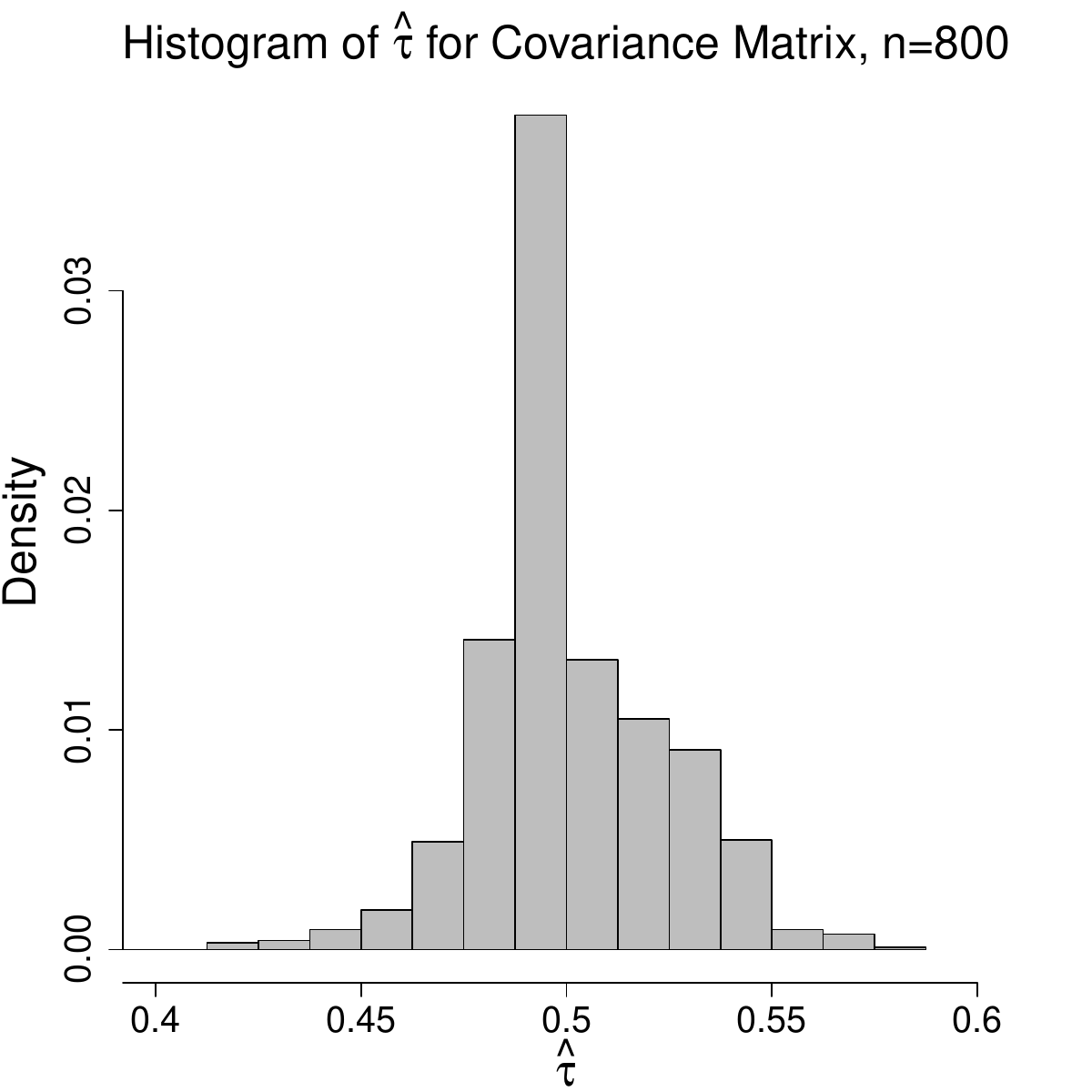}
	\end{subfigure}
	\begin{subfigure}{0.32\textwidth}
		\centering
		\includegraphics[width=1\textwidth]{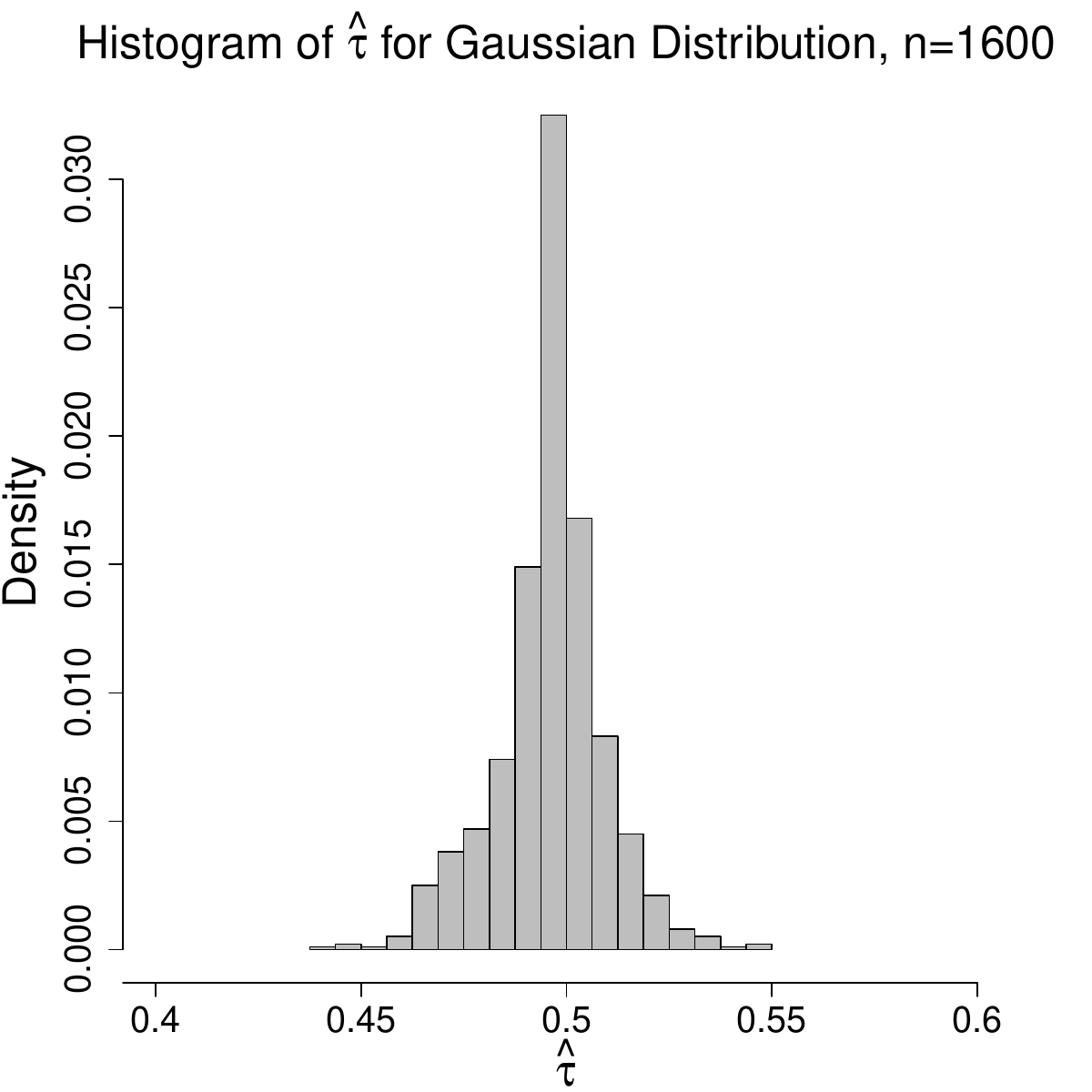}
	\end{subfigure}
	\begin{subfigure}{0.32\textwidth}
		\centering
		\includegraphics[width=1\textwidth]{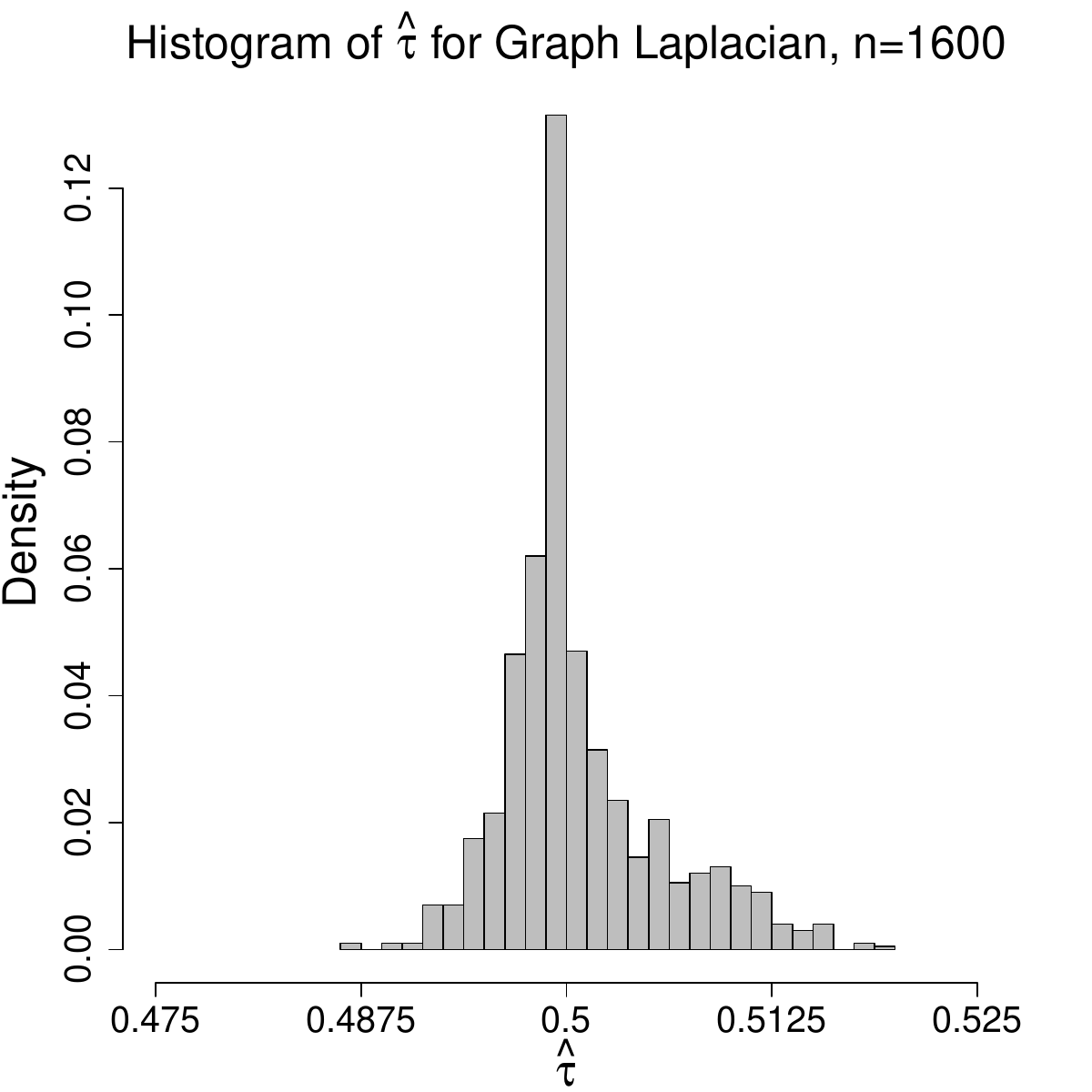}
	\end{subfigure}
	\begin{subfigure}{0.32\textwidth}
		\centering
		\includegraphics[width=1\textwidth]{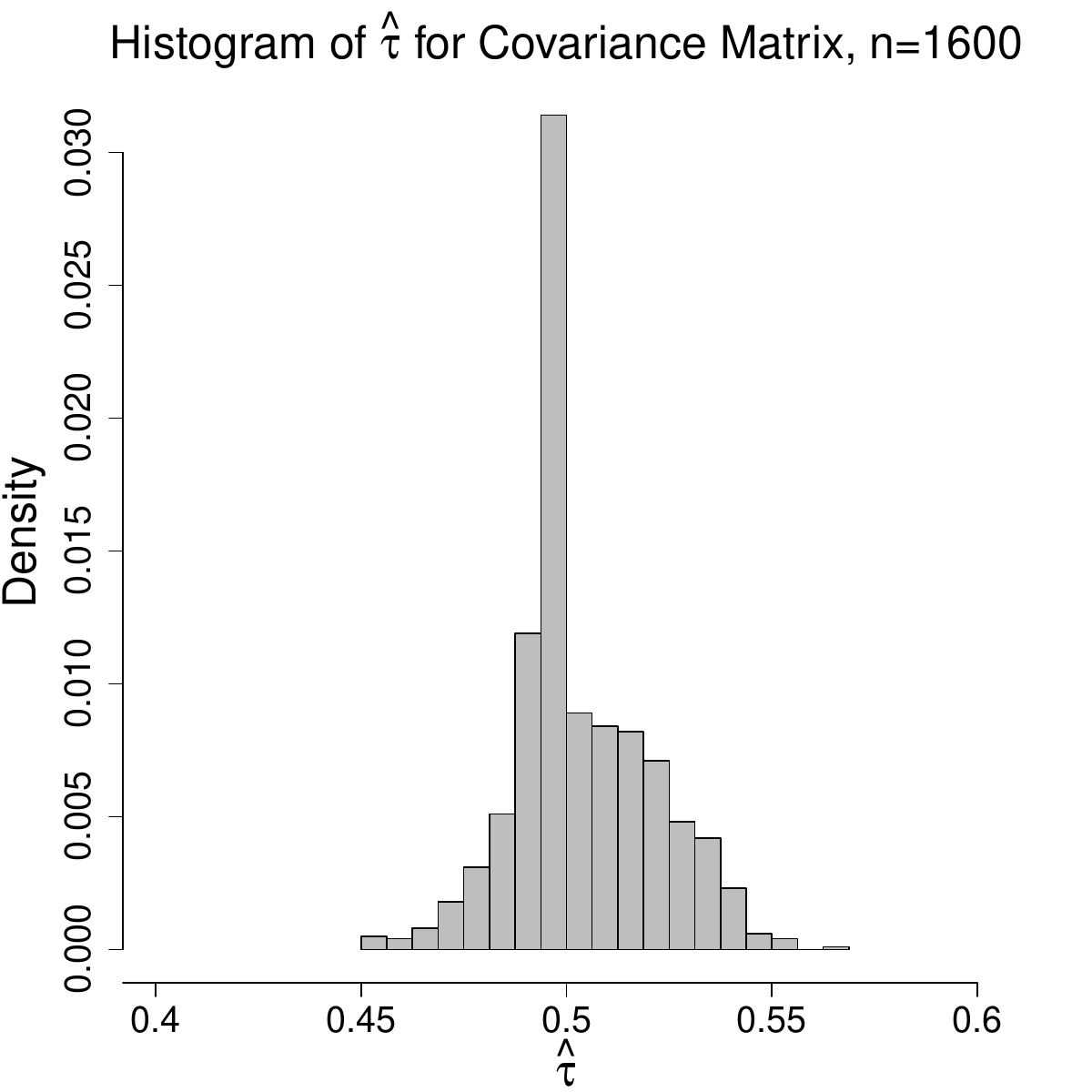}
	\end{subfigure}
	\caption{Histogram of Estimated Change-Point Locations for all three DGPs with $\delta_1=0.3, \delta_2=0.7$, $\tau=0.5$,  $\rho=0.4$. Upper: $n=400$. Middle: $n=800$. Lower: $n=1600$.}
	\label{Fig:cpt}
\end{figure}

\subsection{Multiple Change Point Detection}\label{sec:simu_WBS}
For simulations of multiple change point estimation, we consider non-Euclidean time series of length $n=500$ generated from the following two models. These models are the same as before, but reformulated for better presentation purpose. 
\begin{itemize}
	\item Gaussian univariate probability distribution: $Y_t=\mathcal{N}(\arctan (U_{t})+\delta_{t,1}, $ $ \delta_{t, 2}^2[\arctan(U_{t}^2)+1]^2)$.
	\item covariance matrix: $Y_t=(2I_3+Z_{t})(2I_3+Z_{t})^{\top}$ with $Z_t= \delta_{t,1}+\delta_{t,2}\arctan(U_{t}).$
\end{itemize}
Here, $U_{t}$ are generated according to the AR(1) process $U_t=\rho U_{t-1}+\epsilon_t$, $\epsilon_t\overset{i.i.d.}{\sim}\mathcal{N}(0,1).$ There are 3 change points at $t=110, 250$ and $370$. The changes point locations are reflected in the definitions of $\{ \delta_{t,1} \}$ and $\{\delta_{t,2}\}$, where
\begin{align*}
	& \delta_{t,1} = a_1 \mathbb{I}_{\{n \leq 110 \} }  + a_2 \mathbb{I}_{\{ 110 < n \leq 250 \} } + a_3 \mathbb{I}_{\{ 250 < n \leq 370 \} } + a_4 \mathbb{I}_{\{ 370 < n \leq 500 \} }, \\
	& \delta_{t,2} = b_1 \mathbb{I}_{\{ n \leq 110 \}}  + b_2 \mathbb{I}_{\{ 110 < n \leq 250 \}} + b_3 \mathbb{I}_{ \{ 250 < n \leq 370 \} } + b_4 \mathbb{I}_{ \{ 370 < n \leq 500 \} }.
\end{align*}
For each model, we consider 3 cases that are differentiated by the magnitudes of $a_i, b_i, i=1,2,3,4$. For the data generating model of Gaussian distributions, we  set
\begin{itemize}
	\item \mbox{[Case 1]} $(a_1, a_2, a_3, a_4) = (0, 0.7, 0, 0.8), (b_1, b_2, b_3, b_4) = (1, 1.5, 0.7, 1.4)$;
	\item \mbox{[Case 2]} $(a_1, a_2, a_3, a_4) = (0, 0.2, 0, 0.3), (b_1, b_2, b_3, b_4) = (0.5, 1.5, 0.4, 1.4)$;
	\item \mbox{[Case 3]} $(a_1, a_2, a_3, a_4) = (0, 0.5, 1.5, 3.3), (b_1, b_2, b_3, b_4) = (0.2, 1.5, 3.8, 6.5)$.
\end{itemize}
As for the data generating model of covariance matrices, we set \begin{itemize}
	\item \mbox{[Case 1]}  $(a_1, a_2, a_3, a_4) = (0, 1.2, 0, 1.3), (b_1, b_2, b_3, b_4) = (0.8, 1.5, 0.7, 1.6)$;
	\item \mbox{[Case 2]} $(a_1, a_2, a_3, a_4) = (0, 1, 0, 1)$, $(b_1, b_2, b_3, b_4) = (0.5, 2, 0.4, 1.9)$;
	\item \mbox{[Case 3]} $(a_1, a_2, a_3, a_4) = (0, 2, 3.9, 5.7)$, $(b_1, b_2, b_3, b_4) = (0.2, 0.7, 1.3, 2)$.
\end{itemize}  
Cases 1 and 2 correspond to non-monotone changes and Case 3 considers the monotone change. 
Here, our method described in Section \ref{wbs} is denoted as WBS-$SN_2$ (that is, a combination of WBS and our $SN_2$ test statistic). 
The method DM in conjunction with binary segmentation, referred as BS-DM, is proposed in \cite{dubey2019frechet} and included in this simulation for comparison purpose. In addition, our statistic $SN_2$ 
in combination  with binary segmentation, denoted as BS-$SN_2$, is implemented and included as well. The critical values for BS-DM and BS-$SN_2$ are obtained from their asymptotic distributions respectively.

The simulation results are shown in Table \ref{tab:wbs1}, where we present the ARI (adjusted rand index) and number of detected change points for two dependence levels $\rho=0.3, 0.6$. Note that ARI $\in [0,1]$
measures  the accuracy of change point estimation and  larger ARI corresponds to more accurate estimation.
We summarize the main findings as follows. (a) WBS-$SN_2$ is the best method in general as it can accommodate both monotonic and non-monotoic changes, and appears quite robust to temporal dependence.  For Cases 1 and 2, we see that BS-$SN_2$ does not work for non-monotone changes,  due to the use of binary segmentation procedure. (b) BS-DM tends to have more false discoveries comparing to the other methods. This is expected, as method DM is primarily proposed for i.i.d data sequence and exhibit serious oversize when there is temporal dependence in Section~\ref{sec:simu_CP}. (c) When we increase $\rho=0.3$ to $\rho=0.6$, the performance of WBS-$SN_2$ appears quite stable for both  distributional time series and  covariance matrix time series.

\begin{table}[H] \centering 
	\caption{Simulation results for multiple change point for sequential Gaussian distributions and covariance matrices based on 200 Monte Carlo repetitions.} 
	\label{tab:wbs1} \resizebox{\textwidth}{!}{
		\begin{tabular}{@{\extracolsep{5pt}} ccccccccccc} 
			\\[-1.8ex]\hline 
			\hline \\[-1.8ex] 
			\multirow{2}{*}{Model} &\multirow{2}{*}{Case} & \multirow{2}{*}{$\rho$} & \multirow{2}{*}{Method} & \multicolumn{6}{c}{\# of change points detected} & \multirow{2}{*}{ARI}  \\  \cline{5-9}
			&&  &  & $0$ & $1$ & $2$ & $3$ & $4$ & $\geq 5$ &   \\ 
			\hline \\[-1.8ex] 
			\multirow{18}{*}{$\begin{array}{c}
					\text{Gaussian}   \\
					\text{Distribution}
				\end{array}$}&\multirow{6}{*}{1} & \multirow{3}{*}{0.3} & $\text{WBS-}SN_2$ & $0$ & $0$ & $0$ & $178$ & $22$ & $0$ & $0.971$ \\ 
			&&  & $\text{BS-}SN_2$ & $200$ & $0$ & $0$ & $0$ & $0$ & $0$ & $0$ \\ 
			& &  & BS-DM & $0$ & $0$ & $0$ & $23$ & $15$ & $162$ & $0.836$ \\  \cline{3-11}
			& & \multirow{3}{*}{0.6} & $\text{WBS-}SN_2$ & $0$ & $0$ & $7$ & $116$ & $54$ & $23$ & $0.907$ \\ 
			& & & $\text{BS-}SN_2$ & $200$ & $0$ & $0$ & $0$ & $0$ & $0$ & $0$ \\ 
			& &  & BS-DM & $0$ & $0$ & $0$ & $2$ & $0$ & $198$ & $0.516$ \\  \cline{2-11}
			&\multirow{6}{*}{2} & \multirow{3}{*}{0.3} & $\text{WBS-}SN_2$ & $0$ & $0$ & $0$ & $169$ & $28$ & $3$ & $0.980$ \\ 
			& &  & $\text{BS-}SN_2$ & $200$ & $0$ & $0$ & $0$ & $0$ & $0$ & $0$ \\ 
			& &  & BS-DM & $0$ & $0$ & $0$ & $20$ & $14$ & $166$ & $0.834$ \\ \cline{3-11}
			& & \multirow{3}{*}{0.6} & $\text{WBS-}SN_2$ & $0$ & $0$ & $0$ & $101$ & $60$ & $39$ & $0.943$ \\  
			& &  & $\text{BS-}SN_2$ & $200$ & $0$ & $0$ & $0$ & $0$ & $0$ & $0$ \\ 
			& &  & BS-DM & $0$ & $0$ & $0$ & $1$ & $1$ & $198$ & $0.526$ \\  \cline{2-11}
			&\multirow{6}{*}{3} & \multirow{3}{*}{0.3} & $\text{WBS-}SN_2$ & $0$ & $0$ & $0$ & $172$ & $27$ & $1$ & $0.981$ \\ 
			& &  & $\text{BS-}SN_2$ & $0$ & $0$ & $59$ & $82$ & $18$ & $41$ & $0.808$ \\ 
			& &  & BS-DM & $0$ & $0$ & $0$ & $39$ & $25$ & $136$ & $0.893$ \\  \cline{3-11}
			& & \multirow{3}{*}{0.6} & $\text{WBS-}SN_2$ & $0$ & $0$ & $0$ & $112$ & $65$ & $23$ & $0.955$ \\ 
			& &  & $\text{BS-}SN_2$ & $0$ & $0$ & $45$ & $64$ & $41$ & $50$ & $0.828$ \\ 
			& & & BS-DM & $0$ & $0$ & $0$ & $2$ & $4$ & $194$ & $0.666$ \\ 
			\hline \\[-1.8ex] 
			\multirow{18}{*}{ $ \begin{array}{c}
					\text{Covariance}   \\
					\text{Matrix}
				\end{array} $ } &\multirow{6}{*}{1} & \multirow{3}{*}{0.3}& $\text{WBS-}SN_2$ & $0$ & $0$ & $0$ & $200$ & $0$ & $0$ & $0.991$ \\ 
			& &  & $\text{BS-}SN_2$ & $200$ & $0$ & $0$ & $0$ & $0$ & $0$ & $0$ \\ 
			& &  & BS-DM & $0$ & $0$ & $0$ & $10$ & $9$ & $181$ & $0.807$ \\ \cline{3-11}
			& & \multirow{3}{*}{0.6} & $\text{WBS-}SN_2$ & $0$ & $0$ & $8$ & $192$ & $0$ & $0$ & $0.974$ \\ 
			& &  & $\text{BS-}SN_2$ & $199$ & $1$ & $0$ & $0$ & $0$ & $0$ & $0.002$ \\ 
			& &  & BS-DM & $0$ & $0$ & $0$ & $0$ & $0$ & $200$ & $0.392$ \\  \cline{2-11}
			&\multirow{6}{*}{2} & \multirow{3}{*}{0.3} & $\text{WBS-}SN_2$ & $0$ & $0$ & $21$ & $178$ & $1$ & $0$ & $0.954$ \\ 
			& &  & $\text{BS-}SN_2$ & $200$ & $0$ & $0$ & $0$ & $0$ & $0$ & $0$ \\ 
			& &  & BS-DM & $6$ & $3$ & $1$ & $4$ & $12$ & $174$ & $0.722$ \\  \cline{3-11}
			& & \multirow{3}{*}{0.6} & $\text{WBS-}SN_2$ & $0$ & $20$ & $82$ & $97$ & $1$ & $0$ & $0.800$ \\ 
			& &  & $\text{BS-}SN_2$ & $200$ & $0$ & $0$ & $0$ & $0$ & $0$ & $0$ \\ 
			& &  & BS-DM & $5$ & $0$ & $0$ & $0$ & $0$ & $195$ & $0.510$ \\  \cline{2-11}
			&\multirow{6}{*}{3} & \multirow{3}{*}{0.3}  & $\text{WBS-}SN_2$ & $0$ & $0$ & $131$ & $69$ & $0$ & $0$ & $0.806$ \\ 
			& &  & $\text{BS-}SN_2$ & $24$ & $0$ & $124$ & $3$ & $46$ & $3$ & $0.468$ \\ 
			& &  & BS-DM & $0$ & $0$ & $1$ & $5$ & $10$ & $184$ & $0.781$ \\  \cline{3-11}
			& & \multirow{3}{*}{0.6}  & $\text{WBS-}SN_2$ & $0$ & $2$ & $181$ & $16$ & $1$ & $0$ & $0.728$ \\ 
			& &  & $\text{BS-}SN_2$ & $93$ & $1$ & $71$ & $18$ & $17$ & $0$ & $0.310$ \\ 
			& &  & BS-DM & $0$ & $0$ & $0$ & $0$ & $0$ & $200$ & $0.437$ \\ 
			\hline \\[-1.8ex] 
	\end{tabular} }
\end{table}

\section{Applications}\label{sec:app}

In this section, we present two real data illustrations, one for two sample testing and the other  for change-point detection. Both datasets are in the form of non-Euclidean time series and neither seems to be analyzed before by using  techniques that take into account unknown temporal dependence. 

\subsection{Two sample tests}

\textbf{Mortality data.} Here we are interested in comparing the longevity of people living in different countries of Europe. From the Human Mortality Database (\url{https://www.mortality.org/Home/Index}), we can obtain a time series that consists of yearly age-at-death distributions for each country.  We shall focus on distributions for female from year 1960 to 2015 and there are 26 countries included in the analysis after exclusion of countries with missing data. Pair-wise two sample tests between the included countries are performed using our statistic $D_2$ to understand the similarity of age-at-death distributions between different countries. The resulting $p$-value matrix is plotted in Figure \ref{fig:mort} (left). 

To better present the testing results and gain more insights, we define the dissimilarity between two given countries by subtracting each $p$-value from 1. Treating these dissimilarities as ``distances", we apply multidimensional scaling to ``project" each country onto two dimensional plane for visualization.  See Figure \ref{fig:mort} (right) for the plot of ``projected" countries. It appears that several west European countries, including UK, Belgium, Luxembourg, Ireland, and Austria, and Denmark, form a cluster; whereas several central and eastern European countries, including Poland, Latvia, Russian, Bulgaria, Lithuania and Czechia share similar distributions. We suspect the similarity in Mortality distribution is much related to the similarity in their economic development and healthcare system, less dependent on the geographical locations.



\begin{figure}[h]
	\centering
	\begin{subfigure}{0.49\textwidth}
		\centering
		\includegraphics[width=1\textwidth]{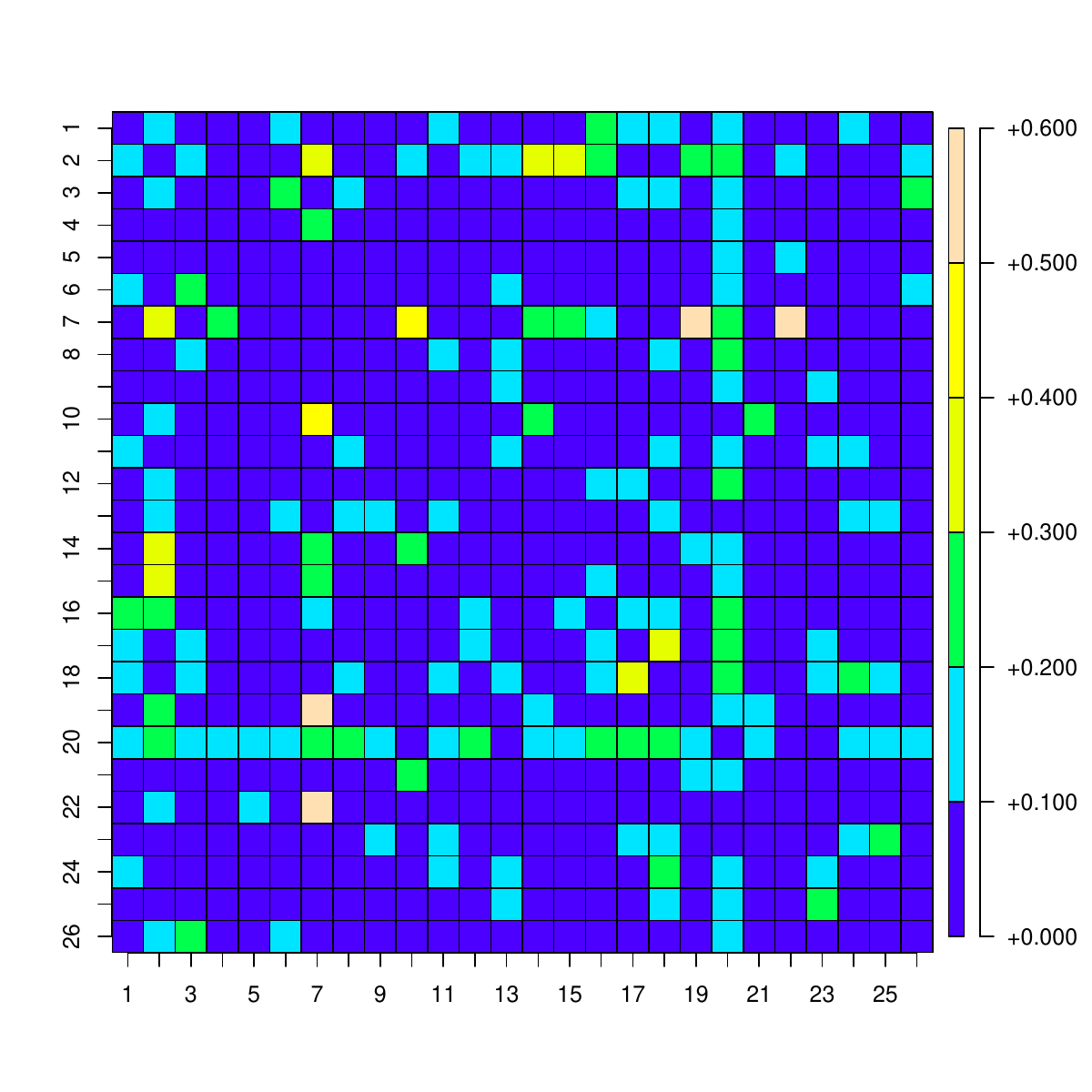}
	\end{subfigure}
	\begin{subfigure}{0.49\textwidth}
		\centering
		\includegraphics[width=1\textwidth]{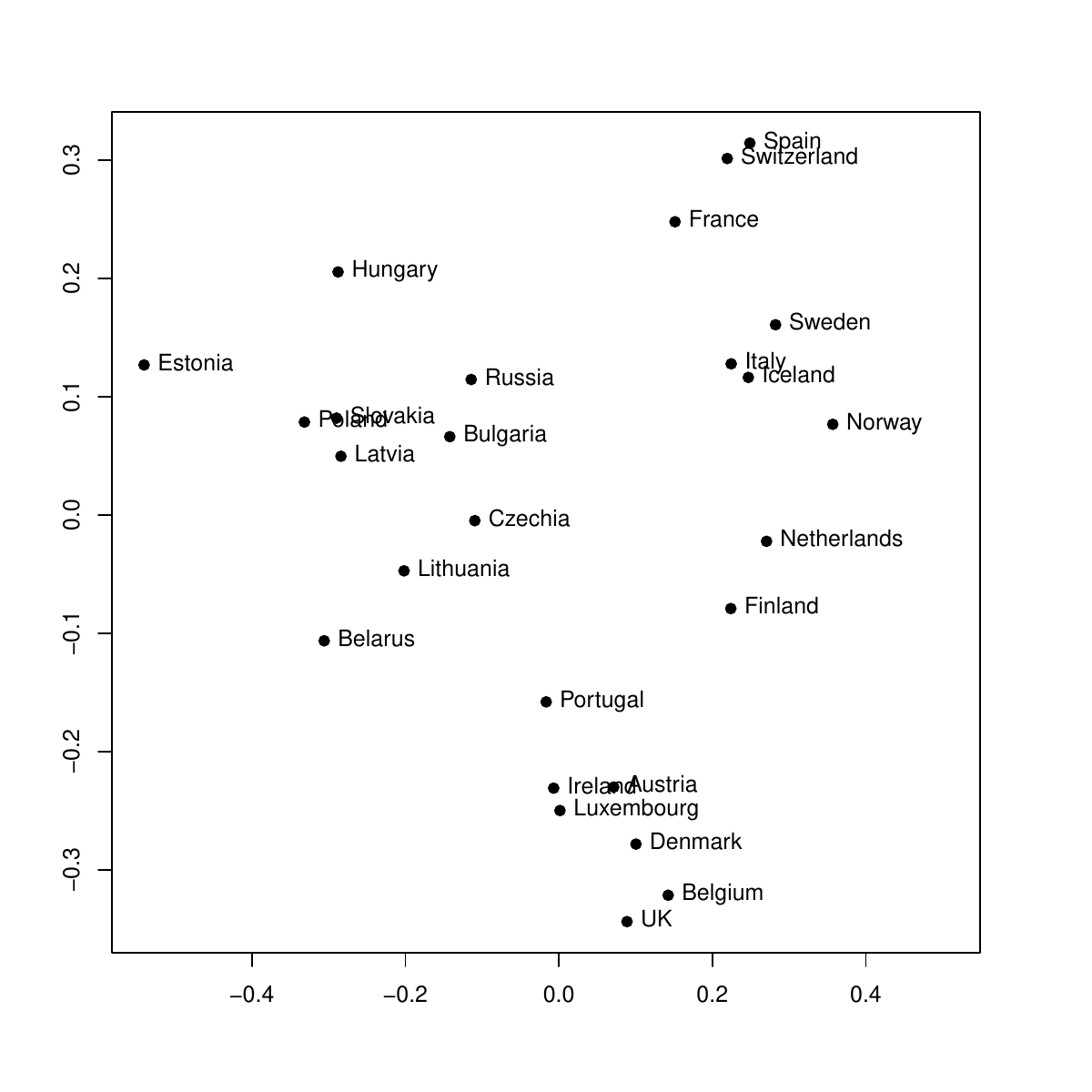}
	\end{subfigure}
	\caption{Application to Mortality data. The left figure is the plot of $p$-value matrix, where the 26 countries, Austria, Belarus, Belgium, Bulgaria, Czechia, Denmark, Estonia, Finland, France, Hungary, Iceland, Ireland, Italy, Latvia, Lithuania, Luxembourg, Netherlands, Norway, Poland, Portugal, Russia, Slovakia, Spain, Sweden, Switzerland, UK, are labeled by $1, 2, \dots, 26$ respectively. The right figure is the plot of points from multidimensional scaling with dissimilarity between any countries defined as subtracting the corresponding $p$-value from 1.}
	\label{fig:mort}
\end{figure}

\subsection{Change point detection}


\noindent \textbf{Cryptocurrency data.} Detecting change points in the Pearson correlation matrices for a set of interested cryptocurrencies can uncover structural breaks in the correlation of these cryptocurrencies and can play an important role in the investors' investment decisions. Here, we construct the daily Pearson correlation matrices from minute prices of Bitcoin, Doge coin, Cardano, Monero and Chainlink  for year 2021. The cryptocurrency data can be downloaded at \url{https://www.cryptodatadownload.com/analytics/correlation-heatmap/}. See Figure \ref{fig:crypto} for the plot of time series of pairwise correlations. Three methods, namely, our $SN_2$ test combined with WBS (WBS-$SN_2$), $SN_2$ test combined with binary segmentation (BS-$SN_2$), and   DM test of \cite{dubey2019frechet} in conjunction with  binary segmentation  (BS-DM),  are  applied to detect potential change points for this time series, 

Method WBS-$SN_2$ detects an abrupt change on day 2021-05-17 and method BS-$SN_2$ detects a change point on day 2021-04-29. By comparison, more than 10 change points are detected by BS-DM and we suspect that many of them are false discoveries (see Section \ref{sec:simu_WBS} for simulation evidence of BS-DM's tendency of over-detection). The change point in  mid-May 2021 is well expected and  corresponds to a major crush in crypto market that wiped out 1 trillion dollars. The major causes of this crush are  the withdrawal of Tesla's commitment to accept Bitcoin as payment and warnings regarding cryptocurrency sent by Chinese central bank to the financial institutes and business in China. Since this major crush, the market has been dominated by negative sentiments and fear for a recession. We refer the following CNN article for some discussions about this crush \url{https://www.cnn.com/2021/05/22/investing/crypto-crash-bitcoin-regulation/index.html}.




\begin{figure}[h]
	\centering
	\begin{tikzpicture}
		\matrix (m) [row sep = 0em, column sep = 0em]{ 
			\node (p11) {Bitcoin}; & \node (p12) {\includegraphics[scale=0.12]{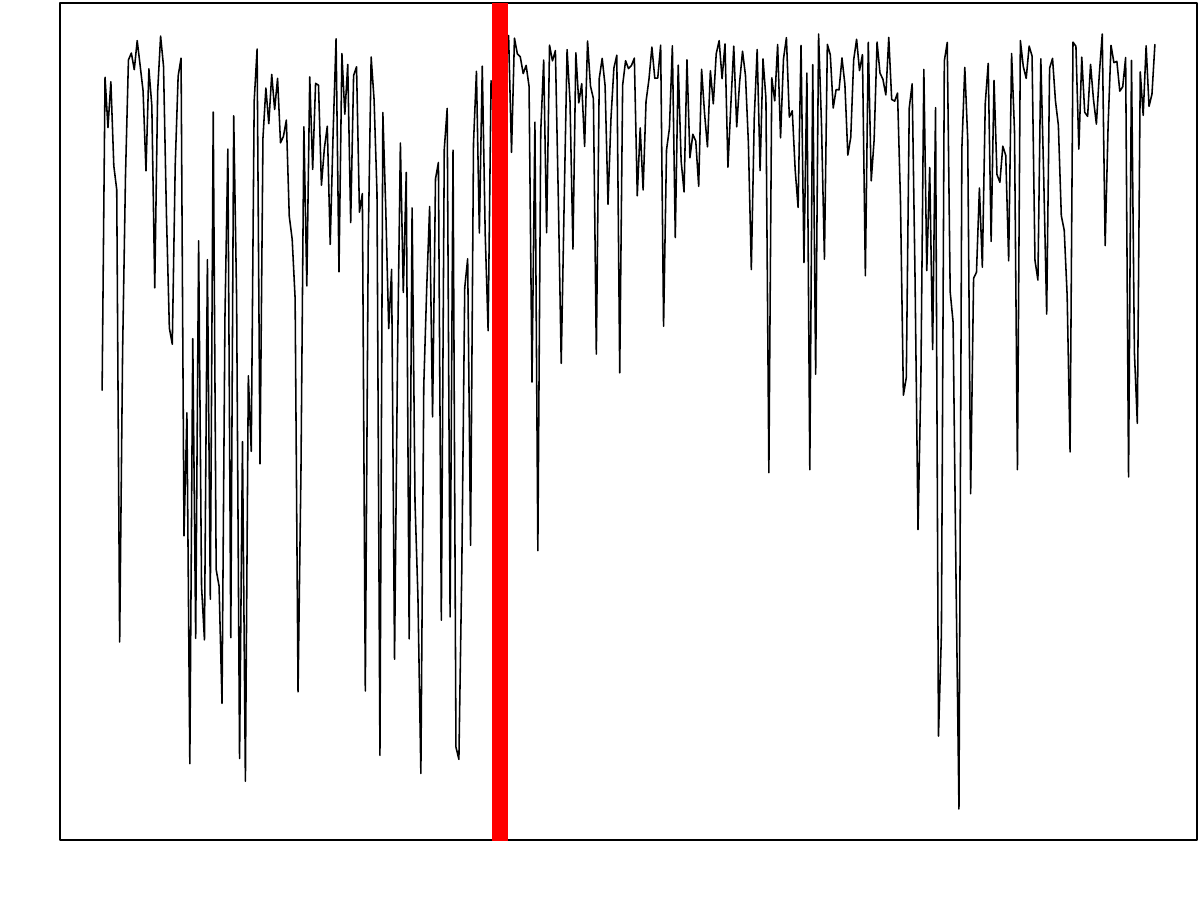}}; &
			\node (p13) {\includegraphics[scale=0.12]{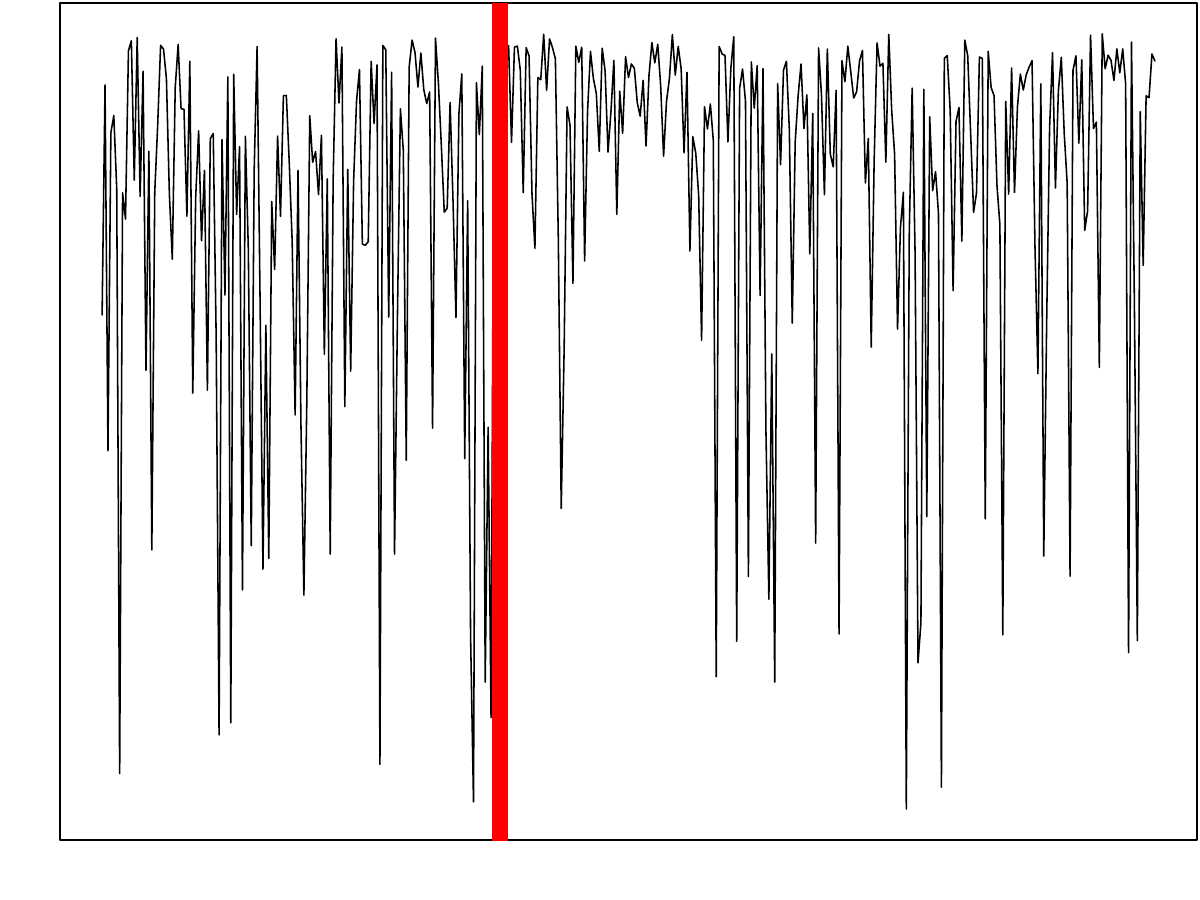}};
			& \node (p14) {\includegraphics[scale=0.12]{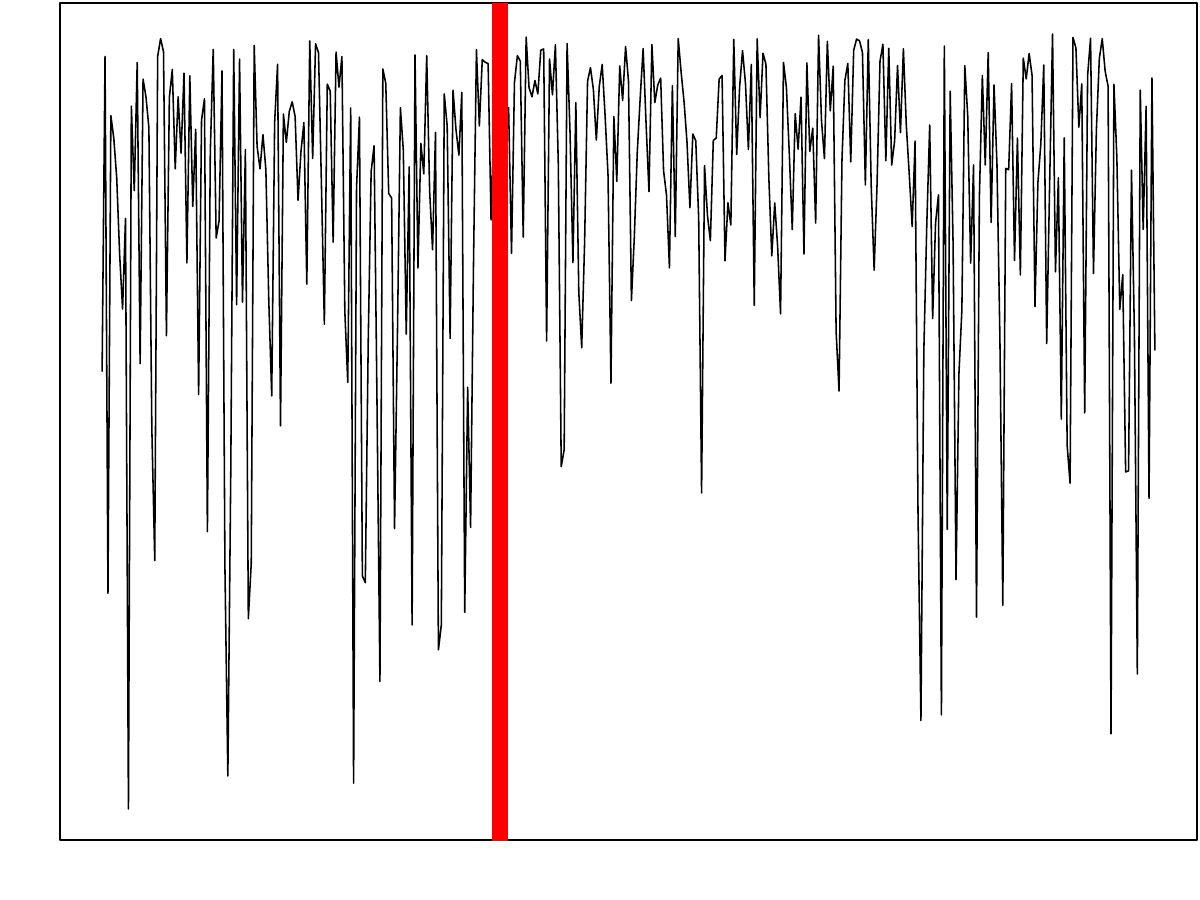}};  & \node (p15) {\includegraphics[scale=0.12]{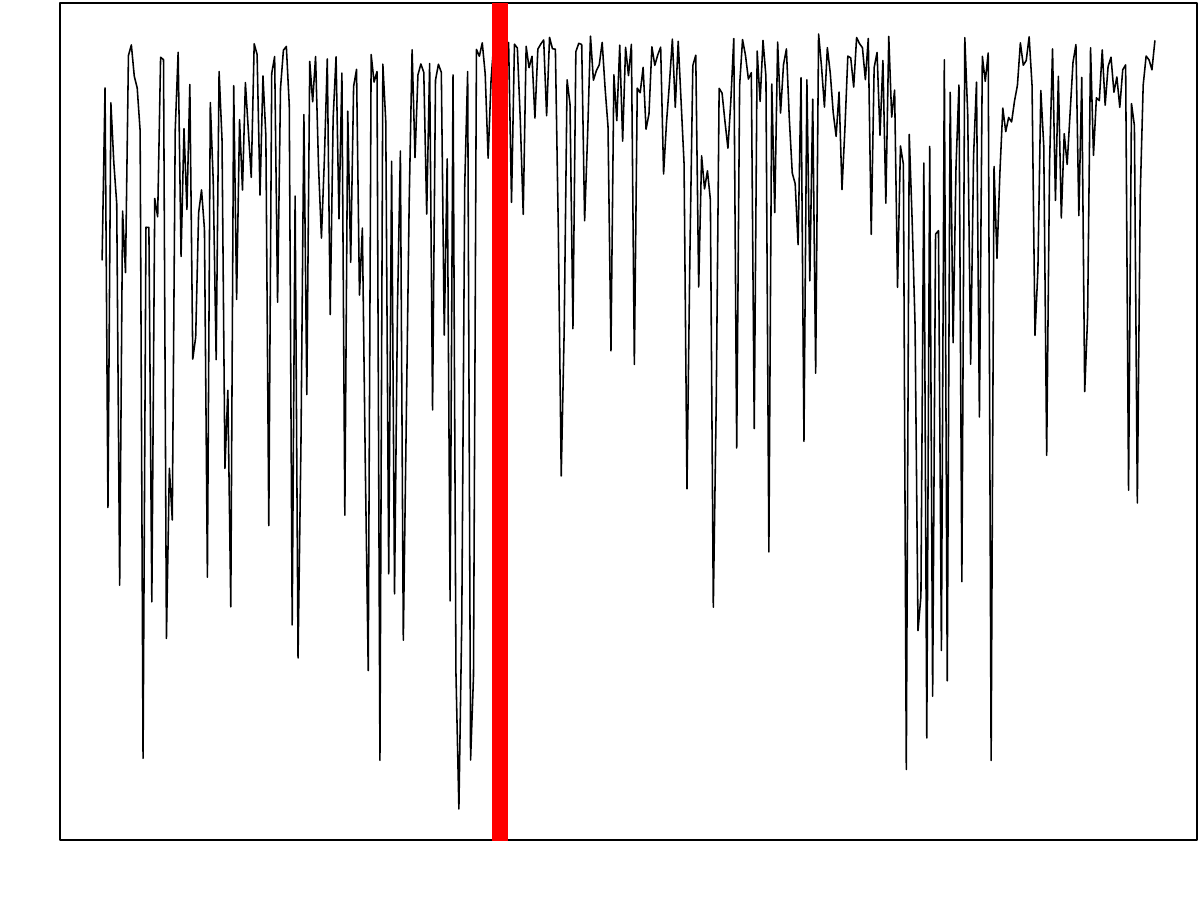}}; 
			\\ 
			& \node (p22) {Dogecoin}; &
			\node (p13) {\includegraphics[scale=0.12]{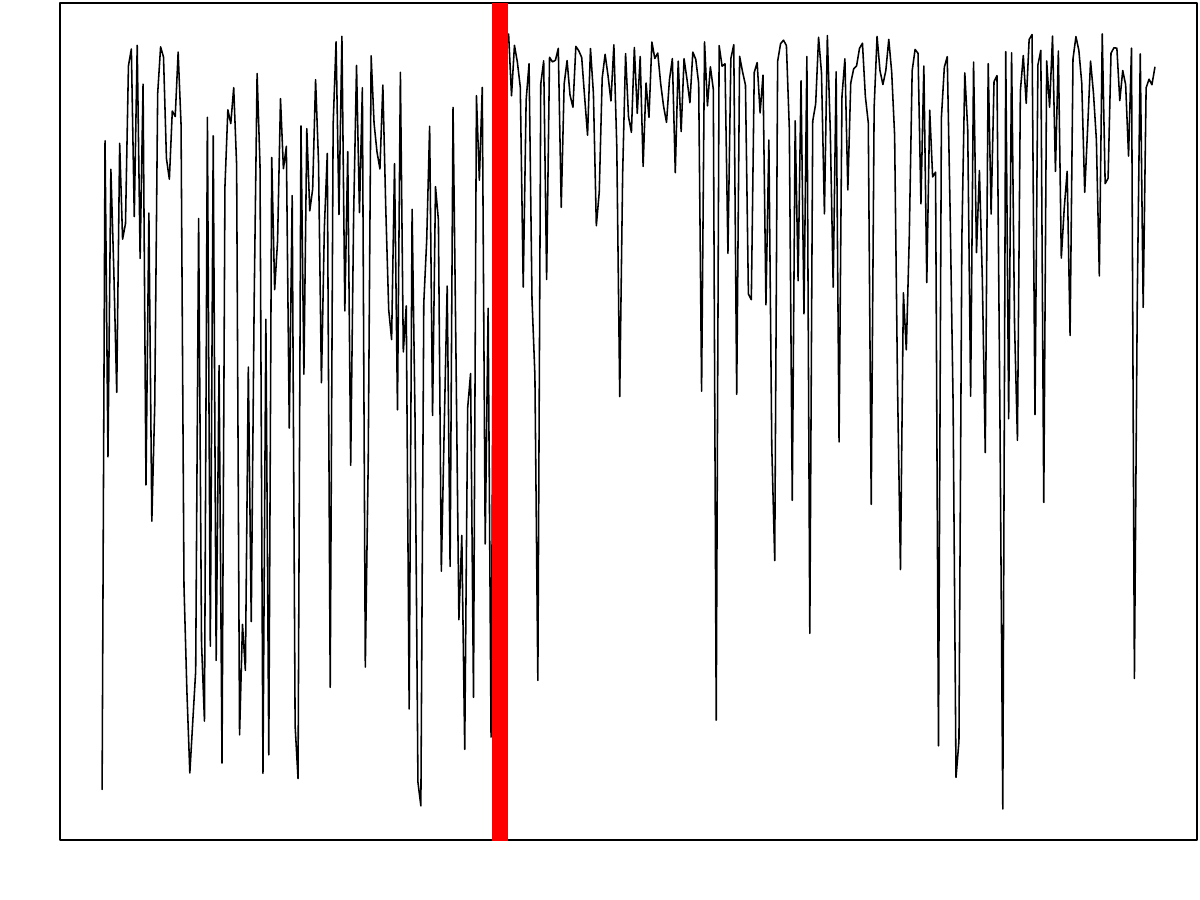}};
			& \node (p14) {\includegraphics[scale=0.12]{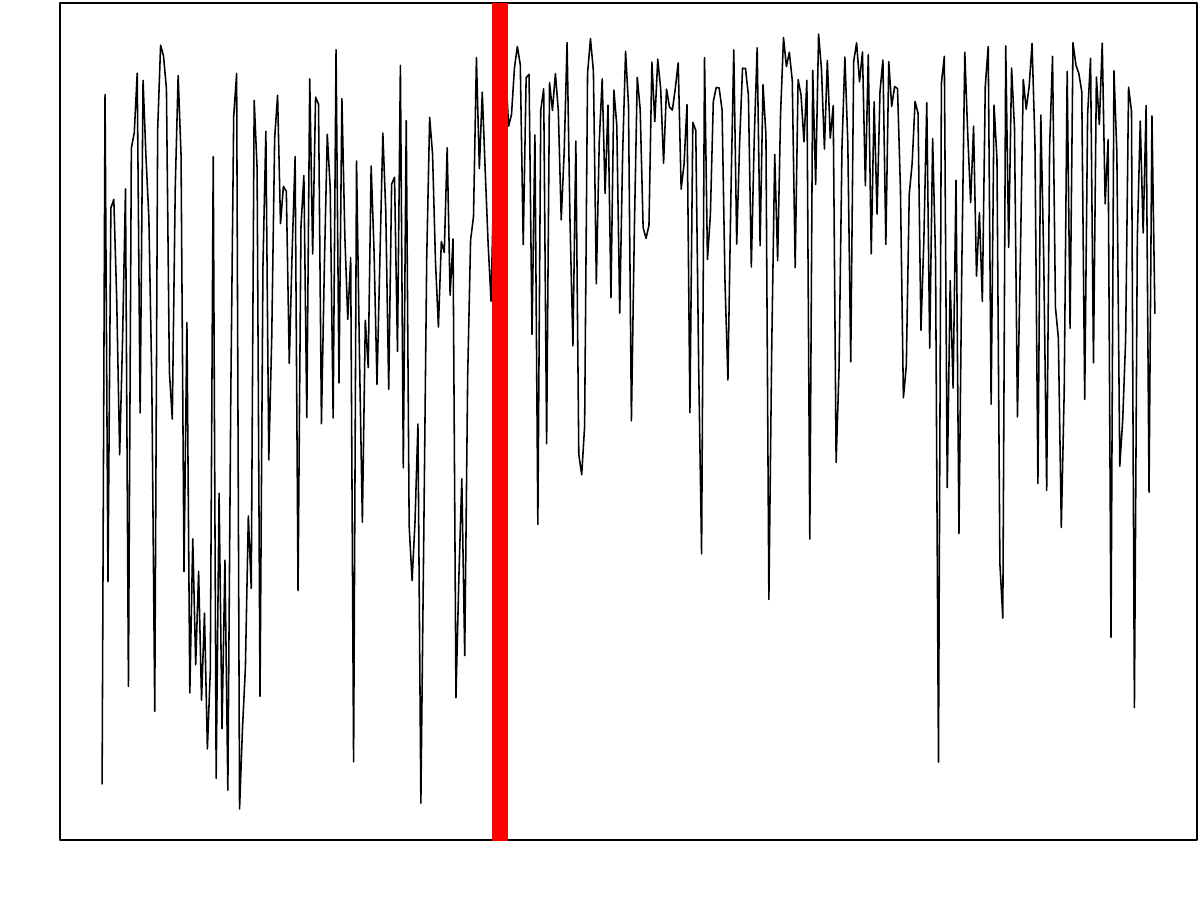}};  & \node (p15) {\includegraphics[scale=0.12]{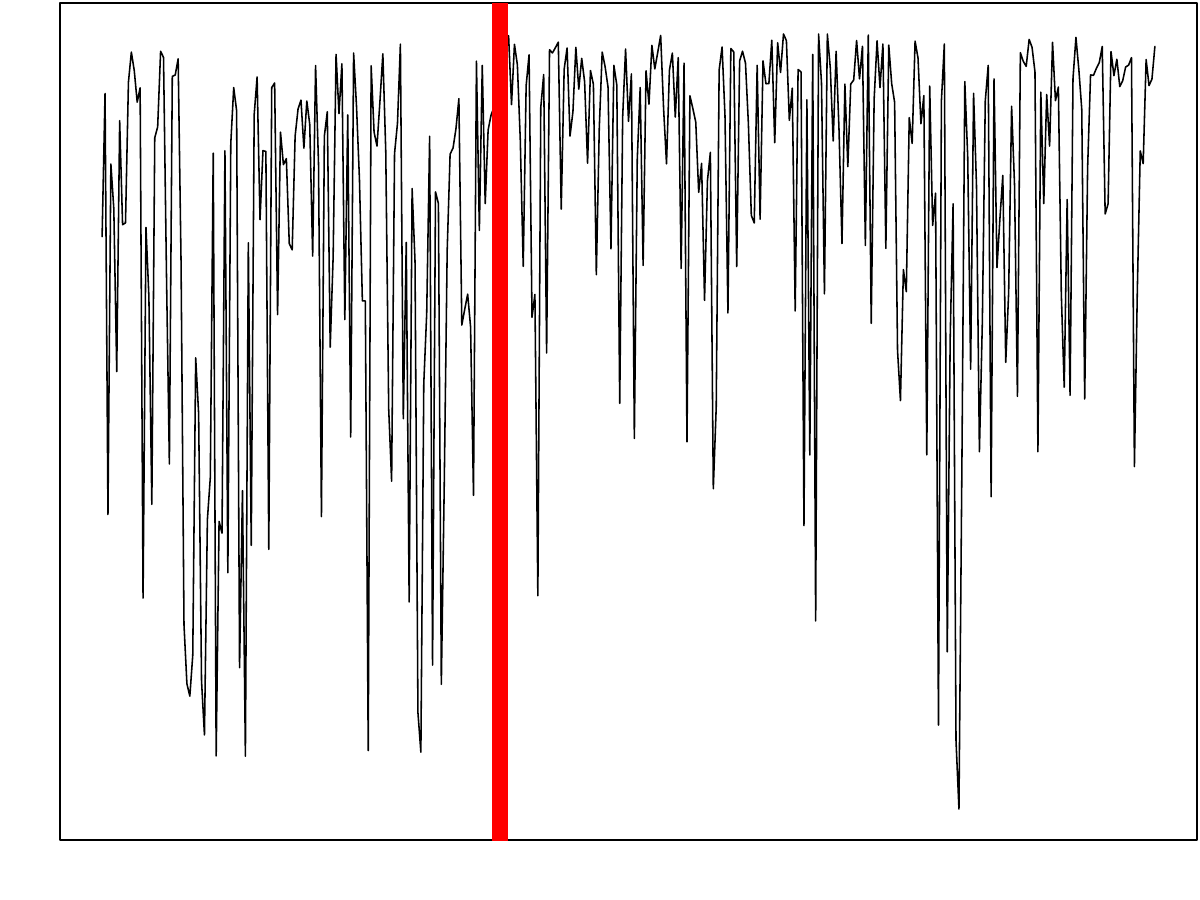}}; 
			\\ 
			&  &\node (p22) {Cardano};
			& \node (p14) {\includegraphics[scale=0.12]{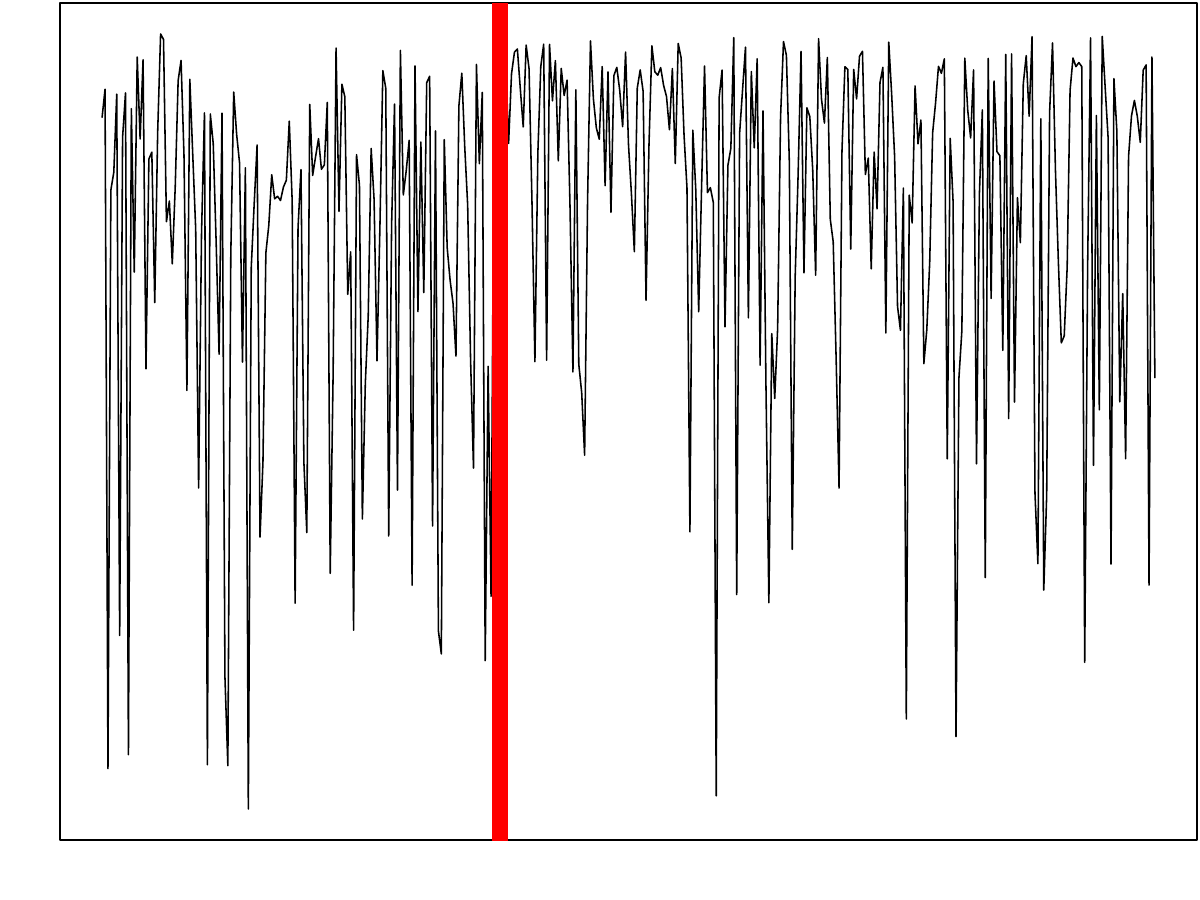}};  & \node (p15) {\includegraphics[scale=0.12]{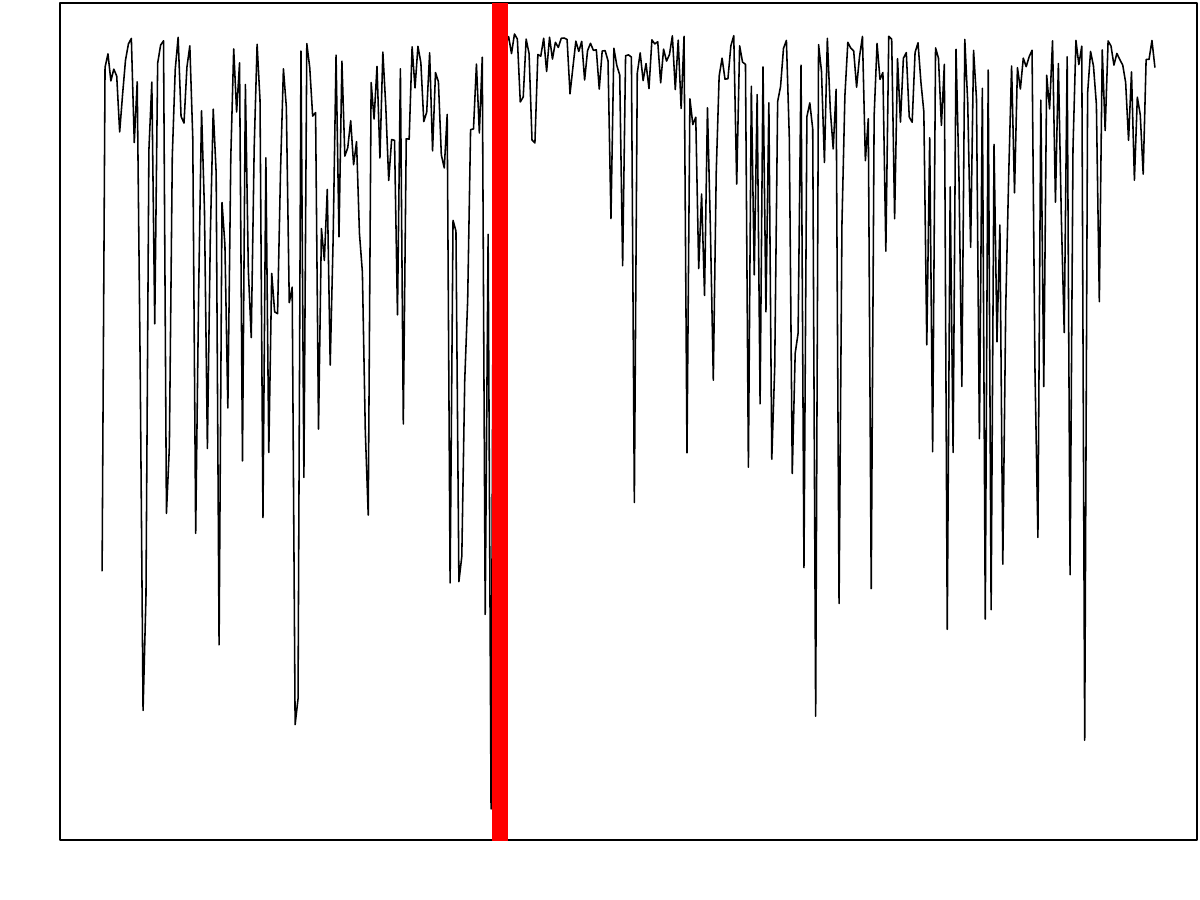}}; 
			\\ 
			&  &
			&\node (p22) {Monero};   & \node (p15) {\includegraphics[scale=0.12]{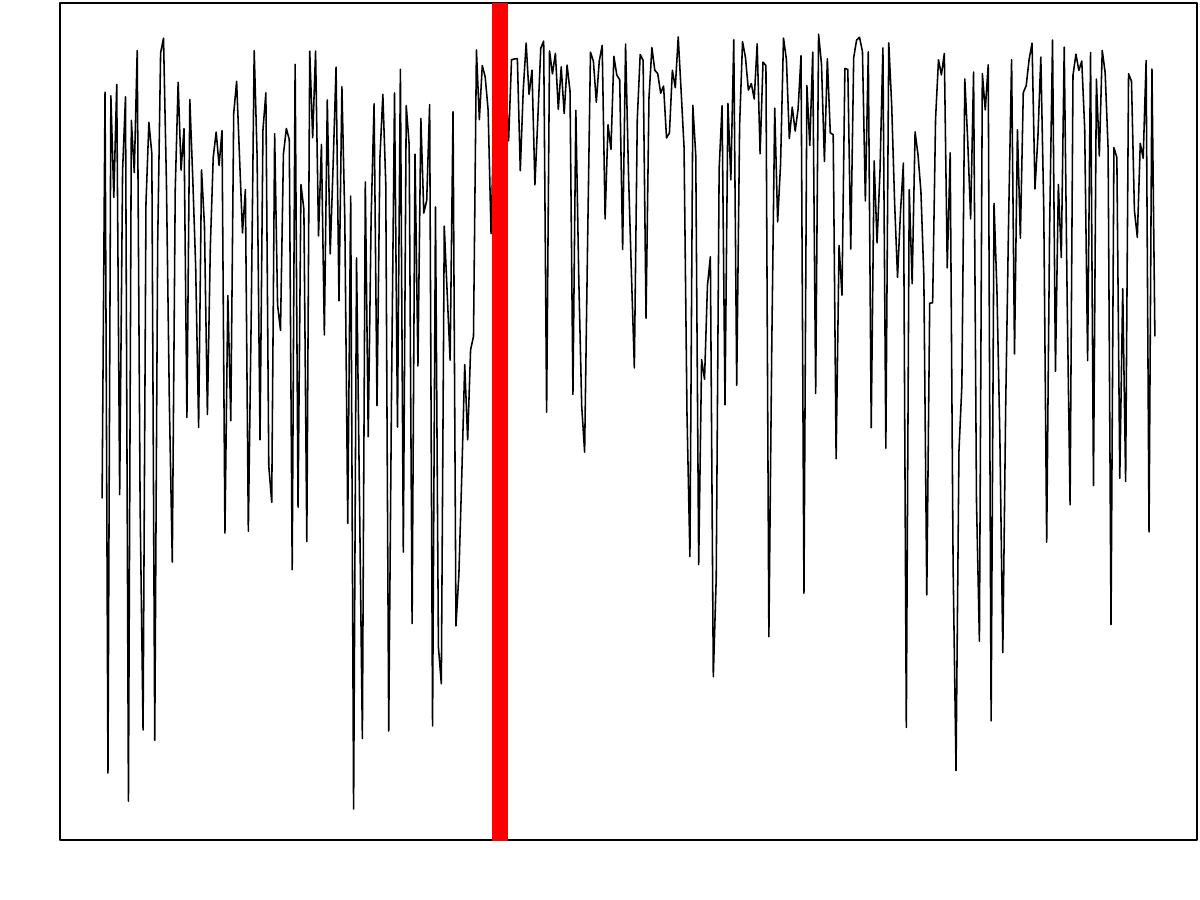}}; 
			\\  
			&  &
			&   &\node (p22) {Chainlink};
			\\  
		};
	\end{tikzpicture}
	\caption{Plot of time series of pairwise correlations. Red vertical line indicates detected change point by WBS-$SN_2$.}
	\label{fig:crypto}
\end{figure}

\section{Conclusion}\label{sec:con}

Motivated by increasing availability of non-Euclidean time series data, this paper considers two-sample testing and change-point detection for temporally dependent random objects. Our inferential framework builds upon the nascent SN technique, which has been mainly developed for  conventional Euclidean time series or functional time series in Hilbert space,  and the extension of SN to the time series of objects residing in metric spaces is the first in the literature. The proposed tests are robust to weak temporal dependence, enjoy effortless tuning and  are  broadly applicable to many non-Euclidean data types with easily verified technical conditions. On the theory front, we derive the asymptotic distributions of our two sample and change-point tests  under both null and local alternatives of order $O(n^{-1/2})$. Furthermore, for change-point problem, the consistency of the change-point estimator is established under mild conditions. 
Both simulation and real data illustrations demonstrate the robustness of our test with respect to temporal dependence and the effectiveness in testing and estimation problems. 


To conclude, we mention several  interesting but unsolved  problems for analyzing non-Euclidean time series. 
For example, although powerful against Fr\'echet mean differences/changes, the testing statistics developed in this paper rely on the asymptotic behaviors of  Fr\'echet (sub)sample variances.  It is  imperative to construct formal tests that can target directly at  Fr\'echet mean differences/changes. For the change-point detection problem in non-Euclidean data, the existing literature, including this paper, only derives the consistency of the change-point estimator. It would be very useful to derive explicit convergence rate and the asymptotic distribution of the change-point estimator, which is needed for confidence interval construction.  Also it would be interesting to study how to detect structural changes when the underlying distributions of  random objects change smoothly. We leave these topics for future investigation.

\appendix

\section{Technical Proofs}\label{sec:proof}
\subsection{Auxiliary Lemmas}\label{sec:lemma}
We first introduce some notations. We denote $o_{up}(\cdot)$ as the uniform $o_p(\cdot)$ w.r.t. the partial sum index $(a,b)\in\mathcal{I}_{\eta}$.  Let $M_{n}(\omega,[a,b])=n^{-1}\sum_{t=\lfloor na\rfloor+1}^{\lfloor nb\rfloor}f_{\omega}(Y_t)$, where $f_{\omega}(Y)=d^2(Y,\omega)-d^2(Y,\mu)$, then it is clear that 
$$
\hat{\mu}_{[a,b]}=\arg\min_{\omega\in\Omega}M_n(\omega,[a,b]).
$$
Let  $\tilde{V}_{[a,b]}=\frac{1}{\lfloor n b\rfloor-\lfloor n a\rfloor} \sum_{t=\lfloor n a\rfloor+1}^{\lfloor n b\rfloor} d^{2}\left(Y_{t}, \mu\right)$.

The following three main lemmas are verified under Assumption \ref{ass_diff}-\ref{ass_expand}, and they are used repeatedly throughout the proof for main theorems. 
\begin{lem}\label{lem_drate}
	$\sup_{(a,b)\in \mathcal{I}_{\eta}}\sqrt{n}d(\hat{\mu}_{[a,b]},\mu)=O_p(1)$.
\end{lem}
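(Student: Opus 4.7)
The plan is to proceed in two steps: first establish uniform consistency $\sup_{(a,b)\in\mathcal{I}_\eta} d(\hat{\mu}_{[a,b]},\mu)\to_p 0$ so that the expansion in Assumption \ref{ass_expand} becomes applicable, and then upgrade this to the $\sqrt{n}$-rate by exploiting the defining optimality inequality together with the bias/stochastic/remainder decomposition.

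For the consistency step, I would start from the fact that $M_n(\hat\mu_{[a,b]},[a,b])\le M_n(\mu,[a,b])=0$ by definition of $\hat\mu_{[a,b]}$. Using the total boundedness of $(\Omega,d)$, I would cover $\Omega$ by finitely many $\epsilon$-balls, apply the pointwise weak law (Assumption \ref{ass_LLN}) at the centers, and combine it with Assumption \ref{ass_diff} (plus the well-separatedness that Assumption \ref{ass_unique} implicitly forces outside $B_\delta(\mu)$) to conclude that $\inf_{d(\omega,\mu)\ge\epsilon} M_n(\omega,[a,b])$ is bounded away from $0$ in probability, uniformly in $(a,b)\in\mathcal{I}_\eta$ because the subsample weight $b-a\ge\eta>0$. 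This forces $\hat\mu_{[a,b]}\in B_\delta(\mu)$ with probability tending to one, uniformly over $\mathcal{I}_\eta$.

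For the rate, once $\hat\mu_{[a,b]}\in B_\delta(\mu)$ I would plug it into Assumption \ref{ass_expand} and rewrite $M_n(\hat\mu_{[a,b]},[a,b])\le 0$ as
\[
\frac{\lfloor nb\rfloor-\lfloor na\rfloor}{n}\,K_d\,d^2(\hat\mu_{[a,b]},\mu) \;\le\; -\,n^{-1}\sum_{t=\lfloor na\rfloor+1}^{\lfloor nb\rfloor} g(Y_t,\hat\mu_{[a,b]},\mu)\;-\;n^{-1}\sum_{t=\lfloor na\rfloor+1}^{\lfloor nb\rfloor} R(Y_t,\hat\mu_{[a,b]},\mu).
\]
Assumption \ref{ass_expand} bounds the first sum by $n^{-1/2}O_p(1)\,d(\hat\mu_{[a,b]},\mu)$ and the second by $o_{up}\bigl(n^{-1/2}d(\hat\mu_{[a,b]},\mu)+d^2(\hat\mu_{[a,b]},\mu)\bigr)$, both uniformly in $(a,b)\in\mathcal{I}_\eta$ by the supremum in that assumption. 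Since $b-a\ge\eta$, I can absorb the $o_p(d^2)$ term on the left for $n$ large, and divide through by $d(\hat\mu_{[a,b]},\mu)$ on the event $\{\hat\mu_{[a,b]}\ne\mu\}$. The resulting inequality $\eta K_d\,d(\hat\mu_{[a,b]},\mu)(1-o_p(1))\le O_p(n^{-1/2})$ delivers $\sqrt{n}\,d(\hat\mu_{[a,b]},\mu)=O_p(1)$ uniformly.

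The main obstacle I foresee is the preliminary uniform consistency: Assumption \ref{ass_LLN} is only pointwise, so one has to leverage total boundedness plus a continuity/covering argument to turn it into a uniform statement over $\omega\in\Omega$ and $(a,b)\in\mathcal{I}_\eta$. In particular, arguing that $M_n(\omega,[a,b])$ is well-separated away from $B_\delta(\mu)$ uniformly in the subsample window requires taking $\eta$ explicitly out of the denominator $\lfloor nb\rfloor-\lfloor na\rfloor$ and carefully exchanging the two suprema; everything else (the expansion, the optimality inequality, the algebraic manipulation) is then routine once the iterate lies inside $B_\delta(\mu)$.
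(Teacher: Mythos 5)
Your proposal is correct and follows essentially the same two-step route as the paper's own proof: first uniform consistency, obtained from a uniform law of large numbers over the totally bounded space $\Omega$ combined with the well-separation of the minimum guaranteed by the uniqueness in Assumption \ref{ass_diff}, and then the $\sqrt{n}$-rate from the optimality inequality $M_n(\hat{\mu}_{[a,b]},[a,b])\le M_n(\mu,[a,b])=0$ together with the bias/stochastic/remainder expansion of Assumption \ref{ass_expand}. The only cosmetic difference is that the paper upgrades the pointwise law of large numbers to a uniform one via stochastic equicontinuity of the process $M_n(\omega,[0,u])-M(\omega,[0,u])$ rather than an explicit $\epsilon$-ball covering, which amounts to the same argument on a totally bounded index set.
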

\begin{proof}
	(1). We first show the uniform convergence, i.e. 
	$
	\sup_{(a,b)\in\mathcal{I}_{\eta}}d(\hat{\mu}_{[a,b]},\mu)=o_{up}(1).
	$
	
	For any $\epsilon>0$, define \begin{equation}\label{psi}
\psi(\epsilon):=\inf_{d(\omega,\mu)>\epsilon} \mathbb{E}f_{\omega}(Y),
	\end{equation}
	and we know by that $\psi(\epsilon)>0$ by the uniqueness of $\mu$ in Assumption \ref{ass_diff}.
	
	Hence, let $M(\omega,[a,b])=(b-a)\mathbb{E}f_{\omega}(Y)$, we have
	\begin{align*}
		&P\Bigg(\sup_{(a,b)\in\mathcal{I}_{\eta}}d(\hat{\mu}_{[a,b]},\mu)>\epsilon\Bigg)
  \\=&{P\Bigg(\bigcup_{(a,b)\in\mathcal{I}_{\eta}}\left\{d(\hat{\mu}_{[a,b]},\mu)>\epsilon\right\}\Bigg) }
		\\\leq & P\Bigg(\bigcup_{(a,b)\in\mathcal{I}_{\eta}}\left\{M(\hat{\mu}_{[a,b]},[a,b])-\inf_{d(\omega,\mu)> \epsilon}M(\omega,[a,b])\geq 0\right\}\Bigg)\\
		\leq & P\Bigg(\bigcup_{(a,b)\in\mathcal{I}_{\eta}}\left\{M(\hat{\mu}_{[a,b]},[a,b])\geq \eta\psi(\epsilon)/2\right\}\Bigg)
		\\ \leq &
		P\Bigg(\bigcup_{(a,b)\in\mathcal{I}_{\eta}}\Bigg\{M(\hat{\mu}_{[a,b]},[a,b])-M_n(\hat{\mu}_{[a,b]},[a,b]) \\
		 & \hspace{4cm} +M_n(\mu,[a,b])-M(\mu,[a,b])\geq \eta\psi(\epsilon)/2\Bigg\}\Bigg)
		\\ \leq & P\Bigg(\sup_{(a,b)\in\mathcal{I}_{\eta}}\sup_{\omega\in\Omega}\left|M_n(\omega,[a,b])-M(\omega,[a,b])\right|\geq \eta\psi(\epsilon)/4\Bigg)
	\end{align*}
{ where the first inequality holds because the event $\{d(\hat{\mu}_{[a,b]},\mu)>\epsilon\}$ implies that  $\hat{\mu}_{[a,b]}\in \{\omega\in\Omega:d(\omega,\mu)> \epsilon \}$, and thus $M(\hat{\mu}_{[a,b]},[a,b])\geq \inf_{d(\omega,\mu)>\epsilon}M(\omega,[a,b])$}; the second  inequality holds by $b-a\geq \eta$ (hence $({\lfloor nb\rfloor-\lfloor na\rfloor})/{n}>\eta/2$ for large $n$) 
 { and the definition of \eqref{psi} such that $\inf_{d(\omega,\mu)>\epsilon}M(\omega,[a,b])=(b-a)\psi(\epsilon)>\eta \psi(\epsilon)/2$};  and the third holds by that $M(\mu,[a,b])=0$ and $M_n(\mu,[a,b])\geq M_n(\hat{\mu}_{[a,b]},[a,b])$.
	
	Note $M_n(\omega,[a,b])-M(\omega,[a,b])=M_n(\omega,[0,b])-M(\omega,[0,b])-M_n(\omega,[0,a])+M(\omega,[0,a])$. Therefore, it suffices to show the  weak convergence of the process $\{M_n(\omega,[0,u])-M(\omega,[0,u]), u\in[0,1],\omega\in \Omega\}$ to zero.  Note the pointwise convergence holds easily by the boundedness of $f_{\omega}$ and Assumption \ref{ass_LLN}, so we only need to show the stochastic equicontinuity, i.e. 
	\begin{align*}
	&\limsup_{n\to\infty}P\Bigg(\sup_{|u-v|<\delta_1,d(\omega_1,\omega_2)<\delta_2}|M_n(\omega_1,[0,u])-M(\omega_1,[0,u]) \\
	& \hspace{4cm}-M_n(\omega_2,[0,v])+M(\omega_2,[0,v])|>\epsilon\Bigg)\to 0
	\end{align*}
	as $\max(\delta_1,\delta_2)\to 0$. 
	
	Then, by triangle inequality,  we have 
	\begin{flalign*}
		&|M_n(\omega_1,[0,u])-M(\omega_1,[0,u])-M_n(\omega_2,[0,v])+M(\omega_2,[0,v])|
		\\\leq &|M_n(\omega_1,[0,u])-M_n(\omega_1,[0,v])|+|M_n(\omega_1,[0,v])-M_n(\omega_2,[0,v])|\\&+|M(\omega_1,[0,u])-M(\omega_1,[0,v])|+|M(\omega_1,[0,v])-M(\omega_2,[0,v])|
		\\:= &\sum_{i=1}^4 R_{n,i}.
	\end{flalign*}
	Without loss of generality, we assume $v>u$, and by the boundedness of the metric, we have for some $K>0$,
	\begin{flalign*}
		R_{n,1}\leq n^{-1}\sum_{t=\lfloor nu\rfloor+1}^{\lfloor nv\rfloor}d^2(Y_t,\omega_1)\leq K|u-v|\leq K\delta_1.
	\end{flalign*}
	Similarly, $R_{n,3}\leq K$.  Furthermore, we can see that 
	$
	R_{n,2},R_{n,4}\leq 2\mathrm{diam}(\Omega)d(\omega_1,\omega_2)\leq K\delta_2.
	$
	Hence, the result follows by letting $\delta_1$ and $\delta_2$ sufficiently small.
	
	Thus, the uniform convergence holds.
	
	(2). We then derive the convergence rate based on Assumption \ref{ass_expand}.
	
	By the consistency, we have for any $\delta>0$, $P(\sup_{(a,b)\in\mathcal{I}_{\eta}}d(\hat{\mu}_{[a,b]},\mu)\leq \delta)\to 1$. Hence, on the event that  $\sup_{(a,b)\in\mathcal{I}_{\eta}}d(\hat{\mu}_{a,b},\mu)\leq \delta$,  { and  note that $M_n(\mu,[a,b])=n^{-1}\sum_{t=\lfloor na\rfloor+1}^{\lfloor nb\rfloor}[d^2(Y_t,\mu)-d^2(Y_t,\mu)]=0$,} we have 
	\begin{flalign*}
		 0=&M_n(\mu,[a,b]) \\
		\geq& M_n(\hat{\mu}_{[a,b]},[a,b])\\
		=&  K_d\frac{\lfloor nb\rfloor-\lfloor na\rfloor}{n}d^2(\hat{\mu}_{[a,b]},\mu)  + n^{-1}\sum_{t=\lfloor na\rfloor+1}^{\lfloor nb\rfloor}\left[g(Y_t,\hat{\mu}_{[a,b]},\mu)+R(Y_t,\hat{\mu}_{[a,b]},\mu)\right]
		\\\geq & \frac{K_d \eta}{2}d^2(\hat{\mu}_{[a,b]},\mu)  \\
		&\hspace{0.5cm}+d(\hat{\mu}_{[a,b]},\mu)\left[ \frac{n^{-1}\sum_{t=\lfloor na\rfloor+1}^{\lfloor nb\rfloor}g(Y_t,\hat{\mu}_{[a,b]},\mu)}{d(\hat{\mu}_{[a,b]},\mu)}+o_{up}(n^{-1/2}+d(\hat{\mu}_{[a,b]},\mu))\right],
	\end{flalign*}
	where the last inequality holds by Assumption \ref{ass_expand} and the fact $({\lfloor nb\rfloor-\lfloor na\rfloor})/{n}>\eta/2$ for large $n$.
	
	Note the above analysis holds uniformly for $(a,b)\in\mathcal{I}_{\eta}$, this implies that 
	\begin{flalign*}
		&\sup_{(a,b)\in\mathcal{I}_{\eta}}\left[\frac{K_d \eta}{2}d(\hat{\mu}_{[a,b]},\mu)-o_{up}(d(\hat{\mu}_{[a,b]},\mu))\right]\\\leq&  n^{-1/2} \sup_{(a,b)\in\mathcal{I}_{\eta}}\left| \frac{n^{-1/2}\sum_{t=\lfloor na\rfloor+1}^{\lfloor nb\rfloor}g(Y_t,\hat{\mu}_{[a,b]},\mu)}{d(\hat{\mu}_{[a,b]},\mu)}\right|+o_{up}(n^{-1/2})=O_p(n^{-1/2}),
	\end{flalign*}
	and hence $\sup_{(a,b)\in\mathcal{I}_{\eta}}d(\hat{\mu}_{[a,b]},\mu)=O_p(n^{-1/2})$.
\end{proof}

\begin{lem}\label{lem_Vhat}
	$\sup_{(a,b)\in \mathcal{I}_{\eta}}\sqrt{n}|\hat{V}_{[a,b]}-\tilde{V}_{[a,b]}|=o_p(1)$.
\end{lem}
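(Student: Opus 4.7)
My plan is to directly expand $\hat V_{[a,b]} - \tilde V_{[a,b]}$ as a subsample average of $d^2(Y_t,\hat\mu_{[a,b]}) - d^2(Y_t,\mu)$, apply the decomposition from Assumption \ref{ass_expand} with $\omega = \hat\mu_{[a,b]}$, and then show that each of the three resulting terms — the bias term $K_d d^2(\hat\mu_{[a,b]},\mu)$, the leading stochastic term involving $g$, and the remainder term involving $R$ — is of order $o_p(n^{-1/2})$ uniformly over $(a,b)\in\mathcal{I}_\eta$.

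First, by Lemma \ref{lem_drate} the event $\mathcal{E}_n = \{\sup_{(a,b)\in\mathcal{I}_\eta} d(\hat\mu_{[a,b]},\mu) \leq \delta\}$ satisfies $P(\mathcal{E}_n)\to 1$, so I may freely invoke the expansion in Assumption \ref{ass_expand} with $\omega=\hat\mu_{[a,b]}$ on $\mathcal{E}_n$. This gives
\[
\hat V_{[a,b]} - \tilde V_{[a,b]} = K_d d^2(\hat\mu_{[a,b]},\mu) + \frac{1}{\lfloor nb\rfloor - \lfloor na\rfloor}\sum_{t=\lfloor na\rfloor+1}^{\lfloor nb\rfloor}\bigl[g(Y_t,\hat\mu_{[a,b]},\mu) + R(Y_t,\hat\mu_{[a,b]},\mu)\bigr].
\]
Multiplying through by $\sqrt n$, and using that $\eta/2 \le (\lfloor nb\rfloor-\lfloor na\rfloor)/n \le 1$ for large $n$, I will bound the three contributions separately.

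For the bias, Lemma \ref{lem_drate} gives $\sup_{(a,b)\in\mathcal{I}_\eta}\sqrt n\, d(\hat\mu_{[a,b]},\mu)=O_p(1)$, so $\sqrt n\, K_d d^2(\hat\mu_{[a,b]},\mu) = O_p(n^{-1/2})$ uniformly. For the $g$-term, write
\[
\sqrt n\cdot \frac{1}{\lfloor nb\rfloor-\lfloor na\rfloor}\sum_t g(Y_t,\hat\mu_{[a,b]},\mu) = \frac{n}{\lfloor nb\rfloor-\lfloor na\rfloor}\cdot d(\hat\mu_{[a,b]},\mu)\cdot\frac{n^{-1/2}\sum_t g(Y_t,\hat\mu_{[a,b]},\mu)}{d(\hat\mu_{[a,b]},\mu)},
\]
and use the uniform $O_p(1)$ bound from Assumption \ref{ass_expand} together with $d(\hat\mu_{[a,b]},\mu)=O_p(n^{-1/2})$ to conclude the whole quantity is $O_p(n^{-1/2})$ uniformly. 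The $R$-term is handled analogously: multiplying by the factor $d(\hat\mu_{[a,b]},\mu)+n^{1/2}d^2(\hat\mu_{[a,b]},\mu) = O_p(n^{-1/2})$ uniformly, and using the uniform $o_p(1)$ bound from Assumption \ref{ass_expand}, yields $o_p(n^{-1/2})$ uniformly. Summing the three contributions gives the claim.

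The only subtle point is handling the substitution $\omega = \hat\mu_{[a,b]}$ inside the suprema of Assumption \ref{ass_expand}: the denominators $d(\omega,\mu)$ and $d(\omega,\mu)+n^{1/2}d^2(\omega,\mu)$ could in principle be small, but since the assumption already supplies a supremum over $\omega\in B_\delta(\mu)$, the bound applies uniformly once we restrict to $\mathcal{E}_n$, and the smallness of $d(\hat\mu_{[a,b]},\mu)$ works in our favor by multiplying these denominators back in. There are no other real obstacles — the result is essentially the standard M-estimation trick of reading the rate of $\hat V - \tilde V$ off the expansion of $\hat\mu$.
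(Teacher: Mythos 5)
Your proposal is correct and takes essentially the same route as the paper's own proof: both substitute $\omega=\hat\mu_{[a,b]}$ into the expansion of Assumption \ref{ass_expand} on the high-probability event supplied by Lemma \ref{lem_drate}, use the uniform $O_p(n^{-1/2})$ rate of $d(\hat\mu_{[a,b]},\mu)$ to control the bias, $g$-, and $R$-contributions, and absorb the normalizing factor via $(\lfloor nb\rfloor-\lfloor na\rfloor)/n\geq\eta/2$. No gaps.
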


\begin{proof}
	By Lemma \ref{lem_drate}, and Assumption \ref{ass_expand}, we have \begin{flalign*}
		&\sup_{(a,b)\in\mathcal{I}_{\eta}}\sqrt{n}M_n(\hat{\mu}_{[a,b]},[a,b])
		\\\leq& K_d\sup_{(a,b)\in\mathcal{I}_{\eta}}d(\hat{\mu}_{[a,b]},\mu) \sup_{(a,b)\in\mathcal{I}_{\eta}}\Bigg|\sqrt{n}d(\hat{\mu}_{[a,b]},\mu) \\
		& \hspace{1cm} + \frac{n^{-1/2}\sum_{t=\lfloor na\rfloor+1}^{\lfloor nb\rfloor}g(Y_t,\hat{\mu}_{[a,b]},\mu)}{d(\hat{\mu}_{[a,b]},\mu)}+o_{up}(1+\sqrt{n}d(\hat{\mu}_{[a,b]},\mu))\Bigg|
		\\=&O_p(n^{-1/2}).
	\end{flalign*}
	Hence, we have that 
	$$
	\sup_{(a,b)\in \mathcal{I}_{\eta}}\sqrt{n}|\hat{V}_{[a,b]}-\tilde{V}_{[a,b]}|\leq \eta^{-1}\sup_{(a,b)\in \mathcal{I}_{\eta}}\sqrt{n}M_n(\hat{\mu}_{[a,b]},[a,b]),
	$$
	the result follows. 
\end{proof}

\begin{lem}\label{lem_VhatC}
	Let $
	\hat{V}^C_{[a,b]}(\tilde{\omega})=\frac{1}{\lfloor nb\rfloor-\lfloor na \rfloor}\sum_{t=\lfloor na\rfloor+1}^{\lfloor nb\rfloor}d^2(Y_i,\tilde{\omega}),
	$ where $\tilde{\omega}\in\Omega$ is a random object such that 
	$$
	\sqrt{n}\sup_{(a,b)\in\mathcal{I}_{\eta}}d(\tilde{\omega},\hat{\mu}_{[a,b]})=O_p(1).
	$$
	
	Then, $$\sqrt{n}\sup_{(a,b)\in\mathcal{I}_{\eta}}|\hat{V}^C_{[a,b]}(\tilde{\omega})-\tilde{V}_{[a,b]}|=o_p(1).$$
\end{lem}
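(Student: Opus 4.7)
The strategy is to reduce everything to \Cref{ass_expand} applied at $\omega=\tilde\omega$. First I would observe that, by the triangle inequality together with \Cref{lem_drate} and the hypothesis on $\tilde\omega$,
\[
\sqrt{n}\sup_{(a,b)\in\mathcal{I}_\eta} d(\tilde\omega,\mu)\le \sqrt{n}\sup_{(a,b)\in\mathcal{I}_\eta} d(\tilde\omega,\hat\mu_{[a,b]})+\sqrt{n}\sup_{(a,b)\in\mathcal{I}_\eta} d(\hat\mu_{[a,b]},\mu)=O_p(1).
\]
In particular, on an event of probability tending to one, $\tilde\omega\in B_\delta(\mu)$, so \Cref{ass_expand} may be invoked at $\omega=\tilde\omega$.

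Next I would write out the difference of interest using this expansion:
\begin{align*}
\sqrt n\bigl(\hat V^C_{[a,b]}(\tilde\omega)-\tilde V_{[a,b]}\bigr)
&=\frac{\sqrt n}{\lfloor nb\rfloor-\lfloor na\rfloor}\sum_{t=\lfloor na\rfloor+1}^{\lfloor nb\rfloor}\bigl[d^2(Y_t,\tilde\omega)-d^2(Y_t,\mu)\bigr]\\
&=\underbrace{\frac{\sqrt n(\lfloor nb\rfloor-\lfloor na\rfloor)}{\lfloor nb\rfloor-\lfloor na\rfloor}K_d d^2(\tilde\omega,\mu)}_{(\mathrm I)}
+\underbrace{\frac{\sqrt n}{\lfloor nb\rfloor-\lfloor na\rfloor}\sum_{t}g(Y_t,\tilde\omega,\mu)}_{(\mathrm{II})}\\
&\quad +\underbrace{\frac{\sqrt n}{\lfloor nb\rfloor-\lfloor na\rfloor}\sum_{t}R(Y_t,\tilde\omega,\mu)}_{(\mathrm{III})}.
\end{align*}
Using $\lfloor nb\rfloor-\lfloor na\rfloor\ge n\eta/2$ for $n$ large and $(a,b)\in\mathcal I_\eta$, I can replace the prefactor $\sqrt n/(\lfloor nb\rfloor-\lfloor na\rfloor)$ by $(b-a)^{-1}n^{-1/2}(1+o(1))$ uniformly in $(a,b)$.

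Term by term: $(\mathrm I)\le K_d\sqrt n\cdot O_p(n^{-1})=O_p(n^{-1/2})=o_p(1)$. For $(\mathrm{II})$, I factor $d(\tilde\omega,\mu)$ and apply the first uniform bound of \Cref{ass_expand}, giving $|(\mathrm{II})|\le\eta^{-1}d(\tilde\omega,\mu)\cdot O_p(1)=O_p(n^{-1/2})=o_p(1)$. For $(\mathrm{III})$, the second uniform bound of \Cref{ass_expand} yields $n^{-1/2}\sum_t R(Y_t,\tilde\omega,\mu)=o_p\bigl(d(\tilde\omega,\mu)+n^{1/2}d^2(\tilde\omega,\mu)\bigr)=o_p(n^{-1/2})$, so $|(\mathrm{III})|\le\eta^{-1}\cdot o_p(n^{-1/2})=o_p(1)$. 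Taking the supremum over $(a,b)\in\mathcal I_\eta$ in each of these bounds then yields the claim.

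The only subtle point is ensuring that the uniform bounds in \Cref{ass_expand}, which are stated for deterministic $\omega\in B_\delta(\mu)$, can legitimately be evaluated at the random element $\tilde\omega$ (which may itself depend on $(a,b)$). This is the main obstacle, and it is handled by noting that the suprema in \Cref{ass_expand} are taken over all $\omega\in B_\delta(\mu)$ simultaneously; once the high-probability event $\{\tilde\omega\in B_\delta(\mu)\}$ is secured from Step~1, the random substitution is controlled by the deterministic uniform bound, so the above argument goes through without extra entropy or measurability arguments.
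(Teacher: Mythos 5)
Your proof is correct and follows essentially the same route as the paper's: both deduce $d(\tilde\omega,\mu)=O_p(n^{-1/2})$ from the triangle inequality and \Cref{lem_drate}, reduce the claim to controlling $\sqrt{n}M_n(\tilde\omega,[a,b])$ up to the factor $n/(\lfloor nb\rfloor-\lfloor na\rfloor)\le 2/\eta$, and then bound the bias, stochastic, and remainder terms of \Cref{ass_expand} term by term. Your explicit remark that the uniform-in-$\omega$ suprema in \Cref{ass_expand} license the substitution of the random $\tilde\omega$ is exactly the (implicit) justification used in the paper.
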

\begin{proof}
	By triangle inequality and Lemma \ref{lem_Vhat},
	\begin{flalign*}
		& \sqrt{n}\sup_{(a,b)\in\mathcal{I}_{\eta}}|\hat{V}^C_{[a,b]}(\tilde{\omega})-\tilde{V}_{[a,b]}| \\ 
		= &\sup_{(a,b)\in\mathcal{I}_{\eta}}\left|\frac{\sqrt{n}}{\lfloor n b\rfloor-\lfloor n a\rfloor} \sum_{t=\lfloor n a\rfloor+1}^{\lfloor n b\rfloor} d^{2}\left(Y_{t}, \tilde{\omega}\right)-d^2(Y_i,\mu)\right|
		\\\leq &(\eta/2)^{-1}\sup_{(a,b)\in\mathcal{I}_{\eta}}\sqrt{n}M_n(\tilde{\omega},[a,b]).
	\end{flalign*}
	Note by triangle inequality for the metric, $d(\tilde{\omega},\mu)\leq d(\hat{\mu}_{[a,b]},\mu)+d(\tilde{\omega},\hat{\mu}_{[a,b]})=O_p(n^{-1/2})$, and  we know that $d(\tilde{\omega},\mu)<\delta$ with probability tending to 1, and on this event, by Assumption \ref{ass_expand},
	\begin{flalign*}
		\sqrt{n}M_n(\tilde{\omega},[a,b])
		\leq & K_dd^2(\tilde{\omega},\mu)  +n^{-1}\left| \sum_{t=\lfloor na\rfloor+1}^{\lfloor nb\rfloor}g(Y_t,\tilde{\omega},\mu)\right|\\ &\hspace{1cm}+n^{-1}\left|\sum_{t=\lfloor na\rfloor+1}^{\lfloor nb\rfloor}R(Y_t,\tilde{\omega},\mu)\right|.
	\end{flalign*}
	Similar to the proof of Lemma \ref{lem_Vhat},  we get the result.
\end{proof}

\subsection{Proof of Theorems in Section \ref{sec:two}}
\label{sec:proof3}
Let $\tilde{V}^{(1)}_{r}=\frac{1}{\lfloor rn_1\rfloor}\sum_{t=1}^{\lfloor rn_1\rfloor}d^2(Y_t^{(1)},\mu^{(1)})$, and  $\tilde{V}^{(2)}_{r}=\frac{1}{\lfloor rn_2\rfloor}\sum_{t=1}^{\lfloor rn_2\rfloor}d^2(Y_t^{(2)},\mu^{(2)})$.

For each $r\in[\eta,1]$, we consider the  decomposition,
\begin{flalign}\label{T_decomp}
	\begin{split}
		\sqrt{n}T_n(r)=&{\sqrt{n}}r(\hat{V}^{(1)}_{r}-\hat{V}^{(2)}_{r})\\
		=&{\sqrt{n}}r(\hat{V}^{(1)}_{r}-\tilde{V}^{(1)}_{r}+\tilde{V}^{(1)}_{r}-V^{(1)})  \\ 
		& -{\sqrt{n}}r(\hat{V}^{(2)}_{r}-\tilde{V}^{(2)}_{r}+\tilde{V}^{(2)}_{r}-V^{(2)}) \\
		&+{\sqrt{n}}r(V^{(1)}-V^{(2)})
		\\:=&R_{n,1}(r)+R_{n,2}(r)+R_{n,3}(r).
	\end{split}
\end{flalign}
and 
\begin{flalign}\label{TC_decomp}
	\begin{split}
		\sqrt{n}T_n^C(r)=&
		\sqrt{n}r(\hat{V}^{C,(1)}_{r}-\tilde{V}^{(1)}_{r})-\sqrt{n}r(\hat{V}^{(1)}_{r}-\tilde{V}^{(1)}_{r}) \\
		& +\sqrt{n}r(\hat{V}^{C,(2)}_{r} -\tilde{V}^{(2)}_{r})-\sqrt{n}r(\hat{V}^{(2)}_{r}-\tilde{V}^{(2)}_{r})\\
		:=&R^C_{n,1}(r)+R^C_{n,2}(r)+R^C_{n,3}(r)+R^C_{n,4}(r).
	\end{split}
\end{flalign}

By Lemma \ref{lem_Vhat},  
\begin{equation}\label{tilde_diff}
	\sup_{r\in[\eta,1]}\sqrt{n}r(\hat{V}^{(1)}_r-\tilde{V}^{(1)}_r)=o_p(1),\quad \sup_{r\in[\eta,1]}\sqrt{n}r(\hat{V}^{(2)}_r-\tilde{V}^{(2)}_r)=o_p(1),
\end{equation}
i.e. 
\begin{equation}\label{R2+4C}
	\left\{R^C_{n,2}(r)+R^C_{n,4}(r)\right\}_{r\in[\eta,1]}\Rightarrow 0.
\end{equation}
Furthermore, by Assumption \ref{ass_FCLT},
$$
\sqrt{n}r(\hat{V}^{(1)}_r-V^{(1)})\Rightarrow \gamma_1^{-1}\sigma_1B^{(1)}(\gamma_1r),\quad \sqrt{n}r(\hat{V}^{(2)}_r-V^{(2)})\Rightarrow \gamma_2^{-1}\sigma_2B^{(2)}(\gamma_2r).
$$
This implies that 
\begin{equation}\label{R1+2}
	\left\{R_{n,1}(r)+R_{n,2}(r)\right\}_{r\in[\eta,1]}\Rightarrow \{\xi_{\gamma_1,\gamma_2}(r;\sigma_1,\sigma_2)\}_{r\in[\eta,1]}.
\end{equation}

\subsection*{Proof of Theorem \ref{thmtwo_0}}
Under $\mathbb{H}_0$, $R_{n,3}(r)\equiv 0$, and $\mu^{(1)}=\mu^{(2)}=\mu$. Hence, by \eqref{T_decomp} and \eqref{R1+2}, we obtain that 
$$
\{\sqrt{n}T_n(r)\}_{r\in[\eta,1]} \Rightarrow \{\xi_{\gamma_1,\gamma_2}(r;\sigma_1,\sigma_2)\}_{r\in[\eta,1]}.
$$
Next, by Lemma \ref{lem_drate}, can obtain that
$$
\sqrt{n}\sup_{r\in[\eta,1]}d(\hat{\mu}^{(1)}_{r},\mu)=o_p(1),\quad \sqrt{n}\sup_{r\in[\eta,1]}d(\hat{\mu}^{(2)}_{r},\mu)=o_p(1).
$$
Hence, by Lemma \ref{lem_VhatC},  we have 
$$\left\{R^C_{n,1}(r)+R^C_{n,3}(r)\right\}_{r\in[\eta,1]}\Rightarrow 0.$$ Together with \eqref{R2+4C}, we have 
$$
\{\sqrt{n}T^C_n(r)\}_{r\in[\eta,1]} \Rightarrow 0.
$$
Hence, by continuous mapping theorem, for both $i=1,2$,
$$
D_{n,i}\to_d \frac{\xi^2_{\gamma_1,\gamma_2}(1;\sigma_1,\sigma_2)}{\int_{\eta}^{1}\left[\xi_{\gamma_1,\gamma_2}(r;\sigma_1,\sigma_2)-r\xi_{\gamma_1,\gamma_2}(1;\sigma_1,\sigma_2)\right]^2dr}.
$$

\qed

\subsection*{Proof of Theorem \ref{thmtwo_alter}}

In view of \eqref{T_decomp} and \eqref{R1+2}, $$\left\{\sqrt{n}T_n(r)\right\}_{r\in[\eta,1]}\Rightarrow \left\{ \xi_{\gamma_1,\gamma_2}(r;\sigma_1,\sigma_2)+rn^{-\kappa_V+1/2}\Delta_V\right\}_{r\in[\eta,1]}.$$
Hence 
\begin{itemize}
	\item For $\kappa_V \in(1/2,\infty)$,  $\left\{\sqrt{n}T_n(r)\right\}\Rightarrow \left\{ \xi_{\gamma_1,\gamma_2}(r;\sigma_1,\sigma_2)\right\}_{r\in[\eta,1]}.$
	\item For $\kappa_V=1/2$, $\left\{\sqrt{n}T_n(r)\right\}_{r\in[\eta,1]}\Rightarrow \left\{ \xi_{\gamma_1,\gamma_2}(r;\sigma_1,\sigma_2)+r\Delta_V\right\}_{r\in[\eta,1]}.$
	\item For $\kappa_V\in(0,1/2)$, $\sqrt{n}T_n(1)\to_p\infty$, and $\left\{\sqrt{n}T_n(r)-\sqrt{n}rT_n(1)\right\}_{r\in[\eta,1]}\Rightarrow\left\{ \xi_{\gamma_1,\gamma_2}(r;\sigma_1,\sigma_2)-r\xi_{\gamma_1,\gamma_2}(1;\sigma_1,\sigma_2)\right\}_{r\in[\eta,1]}$. 
\end{itemize}
Next, we focus on $\sqrt{n}T_n^C(r)$. 
When $\kappa_M\in (0,\infty)$, it holds that $d(\mu^{(1)},\mu^{(2)})=O(n^{-\kappa_M/2})=o(1)$, and by triangle inequality,  for any $r\in[\eta,1]$,
\begin{equation}\label{tri_ineq}
	|d(\mu^{(1)},\mu^{(2)})-d(\hat{\mu}^{(2)}_{r},\mu^{(2)})|\leq d(\hat{\mu}^{(2)}_{r},\mu^{(1)})\leq |d(\mu^{(1)},\mu^{(2)})+d(\hat{\mu}^{(2)}_{r},\mu^{(2)})|.
\end{equation}
By Lemma \ref{lem_drate}, we have $\sup_{r\in[\eta,1]}d(\hat{\mu}^{(2)}_{r},\mu^{(2)})=O_p(n^{-1/2})$.
This and \eqref{tri_ineq} imply that 
\begin{itemize} 
	\item when $\kappa_M\in(1/2,\infty)$, $d^2(\hat{\mu}^{(2)}_{r},\mu^{(1)})=o_{up}(n^{-1/2})$;
	\item when $\kappa_M\in(0,1/2]$, $d^2(\hat{\mu}^{(2)}_{r},\mu^{(1)})=d^2(\mu^{(1)},\mu^{(2)})+o_{up}(n^{-1/2})=n^{-\kappa_M}\Delta_M+o_{up}(n^{-1/2})$.
\end{itemize}
Similarly,
\begin{itemize} 
	\item when $\kappa_M\in(1/2,\infty)$, $d^2(\hat{\mu}^{(1)}_{r},\mu^{(2)})=o_{up}(n^{-1/2})$;
	\item when $\kappa_M\in(0,1/2]$, $d^2(\hat{\mu}^{(1)}_{r},\mu^{(2)})=n^{-\kappa_M}\Delta_M+o_{up}(n^{-1/2})$.
\end{itemize}
Furthermore, by  Assumption \ref{ass_expand}, equations \eqref{TC_decomp} and \eqref{R2+4C}, we obtain 
\begin{flalign}\label{TnC}
	\begin{split}
		&\sqrt{n}T_n^C(r)=R_{n,1}^C(r)+R_{n,3}^C(r)+o_{up}(1)
		\\=&\sqrt{n}K_d rd^2(\hat{\mu}^{(2)}_{r},\mu^{(1)})+ rd(\hat{\mu}^{(2)}_{r},\mu^{(1)})\left[\frac{n^{-1/2}\sum_{t=1}^{\lfloor \gamma_1nr\rfloor}g(Y_t^{(1)},\hat{\mu}^{(2)}_{r},\mu^{(1)})}{d(\hat{\mu}^{(2)}_{r},\mu^{(1)})}\right] \\
		& \hspace{1cm} +o_{up}(d(\hat{\mu}^{(2)}_{r},\mu^{(1)})+\sqrt{n}d^2(\hat{\mu}^{(2)}_{r},\mu^{(1)}))\\
		&+\sqrt{n}K_dr d^2(\hat{\mu}^{(1)}_{r},\mu^{(2)})+ rd(\hat{\mu}^{(1)}_{r},\mu^{(2)})\left[\frac{n^{-1/2}\sum_{t=1}^{\lfloor \gamma_2nr\rfloor}g(Y_t^{(2)},\hat{\mu}^{(1)}_{r},\mu^{(2)})}{d(\hat{\mu}^{(1)}_{r},\mu^{(2)})}\right]\\
		& \hspace{1cm} +o_{up}(d(\hat{\mu}^{(1)}_{r},\mu^{(2)})+\sqrt{n}d^2(\hat{\mu}^{(1)}_{r},\mu^{(2)}))
		\\&+o_{up}(1).
	\end{split}
\end{flalign}
\begin{itemize}
	\item For $\kappa_M \in(1/2,\infty)$, $
	d^2(\hat{\mu}^{(2)}_{r},\mu^{(1)})=o_{up}(n^{-1/2}),$ and $
	d^2(\hat{\mu}^{(1)}_{r},\mu^{(2)})=o_{up}(n^{-1/2}).$
	Hence, $\{\sqrt{n}T_n^C(r)\}_{r\in[\eta,1]}\Rightarrow 0$.
	\item For $\kappa_M=1/2$, we note that 
	$
	d^2(\hat{\mu}^{(2)}_{r},\mu^{(1)})=n^{-1/2}\Delta_M+o_{up}(1),$ and $
	d^2(\hat{\mu}^{(1)}_{n},\mu^{(2)})=n^{-1/2}\Delta_M+o_{up}(1).
	$ 
	Hence, $\{\sqrt{n}T_n^C(r)\}_{r\in[\eta,1]}\Rightarrow \{2rK_d\Delta_M\}_{r\in[\eta,1]}$, and $\{\sqrt{n}[T_n^C(r)-rT_n^C(1)]\}_{r\in[\eta,1]}\Rightarrow 0$.
	\item For  $\kappa_M\in(0,1/2)$, we multiply $n^{2\kappa_M-1}$ on both denominator and numerator of $D_{n,2}$, and obtain 
	\begin{equation}\label{Dn2_equiv}
		D_{n,2}=\frac{n^{2\kappa_M}\left\{\left[T_n(1)\right]^2+\left[T_n^C(1)\right]^2\right\}}{n^{-1}\sum_{k=\lfloor n\eta\rfloor}^n  n^{2\kappa_M}\left\{\left[T_n(\frac{k}{n})-\frac{k}{n}T_n(1)\right]^2+\left[T_n^C(\frac{k}{n})-\frac{k}{n}T^C_n(1)\right]^2\right\}}. 
	\end{equation}
	Note that $n^{\kappa_M-1/2}\to0$, as $n\to\infty$, we obtain that 
	\begin{equation}\label{Tnkappa}
		\left\{n^{\kappa_M}[T_n(r)-rT_n(1)\right]\}_{r\in[\eta,1]}\Rightarrow 0.
	\end{equation}
	Furthermore, in view of \eqref{TnC}, we obtain 
	\begin{flalign*}
		n^{\kappa_M}T^C_n(r)=
		n^{\kappa_M}r(K_d+o_{up}(1))\left[d^2(\hat{\mu}^{(2)}_r,\mu^{(1)})+d^2(\hat{\mu}^{(1)}_r,\mu^{(2)})\right]+o_{up}(1),
	\end{flalign*}
	By arguments below \eqref{tri_ineq}, we know that 
	$
	n^{\kappa_M}d^2(\hat{\mu}^{(2)}_r,\mu^{(1)})=\Delta_M+o_{up}(n^{\kappa_M-1/2})=\Delta_M+o_{up}(1).
	$
	And similarly, $
	n^{\kappa_M}d^2(\hat{\mu}^{(1)}_r,\mu^{(2)})=\Delta_M+o_{up}(1).$ We thus obtain that
	\begin{equation}\label{TnCkappa}
		\left\{n^{\kappa_M} T^C_n(r)-rT^C_n(1)\right\}_{r\in[\eta,1]}\Rightarrow 0,
	\end{equation}
	and 
	\begin{equation}\label{TnCkappa_num}
		n^{\kappa_M}T_n^C(1)\to_p 2K_d\Delta_M.
	\end{equation}
	Therefore,  \eqref{Tnkappa} and \eqref{TnCkappa} implies that the  denominator of \eqref{Dn2_equiv} converges to 0 in probability, while \eqref{TnCkappa_num} implies the numerator of \eqref{Dn2_equiv} is larger than a positive constant in probability, we thus obtain $D_{n,2}\to_p\infty$.
\end{itemize} 
Summarizing the cases of $\kappa_V$ and $\kappa_M$, and by continuous mapping theorem, we get the result.
\qed

\subsection*{Proof of \Cref{cor_xi}}
When $\gamma_1=\gamma_2=1/2$, it can be shown that $$\xi_{\gamma_1,\gamma_2}(r;\sigma_1,\sigma_2)=2\sigma_1B^{(1)}(r/2)-2\sigma_2B^{(1)}(r/2)=_d \sqrt{2\sigma_1^2+2\sigma_2^2-4\rho\sigma_1\sigma_2}B(r);$$ 
and when $\rho=0$.
$$\xi_{\gamma_1,\gamma_2}(r;\sigma_1,\sigma_2)=_d \sqrt{\frac{\sigma_1^2}{\gamma_1}+\frac{\sigma_2^2}{\gamma_2}}B(r).$$ 
The result follows by the continuous mapping theorem.

\qed 

\subsection{Proof of Theorems in Section \ref{sec:cpt}} \label{sec:proof4}
With a bit abuse of notation, we define $\mathcal{I}_{\eta}=\{(a,b): 0\leq a<b\leq 1, b-a\geq \eta_2 \}$  and $\mathcal{J}_{\eta}=\{(r;a,b): 0\leq a<r<b\leq 1, b-r\geq \eta_2, r-a\geq \eta_2 \}$.
\subsection*{Proof of Theorem \ref{thm_cpt_H0}}
For $(r;a,b)\in \mathcal{J}_{\eta}$, we note that 
\begin{flalign*}
	&\sqrt{n}T_n(r;a,b)\\=& \sqrt{n}\left\{\frac{(r-a)(b-r)}{(b-a)}\left(\hat{V}_{[a, r]}-\tilde{V}_{[a,r]}+\tilde{V}_{[a,r]}-V\right)\right\} \\ & -\sqrt{n}\left\{\frac{(r-a)(b-r)}{(b-a)}\left(\hat{V}_{[r, b]}-\tilde{V}_{[r, b]}+\tilde{V}_{[r, b]}-V\right)\right\}.
\end{flalign*}
By Lemma \ref{lem_Vhat} we know that $\sup_{(a,r)\in\mathcal{I}_{\eta}}\sqrt{n}|\hat{V}_{[a, r]}-\tilde{V}_{[a,r]}|=o_p(1)$, $\sup_{(r,b)\in\mathcal{I}_{\eta}}\sqrt{n}|\hat{V}_{[r,b]}-\tilde{V}_{[r,b]}|=o_p(1)$, and by Assumption \ref{ass_FCLT}, \begin{flalign*}
	&\left\{\sqrt{n}(r-a)(\tilde{V}_{[a,r]}-V)\right\}_{(a,r)\in\mathcal{I}_{\eta}}\Rightarrow \left\{\sigma [B(r)-B(a)]\right\}_{(a,r)\in\mathcal{I}_{\eta}},\\ &\left\{\sqrt{n}(b-r)(\tilde{V}_{[r,b]}-V)\right\}_{(r,b)\in\mathcal{I}_{\eta}}\Rightarrow \left\{\sigma [B(b)-B(r)]\right\}_{(r,b)\in\mathcal{I}_{\eta}}.
\end{flalign*}
Hence, \begin{flalign*}
    &\left\{\sqrt{n}T_n(r;a,b)\right\}_{(r;a,b)\in \mathcal{J}_{\eta}}\\ \Rightarrow &\sigma\left\{ \frac{(b-r)}{(b-a)}[B(r)-B(a)]-\frac{(r-a)}{(b-a)}[B(b)-B(r)]\right\}_{(r;a,b)\in \mathcal{J}_{\eta}}.
\end{flalign*}

Furthermore, we note that 
\begin{flalign*}
	&\sqrt{n}T_n^C(r;a,b)\\=& \frac{(b-r)}{(b-a)}n^{-1/2}\Bigg\{\sum_{i=\lfloor n a\rfloor+1}^{\lfloor n r\rfloor} \left[d^{2}\left(Y_{i}, \hat{\mu}_{[r, b]}\right)-d^{2}\left(Y_{i}, \mu\right)\right] \\ 
	& \hspace{1cm} - \sum_{i=\lfloor n a\rfloor+1}^{\lfloor n r\rfloor} \left[d^{2}\left(Y_{i}, \hat{\mu}_{[a,r]}\right)-d^{2}\left(Y_{i}, \mu\right)\right]\Bigg\}
	\\&+ \frac{(r-a)}{(b-a)}n^{-1/2}\sum_{i=\lfloor n r\rfloor+1}^{\lfloor n b\rfloor} \Bigg\{\left[d^{2}\left(Y_{i}, \hat{\mu}_{[a, r]}\right)-d^{2}\left(Y_{i}, \mu\right)\right]\\
	& \hspace{1cm} - \sum_{i=\lfloor n r\rfloor+1}^{\lfloor n b\rfloor} \left[d^{2}\left(Y_{i}, \hat{\mu}_{[r, b]}\right)-d^{2}\left(Y_{i}, \mu\right)\right]\Bigg\}+o_{up}(1)
\end{flalign*}
where $o_{up}(1)$ is the rounding error due to $[n(r-a)]^{-1}-[\lfloor nr\rfloor-\lfloor na\rfloor]^{-1}$ and $[n(b-r)]^{-1}-[\lfloor nb\rfloor-\lfloor nr\rfloor]^{-1}$.  Note by Lemma \ref{lem_drate}, we know that $\sup_{(a,r)\in\mathcal{I}_{\eta}}d(\hat{\mu}_{[a, r]},\mu)=O_p(n^{-1/2})$ and $\sup_{(r,b)\in\mathcal{I}_{\eta}}d(\hat{\mu}_{[r, b]},\mu)=O_p(n^{-1/2})$, hence by Lemma \ref{lem_Vhat} and \ref{lem_VhatC}, we obtain $$\sup_{(r;a,b)\in \mathcal{J}_{\eta}}|\sqrt{n}T_n^C(r;a,b)|=o_p(1).$$
The result follows by continuous mapping theorem.

\subsection*{Proof of Theorem \ref{thm_cpt_Ha}}
Note for any $k=\lfloor n\eta_1\rfloor,\cdots,n-\lfloor n\eta_1\rfloor$, and $i=1,2$, 
$$\max_{\lfloor n\eta_1\rfloor\leq k\leq n-\lfloor n\eta_1\rfloor}D_{n,i}(k)\geq D_{n,i}(\lfloor n\tau\rfloor).$$

We  focus on $k^*=\lfloor n\tau\rfloor$. In this case, the left and right part of the self-normalizer are both from stationary segments, hence  by similar arguments as in $\mathbb{H}_0$,
\begin{align}\label{SN1}
\begin{split}
	& \{ \sqrt{n}T_{n}\left(r ; 0, \tau\right)\}_{r\in[\eta_2,\tau-\eta_2]}\Rightarrow \{\sigma_1\mathcal{G}_1(r; 0, \tau) \}_{r\in[\eta_2,\tau-\eta_2]},
	\\
	& \{ \sqrt{n}T^C_{n}\left(r ; 0, \tau\right)\}_{r\in[\eta_2,\tau-\eta_2]}\Rightarrow 0; 
\end{split}
\end{align}
and 
\begin{align}\label{SN2}
\begin{split}
	& \{ \sqrt{n}T_{n}\left(r ; \tau, 1\right)\}_{r\in[\tau+\eta_2,1-\eta_2]}\Rightarrow \{\sigma_2\mathcal{G}_2(r;\tau, 1) \}_{r\in[\tau+\eta_2,1-\eta_2]},\\
	& \{ \sqrt{n}T^C_{n}\left(r ; \tau, 1\right)\}_{r\in[\eta_2,\tau-\eta_2]}\Rightarrow 0,
\end{split}
\end{align}
where $\mathcal{G}_i(r;a,b)=\frac{(b-r)}{(b-a)}[B^{(i)}(r)-B^{(i)}(a)]-\frac{(r-a)}{(b-a)}[B^{(i)}(b)-B^{(i)}(r)]$ for $i=1,2.$

Hence, we only need to consider the numerator, where
\begin{align}\label{decomp}
\begin{split}
	& \sqrt{n}T_n(\tau;0,1)=\sqrt{n}{\tau(1-\tau)}\left(\hat{V}_{[0, \tau]}-\hat{V}_{[\tau, 1]}\right),\\
	&\sqrt{n}T_n^C(\tau;0,1)=\sqrt{n}{\tau(1-\tau)}\left(\hat{V}_{[\tau; 0, 1]}^{C}-\hat{V}_{[0,\tau]}-\hat{V}_{[\tau, 1]}\right).
	\end{split}
\end{align}
For $\sqrt{n}T_n(\tau;0,1)$, we have 
\begin{flalign*}
	\sqrt{n}T_n(\tau;0,1)=& \sqrt{n}\left\{{\tau(1-\tau)}\left(\hat{V}_{[0, \tau]}-\tilde{V}_{[0,\tau]}+\tilde{V}_{[0,\tau]}-V^{(1)}\right)\right\}\\&-\sqrt{n}\left\{{\tau(1-\tau)}\left(\hat{V}_{[\tau, 1]}-\tilde{V}_{[\tau, 1]}+\tilde{V}_{[\tau, 1]}-V^{(2)}\right)\right\}\\
	&+\sqrt{n}\tau(1-\tau)(V^{(1)}-V^{(2)})
	\\=&T_{11}+T_{12}+T_{13}.
\end{flalign*}
By Lemma \ref{lem_Vhat}, we know that $
\sqrt{n}(\hat{V}_{[0,\tau]}-\tilde{V}_{[0,\tau]})=o_p(1),
$
and by Assumption \ref{ass_FCLT}, we have $
\sqrt{n}\tau(\tilde{V}_{[0,\tau]}-V^{(1)})\to_d \sigma_1B^{(1)}(\tau).
$
This implies that $$T_{11}\to_d (1-\tau)\sigma_1B^{(1)}(\tau).$$ Similarly, we can obtain $$T_{12}\to_d -\tau\sigma_2[B^{(2)}(1)-B^{(2)}(\tau)].$$
Hence, using the fact that $\sqrt{n}(V^{(1)}-V^{(2)})=\Delta_V$,  we obtain
\begin{equation}\label{T1}
	\sqrt{n}T_n(\tau;0,1)\to_d(1-\tau)\sigma_1B^{(1)}(\tau)-\tau\sigma_2[B^{(2)}(1)-B^{(2)}(\tau)]+\tau(1-\tau)\Delta_V.
\end{equation}
For $\sqrt{n}T_n^C(\tau;0,1)$ we have 
\begin{flalign*}
	&\sqrt{n}T_n^C(\tau;0,1)\\=&(1-\tau)n^{-1/2}\Bigg\{ \sum_{i=1}^{\lfloor n \tau\rfloor} \left[d^{2}\left(Y_{i}, \hat{\mu}_{[\tau,1]}\right)-  d^{2}\left(Y_{i}, \mu^{(1)}\right)\right] \\
	& \hspace{3cm}- \sum_{i=1}^{\lfloor n \tau\rfloor}\left[ d^{2}\left(Y_{i}, \hat{\mu}_{[0,\tau]}\right)- d^{2}\left(Y_{i}, \mu^{(1)}\right)\right]\Bigg\}
	\\&+\tau n^{-1/2} \Bigg\{\sum_{i=\lfloor n \tau\rfloor+1}^{n} \left[d^{2}\left(Y_{i}, \hat{\mu}_{[0,\tau]}\right)-d^{2}\left(Y_{i}, \mu^{(2)}\right)\right] \\ 
	& \hspace{3cm} -\sum_{i=\lfloor n \tau\rfloor+1}^{n} \left[d^{2}\left(Y_{i}, \hat{\mu}_{[\tau,1]}\right)-d^{2}\left(Y_{i}, \mu^{(2)}\right)\right]\Bigg\}+o_p(1)
	\\:=& T_{21}+T_{22}+T_{23}+T_{24}+o_p(1),
\end{flalign*}
where $o_p(1)$ is the rounding error due to $(n\tau)^{-1}-\lfloor n\tau\rfloor^{-1}$ and $[n(1-\tau)]^{-1}-(n-\lfloor n\tau\rfloor)^{-1}$. 

Note by Lemma \ref{lem_drate}, we have $d(\hat{\mu}_{[0,\tau]},\mu^{(1)})=O_p(n^{-1/2}),$ and by  triangle inequality, we know that 
$$d(\hat{\mu}_{[\tau,1]},\mu^{(1)})\leq d(\hat{\mu}_{[\tau,1]},\mu^{(2)})+d(\mu^{(1)},\mu^{(2)})=O_p(n^{-1/4}).$$
Then, by Assumption \ref{ass_expand}, we know 
\begin{flalign*}
	T_{21}
	=&\sqrt{n}(1-\tau)\tau K_d d^2(\hat{\mu}_{[\tau,1]},\mu^{(1)})\\
	&+(1-\tau) d(\hat{\mu}_{[\tau,1]},\mu^{(1)})\left[\frac{n^{-1/2}\sum_{i=1}^{\lfloor n\tau\rfloor}g(Y_i,\hat{\mu}_{[\tau,1]},\mu^{(1)})}{d(\hat{\mu}_{[\tau,1]},\mu^{(1)})}\right]\\&+o_{p}(d(\hat{\mu}_{[\tau,1]},\mu^{(1)})+\sqrt{n}d^2(\hat{\mu}_{[\tau,1]},\mu^{(1)}))
	\\=&\sqrt{n}(1-\tau)\tau K_dd^2(\hat{\mu}_{[\tau,1]},\mu^{(1)})+O_p(n^{-1/4})+o_p(1).
\end{flalign*}
Now, by triangle inequality, we know 
\begin{align*}
   &\sqrt{n}[d(\hat{\mu}_{[\tau,1]},\mu^{(2)})-d(\mu^{(1)},\mu^{(2)})]^2\leq \sqrt{n}d^2(\hat{\mu}_{[\tau,1]},\mu^{(1)}) \\ 
   & \hspace{4cm}\leq \sqrt{n}[d(\hat{\mu}_{[\tau,1]},\mu^{(2)})+d(\mu^{(1)},\mu^{(2)})]^2, 
\end{align*}
and note $d(\hat{\mu}_{[\tau,1]},\mu^{(2)})=O_p(n^{-1/2})$ by Lemma \ref{lem_drate}, we obtain $\sqrt{n}d^2(\hat{\mu}_{[\tau,1]},\mu^{(1)})=\Delta_M+o_p(1)$, and 
$$T_{21}=(1-\tau)\tau K_d\Delta_M+o_p(1).$$
By Lemma \ref{lem_Vhat},  $T_{22}=o_p(1)$. Hence $T_{21}+T_{22}=(1-\tau)\tau K_d\Delta_M+o_p(1)$.
Similarly, we obtain that $T_{23}+T_{24}=(1-\tau)\tau K_d\Delta_M+o_p(1)$. Therefore, 
\begin{flalign}\label{T2}
	\sqrt{n}T_n^C(\tau;0,1)=2\tau(1-\tau)K_d\Delta_M+o_p(1).
\end{flalign}

Hence, combining results of \eqref{SN1}--\eqref{T2}, we have \begin{flalign*}
	D_{n,1}(\lfloor n\tau\rfloor)& \to_d\frac{[(1-\tau)\sigma_1B^{(1)}(\tau)-\tau\sigma_2[B^{(2)}(1)-B^{(2)}(\tau)]+\tau(1-\tau)\Delta_V]^2}{[\int_{\eta_2}^{r-\eta_2} \sigma_1^2\mathcal{G}_1^{2}(u ; 0, r) d u+\int_{r+\eta_2}^{1-\eta_2} \sigma_2^2\mathcal{G}_2^{2}(u ; r, 1) d u]} \\
	& := \mathcal{S}_{\eta,1}(\tau;\Delta),
\end{flalign*}
and,
\begin{align*}
    & D_{n,2}(\lfloor n\tau\rfloor) \\
    \to_d & \frac{[(1-\tau)\sigma_1B^{(1)}(\tau)-\tau\sigma_2[B^{(2)}(1)-B^{(2)}(\tau)]+\tau(1-\tau)\Delta_V]^2+4[\tau(1-\tau)\Delta_M]^2}{[\int_{\eta_2}^{r-\eta_2} \sigma_1^2\mathcal{G}_1^{2}(u ; 0, r) d u+\int_{r+\eta_2}^{1-\eta_2} \sigma_2^2\mathcal{G}_2^{2}(u ; r, 1) d u]}\\
    := & \mathcal{S}_{\eta,2}(\tau;\Delta).
\end{align*}

Therefore, we know that for the $1-\alpha$ quantile of $\mathcal{S}_{\eta}$, denoted by $Q_{1-\alpha}(\mathcal{S}_{\eta})$, for $i=1,2$,
\begin{align*}
    & \lim_{n\to\infty} P\left(\max_{\lfloor n\eta_1\rfloor\leq k\leq n-\lfloor n\eta_1\rfloor}D_{n,i}(k)\geq Q_{1-\alpha}(\mathcal{S}_{\eta})\right) \\
    \geq & \lim_{n\to\infty} P\left(D_{n,i}(\lfloor n\tau\rfloor)\geq Q_{1-\alpha}(\mathcal{S}_{\eta})\right) \\ 
    = & P\left(\mathcal{S}_{\eta,i}(\tau;\Delta)\geq Q_{1-\alpha}(\mathcal{S}_{\eta})\right).
\end{align*}
In particular, 
\begin{flalign*}
	& \lim_{|\Delta_V|\to\infty}P\Big(\mathcal{S}_{\eta,1}(\tau;\Delta)\geq Q_{1-\alpha}(\mathcal{S}_{\eta})\Big)=1,\\
	& \lim_{\max\{|\Delta_V|,\Delta_M\}\to\infty}P\Big(\mathcal{S}_{\eta,2}(\tau;\Delta)\geq Q_{1-\alpha}(\mathcal{S}_{\eta})\Big)=1.
\end{flalign*}

\subsection*{Proof of Theorem \ref{thm_cpt_con}}
Define the pointwise limit of $\hat{\mu}_{[a,b]}$ under $\mathbb{H}_a$ as 
$$
\mu_{[a,b]}=
\begin{cases}
	\mu^{(1)}, & b\leq \tau\\
	\arg\min_{\omega\in\Omega}\left\{(\tau-a)\mathbb{E}d^2(Y_t^{(1)},\omega)+(b-\tau)\mathbb{E}d^2(Y_t^{(2)},\omega)\right\}, &a<\tau<b\\
	\mu^{(2)},& \tau\leq a
\end{cases}
$$
Define the Fr\'echet variance and pooled contaminated variance under $\mathbb{H}_a$ as 
$$
V_{[a,b]}=\begin{cases}
	V^{(1)}& b\leq \tau\\
	\frac{\tau-a}{b-a}\mathbb{E}(d^2(Y_t^{(1)},\mu_{[a,b]}))+\frac{b-\tau}{b-a}\mathbb{E}(d^2(Y_t^{(2)},\mu_{[a,b]})), &a<\tau<b\\
	V^{(2)},& \tau\leq a,
\end{cases}
$$
and 
\begin{flalign*}
&V^C_{[r;a,b]}=\\&\begin{cases}
	V^{(1)}& b\leq \tau\\
	\frac{\tau-a}{r-a}\mathbb{E}(d^2(Y_t^{(1)},\mu_{[r,b]}))+\frac{r-\tau}{r-a}\mathbb{E}(d^2(Y_t^{(2)},\mu_{[r,b]}))+\mathbb{E}(d^2(Y_t^{(2)},\mu_{[a,r]})), &a<\tau\leq r\\
	\mathbb{E}(d^2(Y_t^{(1)},\mu_{[r,b]}))+\frac{\tau-r}{b-r}\mathbb{E}(d^2(Y_t^{(1)},\mu_{[a,r]}))+\frac{b-\tau}{b-r}\mathbb{E}(d^2(Y_t^{(2)},\mu_{[a,r]})), &r<\tau<b\\
	V^{(2)},& \tau\leq a.
\end{cases}    
\end{flalign*}
We want to show that 
\begin{flalign*}
	& \left\{T_n(r;a,b)\right\}_{(r;a,b)\in\mathcal{J}_{\eta}}
	\Rightarrow \left\{T(r;a,b)\right\}_{(r;a,b)\in\mathcal{J}_{\eta}},\\
	& \left\{T^C_n(r;a,b)\right\}_{(r;a,b)\in\mathcal{J}_{\eta}}
	\Rightarrow \left\{T^C(r;a,b)\right\}_{(r;a,b)\in\mathcal{J}_{\eta}},
\end{flalign*}
where 
\begin{align*}
&T(r;a,b)=\frac{(r-a)(b-r)}{b-a}\left({V}_{[a, r]}-{V}_{[r, b]}\right), \\ 
&T^C(r;a,b)=\frac{(r-a)(b-r)}{b-a}\left({V}_{[r ; a, b]}^{C}-{V}_{[a, r]}-{V}_{[r, b]}\right).
\end{align*}

To achieve this, we  need to show (1). $\sup_{(a,b)\in\mathcal{I}_{\eta}}d(\hat{\mu}_{[a,b]},\mu_{[a,b]})=o_p(1)$; (2). $\sup_{(a,b)\in\mathcal{I}_{\eta}}|\hat{V}_{[a,b]}-V_{[a,b]}|=o_p(1)$; and (3). $\sup_{(r;a,b)\in\mathcal{J}_{\eta}}|\hat{V}^C_{[r;a,b]}-V^C_{[r;a,b]}|=o_p(1)$.

(1). The cases when $b\leq \tau$ and $a\geq \tau$ follow by Lemma \ref{lem_drate}. 
For the case when $\tau \in(a,b)$, recall
\begin{flalign*}
	\hat{\mu}_{[a, b]}=&\arg \min _{\omega \in \Omega}\frac{1}{\lfloor nb\rfloor-\lfloor na\rfloor} \sum_{t=\lfloor n a\rfloor+1}^{\lfloor n b\rfloor} d^{2}\left(Y_{t}, \omega\right)
	\\=&\arg \min _{\omega \in \Omega}\Bigg\{\frac{n}{\lfloor nb\rfloor-\lfloor na\rfloor}\frac{1}{n} \sum_{t=\lfloor n a\rfloor+1}^{\lfloor n \tau \rfloor} d^{2}\left(Y_{t}^{(1)}, \omega\right) \\ & \hspace{2cm} +\frac{n}{\lfloor nb\rfloor-\lfloor na\rfloor} \frac{1}{n} \sum_{t=\lfloor n \tau\rfloor+1}^{\lfloor n b \rfloor} d^{2}\left(Y_{t}^{(2)}, \omega\right)\Bigg\}.
\end{flalign*}
By the proof of (1) in Lemma \ref{lem_drate}, for $i=1,2$,  we have $$
\left\{\frac{1}{n} \sum_{t=1}^{\lfloor n u \rfloor} d^{2}\left(Y_{t}^{(i)}, \omega\right)-u\mathbb{E}d^2\left(Y_t^{(i)},\omega\right)\right\}_{\omega\in\Omega,u\in[0,1]}\Rightarrow 0,
$$
which implies that 
\begin{flalign}\label{mapping}
	\begin{split}
		&\Bigg\{\frac{n}{\lfloor nb\rfloor-\lfloor na\rfloor}\frac{1}{n} \sum_{t=\lfloor n a\rfloor+1}^{\lfloor n \tau \rfloor} d^{2}\left(Y_{t}^{(1)}, \omega\right) \\ & \hspace{1cm} +\frac{n}{\lfloor nb\rfloor-\lfloor na\rfloor} \frac{1}{n} \sum_{t=\lfloor n \tau\rfloor+1}^{\lfloor n b \rfloor} d^{2}\left(Y_{t}^{(2)}, \omega\right)\Bigg\}_{\omega\in\Omega,(a,b)\in\mathcal{I}_{\eta}}\\
		& \hspace{2cm} \Rightarrow \left\{\frac{\tau-a}{b-a}\mathbb{E}(d^2(Y_t^{(1)},\omega)+\frac{b-\tau}{b-a}\mathbb{E}(d^2(Y_t^{(2)},\omega))\right\}_{\omega\in\Omega,(a,b)\in\mathcal{I}_{\eta}}.
	\end{split}
\end{flalign}
By Assumption \ref{ass_unique}, and the argmax continuous mapping theorem (Theorem 3.2.2 in \cite{vaart1996weak}), the result follows.

(2).  The cases when $b\leq \tau$ and $a\geq \tau$ follows by Lemma \ref{lem_Vhat}.   For the case when $\tau \in(a,b)$, we have  for some constant $K>0$
\begin{flalign*}
	& \sup_{(a,b)\in\mathcal{I}_{\eta}}|\hat{V}_{[a,b]}-V_{[a,b]}| \\
	\leq& \sup_{(a,b)\in\mathcal{I}_{\eta}}\left(\frac{1}{\lfloor nb\rfloor-\lfloor na\rfloor} \sum_{t=\lfloor n a\rfloor+1}^{\lfloor n b\rfloor} \left|d^{2}\left(Y_{t}, \hat{\mu}_{[a,b]}\right)-d^{2}\left(Y_{t}, \mu_{[a,b]}\right)\right|\right)\\&+\sup_{(a,b)\in\mathcal{I}_{\eta}}\left|\frac{1}{\lfloor nb\rfloor-\lfloor na\rfloor} \sum_{t=\lfloor n a\rfloor+1}^{\lfloor n b\rfloor} d^{2}\left(Y_{t}, \mu_{[a,b]}\right)-V_{[a,b]}\right|
	\\\leq& \sup_{(a,b)\in\mathcal{I}_{\eta}}\left(\frac{1}{\lfloor nb\rfloor-\lfloor na\rfloor} \sum_{t=\lfloor n a\rfloor+1}^{\lfloor n b\rfloor} K\left|d\left(Y_{t}, \hat{\mu}_{[a,b]}\right)-d\left(Y_{t}, \mu_{[a,b]}\right)\right|\right)+o_p(1)
	\\\leq& \sup_{(a,b)\in\mathcal{I}_{\eta}}Kd(\hat{\mu}_{[a,b]},\mu_{[a,b]})+o_p(1)=o_p(1)
\end{flalign*}
where the second inequality holds by the boundedness of the metric and \eqref{mapping}, and the third inequality holds by the triangle inequality of the metric.

(3). The proof is similar to (2).


By continuous mapping theorem, we obtain that for $i=1,2$,
$$\left\{D_{n,i}(\lfloor nr\rfloor)\right\}_{r\in[\eta_1,1-\eta_1]}\Rightarrow\left\{D_{i}(r)\right\}_{r\in[\eta_1,1-\eta_1]},$$
where 
\begin{flalign*}
	D_1(r)=& \frac{[T(r;0,1)]^2}{\int_{\eta_2}^{r-\eta_2}[T(u;0,r)]^2du+\int_{r+\eta_2}^{1-\eta_2}[T(u;r,1)]^2du},\\ D_2(r)=& \frac{[T(r;0,1)]^2+[T^C(r;0,1)]^2}{\int_{\eta_2}^{r-\eta_2}[T(u;0,r)]^2+[T^C(u;0,r)]^2du+\int_{r+\eta_2}^{1-\eta_2}[T(u;r,1)]^2+[T^C(u;r,1)]^2du}.
\end{flalign*}

In particular,  at $r=\tau$,  we  obtain $D_{i}(\tau)=\infty$. Hence, to show the consistency of $\hat{\tau}$, it suffices to show that for any small $\epsilon>0$,  if $|r-\tau|>\epsilon$,
$$
D_i(r)<\infty.
$$
By symmetry, we consider the case of  $r-\tau>\epsilon$.

For $r-\tau>\epsilon$, we note that for both $i=1,2,$
$$
\sup_{r-\tau>\epsilon}D_{i}(r)\leq\frac{\sup_{r}\left\{[T(r;0,1)]^2+[T^C(r;0,1)]^2\right\}}{\inf_{r-\tau>\epsilon}\int_{\eta_2}^{r-\eta_2}[T(u;0,r)]^2du}.
$$
By proof of Proposition 1 in \cite{dubey2020frechet}, we obtain that for some universal constant $K>0$,
\begin{flalign*}
	\sup_{r}\left\{[T(r;0,1)]^2+[T^C(r;0,1)]^2\right\}\leq K(\Delta^2_M+\Delta^2_V)<\infty.
\end{flalign*}
Therefore, it suffices to show that there exists a function $\zeta(\epsilon)>0$, such that for any $r-\tau>\epsilon$,
$$
\int_{\eta_2}^{\tau-\eta_2}[T(u;0,r)]^2du>\zeta(\epsilon).
$$
For $r>\tau$, and for any  $u\in[\eta_2,\tau-\eta_2]$, 
\begin{flalign*}
	& T(u;0,r) \\
	=&\frac{u(r-u)}{r}(V^{(1)}-V_{[u,r]})\\=&\frac{u(r-u)}{r}\left[V^{(1)}-\frac{\tau-u}{r-u}\mathbb{E}(d^2(Y_t^{(1)},\mu_{[u,r]}))-\frac{r-\tau}{r-u}\mathbb{E}(d^2(Y_t^{(2)},\mu_{[u,r]}))\right]
	\\=&\frac{u(r-u)}{r}[V^{(1)}-V(\frac{\tau-u}{r-u})].
\end{flalign*}
By Assumption \ref{ass_mix}, we can obtain that 
$$
|T(u;0,r)|>\frac{u(r-u)}{r}\varphi(\frac{\epsilon}{r-u})\geq \eta_2^2\varphi(\epsilon).
$$
Hence, we can choose $\zeta(\epsilon)=\eta_2^6\varphi^2(\epsilon)$.

\section{Examples}\label{sec:example}
As we have mentioned in the main context, since $d^2(Y_t,\omega)$ takes value in $\mathbb{R}$ for any fixed $\omega\in\Omega$,  both Assumption \ref{ass_LLN} and \ref{ass_FCLTo}  could be implied by  high-level weak temporal dependence conditions in conventional Euclidean space. Therefore, we only discuss the verification of Assumption \ref{ass_diff}, \ref{ass_unique} and \ref{ass_expand} in what follows.

\subsection{Example 1: $L_2$ metric  $d_L$ for square integrable functions defined on $[0,1]$}
Let $\Omega$ be the Hilbert space of all square integrable functions defined on $I=[0,1]$ with inner product $\langle f,g\rangle=\int_{I}f(t)g(t)dt$ for two functions $f,g\in\Omega$. Then, for the corresponding norm $\|f\|=\langle f,f\rangle^{1/2}$, $L_2$ metric is defined by $$d_L^2(f,g)=\int_{I}[f(t)-g(t)]^2dt.$$

Assumptions \ref{ass_diff} and \ref{ass_unique} follows easily by the Riesz representation theorem and convexity of $\Omega$. We only consider Assumption \ref{ass_expand}.

Note that 
\begin{flalign*}
	d_L^2(Y,\omega)-d_L^2(Y,\mu)=&\int_{0}^1 [{\omega}(t)-{\mu}(t)][{\omega}(t)+{\mu}(t)-2Y(t)]dt\\
	=&d_L^2({\omega},{\mu})+2\int_{0}^1 [{\omega}(t)-{\mu}(t)][{\mu}(t)-Y(t)]dt
	\\:=&d_L^2({\omega},{\mu})+g(Y,\omega,\mu),
\end{flalign*}
and $R(Y,\omega,\mu)\equiv 0$.
Furthermore, 
\begin{flalign*}
	& \left|n^{-1/2}\sum_{i=\lfloor n a\rfloor+1}^{\lfloor n b\rfloor}g(Y_i,\omega,\mu)\right| \\
	=&\left|2\int_{0}^1 [{\omega}(t)-{\mu}(t)]n^{-1/2}\sum_{i=\lfloor n a\rfloor+1}^{\lfloor n b\rfloor}[{Y_i}(t)-{\mu}(t)]dt\right|
	\\\leq &2d_L({\omega},{\mu}) \left\{\int_{0}^1 \left|n^{-1/2}\sum_{i=\lfloor n a\rfloor+1}^{\lfloor n b\rfloor}[{Y_i}(t)-{\mu}(t)]\right|^2dt\right\}^{1/2},
\end{flalign*}
where the inequality holds by Cauchy-Schwarz inequality. 

By the boundedness of $d_L({\omega},{\mu})$, Assumption \ref{ass_expand} then follows if
$$
\sup_{t\in[0,1]}\sup_{(a,b)\in\mathcal{I}_{\eta}}\left|n^{-1/2}\sum_{i=\lfloor n a\rfloor+1}^{\lfloor n b\rfloor}[{Y_i}(t)-{\mu}(t)]\right|=O_p(1),
$$
which holds under general weak temporal dependence for functional observations, see, e.g. \cite{berkes2013weak}.

\subsection{Example 2: 2-Wasserstein metric $d_W$ of univariate CDFs}
Let $\Omega$ be the set of univariate CDF function on $\mathbb{R}$, consider the 2-Wasserstein metric defined by 
$$
d_W^2(G_1,G_2)=\int_{0}^1 (G_1(t)-G_2(t))^2dt,
$$
where $G_1$ and $G_2$ are two inverse CDFs  or quantile functions.

The verification of Assumption \ref{ass_diff} and \ref{ass_unique} can be found in Proposition C.1 in \cite{dubey2020frechet}. Furthermore, by similar arguments as Example 1, Assumption \ref{ass_expand} holds under weak temporal dependence conditions, see \cite{berkes2013weak}.

\subsection{Example 3: Frobenius metric $d_F$ for graph Laplacians or covariance matrices}
Let $\Omega$ 
be the set of graph Laplacians or covariance matrices of a fixed dimension $r$, with uniformly bounded diagonals, and equipped with the Frobenius metric $d_F$, i.e.
$$
d_F^2(\Sigma_1,\Sigma_2)=\mathrm{tr}[(\Sigma_1-\Sigma_2)^{\top}(\Sigma_1-\Sigma_2)].
$$
for two $r\times r$ matrices $\Sigma_1$ and $\Sigma_2$.

The verification of Assumption \ref{ass_diff} and \ref{ass_unique} can be found in Proposition C.2 in \cite{dubey2020frechet}. We only consider Assumption \ref{ass_expand}.

Note that 
\begin{flalign*}
	d_F^2(Y,\omega)-d_F^2(Y,\mu)=&\mathrm{tr}({\omega}-{\mu})^{\top}({\omega}+{\mu}-2Y)
	\\=&d_F^2({\omega},{\mu})+2\mathrm{tr}({\omega}-{\mu})^{\top}({\mu}-Y)\\
	:=&d_F^2({\omega},{\mu})+g(Y,\omega,\mu),
\end{flalign*}
and $R(Y,\omega,\mu)\equiv 0$.
Furthermore, by Cauchy-Schwarz inequality,
\begin{flalign*}
	\left|n^{-1/2}\sum_{i=\lfloor n a\rfloor+1}^{\lfloor n b\rfloor}g(Y_i,\omega,\mu)\right|
	=&2\left|\mathrm{tr}[({\omega}-{\mu})^{\top}n^{-1/2}\sum_{i=\lfloor n a\rfloor+1}^{\lfloor n b\rfloor}({Y_i}-{\mu})]\right|
	\\\leq &2d_F({\omega},{\mu}) d_F\left(n^{-1/2}\sum_{i=\lfloor n a\rfloor+1}^{\lfloor n b\rfloor}[{Y_i}-{\mu}],0\right).
\end{flalign*}
By the boundedness of $d_F({\omega},{\mu})$, Assumption \ref{ass_expand} then follows if
$$
\sup_{(a,b)\in\mathcal{I}_{\eta}}\left\|n^{-1/2}\sum_{i=\lfloor n a\rfloor+1}^{\lfloor n b\rfloor}\mathrm{vec}(Y_i-\mu)\right\|=O_p(1),
$$
which holds under common weak dependence conditions in conventional Euclidean space.

\subsection{Example 4: Log-Euclidean metric $d_E$ for covariance matrices }
Let $\Omega$ be the set of all positive-definite covariance matrices of dimension $r$, with uniformly both upper and lower bounded eigenvalues, i.e. for any $\Sigma\in\Omega$, $c\leq \lambda_{min}(\Sigma)\leq \lambda_{\max}(\Sigma)\leq C$ for some constant $0<c<C<\infty$. The log-Euclidean metric is defined by $d_E^2(\Sigma_1,\Sigma_2)=d_F^2(\log_m\Sigma_1,\log_m\Sigma_2)$, where $\log_m$ is the matrix-log function. 

Note that $\log_m\Sigma$ has the same dimension as $\Sigma$, hence the verification of Assumptions \ref{ass_diff}, \ref{ass_unique} and \ref{ass_expand} follows directly from Example 3.

\section{Functional Data in Hilbert Space}\label{sec:simu_functional}
Our proposed tests and DM test are also applicable to the inference of functional data in Hilbert space, such as $L_2[0,1]$, since the norm in Hilbert space naturally corresponds to the distance metric $d$. In a sense, our methods can be regarded as fully functional \citep{aue2018detecting} since no dimension reduction procedure is required. In this section, we further compare them with SN-based testing procedure by   \cite{zhangshao2015} for comparing two sequences of temporally dependent functional data, i.e.   $\{Y_t^{(i)}\}_{t=1}^{n_i}$ $i=1,2,$ defined on $[0,1]$.  The general idea is to first apply FPCA, and then compare  score functions (for mean) or covariance operators (for covariance)  between two samples in the space spanned by leading $K$ eigenfunctions. SN technique is also invoked to account for unknown temporal dependence. 

Although the test statistic in \cite{zhangshao2015} targets at the difference in covariance  operators of $\{Y_t^{(1)}\}$ and $\{Y_t^{(2)}\},$  their test can be readily modified to testing the mean difference.    
To be specific, denote $\mu^{(i)}$ as the mean function of $Y_t^{(i)}$, $t=1,\cdots,n_i$, $i=1,2$, we are interested in testing  
$$
\mathbb{H}_0: \mu^{(1)}(x)=\mu^{(2)}(x),\quad \forall x\in[0,1].
$$
We assume the covariance operator is common for both samples, which is denoted by $C_p$.

By Mercer’s Lemma, we have 
$$
C_p=\sum_{j=1}^{\infty}\lambda_p^j\phi_p^j\otimes \phi_p^j,
$$
where $\{{\lambda}^j_p\}_{j=1}^{\infty}$ and $\{{\phi}^j_p\}_{j=1}^{\infty}$ are the eigenvalues and eigenfunctions respectively.

By the Karhunen-Lo\`eve expansion, 
$$
Y_t^{(i)}=\mu^{(i)}+\sum_{j=1}^{\infty}\eta_{t,j}^{(i)}\phi^j_p, \quad t=1,\cdots,n_i;~~i=1,2,
$$
where $\{\eta_{t,j}^{(i)}\}$ are
the principal components (scores) defined by $\eta_{t,j}^{(i)}=\int_{[0,1]}\{Y_t^{(i)}-\mu^{(i)}\}\phi^j_p(x)dx=\int_{[0,1]}\{Y_t^{(i)}-\mu_p+\mu_p-\mu^{(i)}\}\phi^j_p(x)dx$ with $\mu_p=\gamma_1\mu^{(1)}+\gamma_2\mu^{(2)}$.

Under $\mathbb{H}_0$, $\mu^{(1)}=\mu^{(2)}=\mu_p$, and $\eta_{t,j}^{(i)}$ should have mean zero. We thus build the SN based test by comparing empirical estimates of score functions. Specifically, define the empirical covariance operator based on the pooled samples as 
$$
\hat{C}_{p}= \frac{1}{n_1+n_2}(\sum_{t=1}^{n_1}\mathcal{Y}^{(1)}_t+\sum_{t=1}^{n_2}\mathcal{Y}^{(2)}_t),
$$
where $\mathcal{Y}^{(i)}_t=  Y_t^{(i)}\otimes Y_t^{(i)},$ $i=1,2.$  Denote by $\{\hat{\lambda}^j_p\}_{j=1}^{\infty}$ and $\{\hat{\phi}^j_p\}_{j=1}^{\infty}$ the corresponding eigenvalues and eigenfunctions. We define the empirical scores (projected onto the eigenfunctions of pooled covariance operator) for each functional observation as 
$$
\hat{\eta}^{(i)}_{t,j}=\int_{[0,1]}\{Y_t^{(i)}(x)-\hat{\mu}_p(x)\}\hat{\phi}^j_p(x)dx, \quad t=1,\cdots,n_i;~~i=1,2; ~~j=1,\cdots, K,
$$
where $\hat{\mu}_p=(\sum_{t=1}^{n_1}Y_t^{(1)}+\sum_{t=1}^{n_2}Y_t^{(2)})/n$ is the pooled sample mean function.

Let $\hat{\eta}^{(i,K)}_{t,(K)}=(\hat{\eta}^{(i)}_{t,1},\cdots,\hat{\eta}^{(i)}_{t,K})^{\top}$, and $\hat{\alpha}^{(K)}(r)=(\lfloor rn_1\rfloor)^{-1}\sum_{t=1}^{\lfloor rn_1\rfloor}\hat{\eta}^{(1,K)}_{t}-(\lfloor rn_2\rfloor)^{-1}\sum_{t=1}^{\lfloor rn_2\rfloor}\hat{\eta}^{(2,K)}_{t}$ as the difference of recursive subsample mean of empirical scores, we consider the test statistic as 
\begin{align*}
 &ZSM=\\&n[\hat{\alpha}^{(K)}(1)]^{\top} \Bigg\{\sum_{k=1}^n\frac{k^2}{n^2}[\hat{\alpha}^{(K)}(k/n) -\hat{\alpha}^{(K)}(1)][\hat{\alpha}^{(K)}(k/n)-\hat{\alpha}^{(K)}(1)]^{\top}\Bigg\}^{-1}[\hat{\alpha}^{(K)}(1)],   
\end{align*}
and under $\mathbb{H}_0$ with suitable conditions, it is expected that 
$$
ZSM\to_d
B_{K}(1)^{\top} \left\{\int_{0}^{1}\left(B_{K}(r)-r B_{K}(1)\right)\left(B_{K}(r)-r B_{K}(1)\right)^{\top} \mathrm{d} r\right\}^{-1} B_{K}(1),
$$
where $B_K(\cdot)$ is a $K$-dimensional vector of independent Brownian motions.

Consider the following model taken from \cite{panaretos2010second},
\begin{flalign*}
	Y_t(x)=&\sum_{j=1}^{3}\left\{\xi^{j, 1}_{t} \sqrt{2} \sin (2 \pi j x)+\xi^{j, 2}_{t} \sqrt{2} \cos (2 \pi j x)\right\}, \quad t=1,2, \ldots,n_1
\end{flalign*}
where the coefficients $\xi_{t}=\left(\xi^{1,1}_{t}, \xi^{2,1}_{t}, \xi^{3,1}_{t}, \xi^{1,2}_{t}, \xi^{2,2}_{t}, \xi^{3,2}_{t}\right)^{\prime}$ are generated from a VAR process,
\begin{flalign*}
	\xi_{t}=&\rho \xi_{t-1}+\sqrt{1-\rho^{2}} e_{t}, \quad e_{t} \overset{i.i.d.}{\sim} \mathcal{N}\left(0,\frac{1}{2} \operatorname{diag}(\mathbf{v})+\frac{1}{2} \mathbf{1}_{6}\right)\in\mathbb{R}^6
\end{flalign*}
with $v=(12, 7, 0.5, 9, 5, 0.3)^{\top}$.

To compare the size and power performance, we generate independent functional time series $\{Y_t^{(1)}\}$ and $\{Y_t^{(2)}\}$ from the above model, and modify $\{Y_t^{(2)}\}$ according to the following settings: 
\begin{itemize}
	\item \mbox{[Case 1m]} $Y_t^{(2)}(x)= Y_t(x)+20\delta_1\sin(2\pi x)$, $x\in[0,1]$;   \item \mbox{[Case 1v]} $Y_t^{(2)}(x)= Y_t(x)+20\delta_2\eta_t\sin(2\pi x)$, $x\in[0,1]$;
	\item  \mbox{[Case 2m]} $Y_t^{(2)}(x)= Y_t(x)+20\delta_1x$, $x\in[0,1]$;
	\item  \mbox{[Case 2v]} $Y_t^{(2)}(x)= Y_t(x)+20\delta_2\eta_tx$, $x\in[0,1]$;
	\item  \mbox{[Case 3m]} $Y_t^{(2)}(x)= Y_t(x)+20\delta_1\mathbf{1}(x\in[0,1]) $;
	\item \mbox{[Case 3v]} $Y_t^{(2)}(x)= Y_t(x)+20\delta_2\eta_t\mathbf{1}(x\in[0,1])$;
\end{itemize}
where $\eta_t\overset{i.i.d.}{\sim}\mathcal{N}(0,1)$ and $\delta_1,\delta_2\in[0,0.3]$.  

The size performance of all tests are evaluated by setting $\delta_1=\delta_2=0$.
As for the power performance,  Cases 1m-3m with $\delta_1\in(0,0.3]$ correspond to alternatives caused by mean differences and Cases 1v-3v  with $\delta_2\in(0,0.3]$ correspond to covariance operator differences. In particular, we note the alternative of Cases 1m and 1v depends on the signal function $f(x)=\sin(2\pi x), x\in[0,1]$, which  is in the space spanned by the eigenfunctions of $Y_t(x)$, while  for Cases 3m and 3v, the  signal function $f(x)=\mathbf{1}(x\in[0,1])$ is orthogonal to these eigenfunctions.

We denote the two-sample mean test and covariance operator test based on \cite{zhangshao2015} as ZSM and ZSV respectively. The empirical size of all tests are outlined in Table \ref{tab_functional} at nominal level $\alpha=5\%$.  From this table, we see that (a) $D_1$ has accurate size  across all model settings and $D_2$ is generally reliable for moderate dependence level, albeit oversize phenomenon for small $n$ when $\rho=0.7$; (b) DM suffers from severe size distortion when temporal dependence is exhibited even for large $n$; (c) although both ZSM and ZSV utilize SN to robustify the tests due to  temporal dependence, we find their performances depend on the user-chosen parameter $K$ a lot, and still suffer from size distortion when $n$ is small. In particular, the size distortion when $K=4$ is considerably larger than that for $K=2$ in the presence of temporal dependence.

\begin{table}[H]
	\centering 
	\caption{Size Performance $(100\%)$ at $\alpha=5\%$} 
	\label{tab_functional}
	\begin{tabular}{ccccccccc}
		\hline
		\multicolumn{9}{c}{Functional Data based on $d_L$}                                                                                                                           \\ \hline
		\multirow{2}{*}{$\rho$} & \multirow{2}{*}{$n_i$} & \multirow{2}{*}{$D_1$} & \multirow{2}{*}{$D_2$} & \multirow{2}{*}{DM} & \multicolumn{2}{c}{ZSM} & \multicolumn{2}{c}{ZSV} \\ \cline{6-9} 
		&                        &                        &                        &                     & $K=2$      & $K=4$      & $K=2$      & $K=4$      \\ \hline
		\multirow{4}{*}{-0.4}   & 50                     & 5.4                    & 5.6                    & 11.5                & 3.7        & 2.3        & 5.8        & 10.1       \\
		& 100                    & 5.3                    & 5.3                    & 9.5                 & 3.1        & 2.5        & 4.4        & 7.6        \\
		& 200                    & 6.7                    & 6.6                    & 11.5                & 3.3        & 4.2        & 5.8        & 5.7        \\
		& 400                    & 5.6                    & 5.6                    & 8.7                 & 4.4        & 4.2        & 4.2        & 7.3        \\ \hline
		\multirow{4}{*}{0}      & 50                     & 4.9                    & 5.6                    & 5.7                 & 6.3        & 6.3        & 5.3        & 5.1        \\
		& 100                    & 3.8                    & 3.8                    & 4.3                 & 5.0        & 4.4        & 4.0        & 5.0        \\
		& 200                    & 5.8                    & 6.0                    & 5.5                 & 3.8        & 5.7        & 5.4        & 5.7        \\
		& 400                    & 4.3                    & 4.6                    & 4.1                 & 5.4        & 4.7        & 4.5        & 6.1        \\ \hline
		\multirow{4}{*}{0.4}    & 50                     & 5.9                    & 8.9                    & 10.6                & 8.3        & 13.6       & 5.3        & 10.9       \\
		& 100                    & 4.9                    & 4.7                    & 9.5                 & 6.7        & 8.4        & 5.4        & 7.1        \\
		& 200                    & 5.5                    & 5.8                    & 8.9                 & 4.7        & 7.4        & 5.8        & 6.9        \\
		& 400                    & 5.3                    & 4.9                    & 9.6                 & 5.9        & 5.8        & 6.0        & 5.3        \\ \hline
		\multirow{4}{*}{0.7}    & 50                     & 7.2                    & 20.4                   & 36.9                & 17.1       & 31.4       & 7.3        & 34.8       \\
		& 100                    & 6.5                    & 12.1                   & 29.5                & 10.1       & 16.4       & 5.7        & 18.9       \\
		& 200                    & 6.5                    & 8.2                    & 29.6                & 6.8        & 11.7       & 5.9        & 10.3       \\
		& 400                    & 4.9                    & 5.8                    & 25.0                & 7.0        & 8.4        & 6.1        & 6.8        \\ \hline
	\end{tabular}
\end{table}

Figure \ref{Fig:functional} further compares their size-adjusted powers when $n_1=n_2=400$ and $\rho=0.4$. As can be seen, $D_1$ possesses trivial power against mean differences while $D_2$ is rather stable in all settings with evident advantages in Cases 2m and 3m.    In contrast, the power performances of DM, ZSM and ZSV vary among different settings. For example, when the alternative signal function is in the span of leading eigenfunctions, i.e. Cases 1m and 1v, ZSM and ZSV with $K=2$ can deliver (second) best power performances as expected, while they are dominated by other tests when the alternative  signal function is orthogonal to eigenfunctions in Cases 3m and 3v.   As for DM, it is largely dominated by $D_2$ in terms of mean differences, although it exhibits moderate advantage over $D_2$ for covariance operator differences. 

In general, whether the difference in mean/covariance operator is orthogonal to the leading eigenfunctions, or lack thereof, is unknown to the user.  
Our test $D_2$ is  robust to unknown temporal dependence, exhibits quite accurate  size and delivers comparable powers in all settings, and thus should be preferred in practice.

\begin{figure}[H]
	\centering 
	\begin{subfigure}{0.32\textwidth}
		\centering
		\includegraphics[width=1\textwidth]{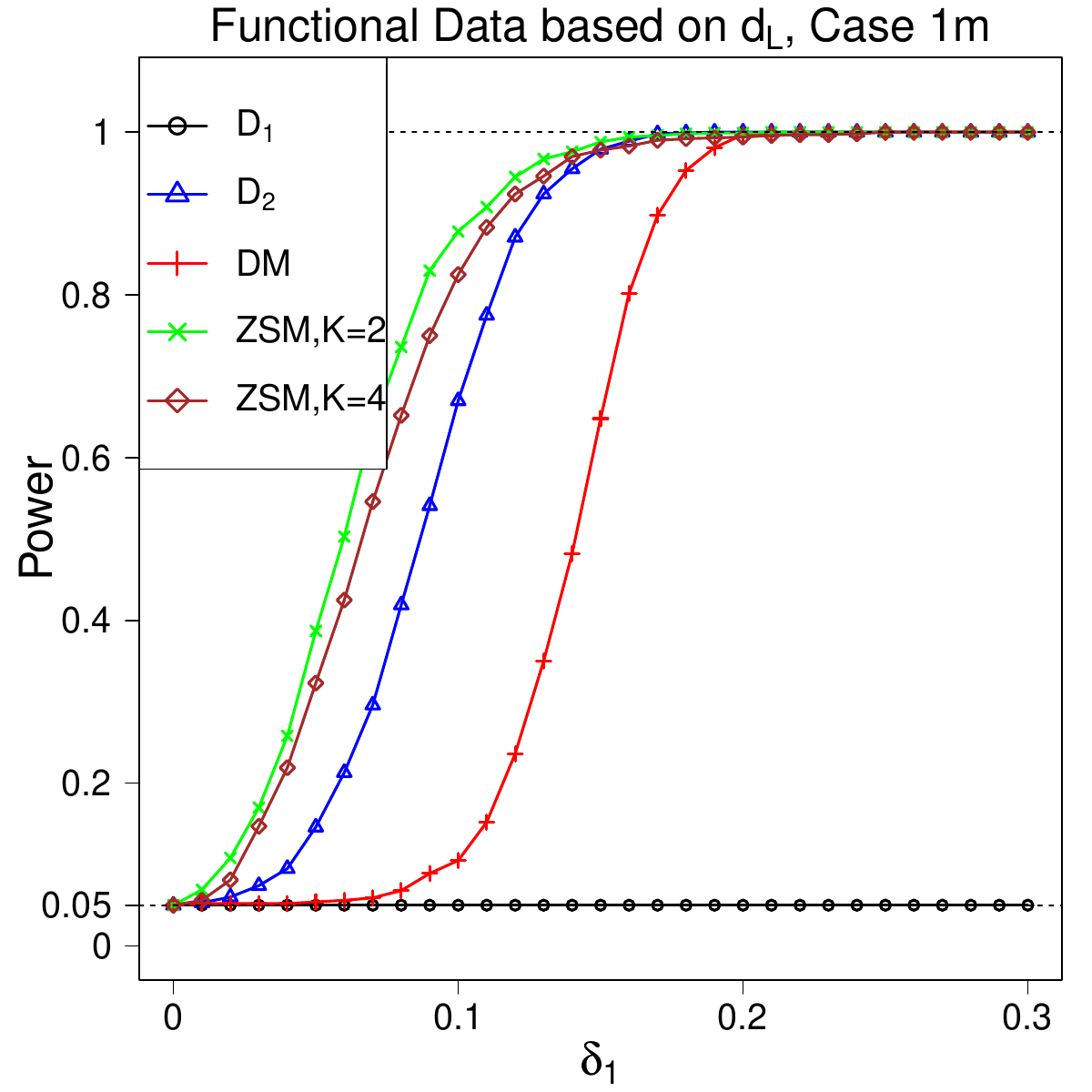}
	\end{subfigure}
	\begin{subfigure}{0.32\textwidth}
		\centering
		\includegraphics[width=1\textwidth]{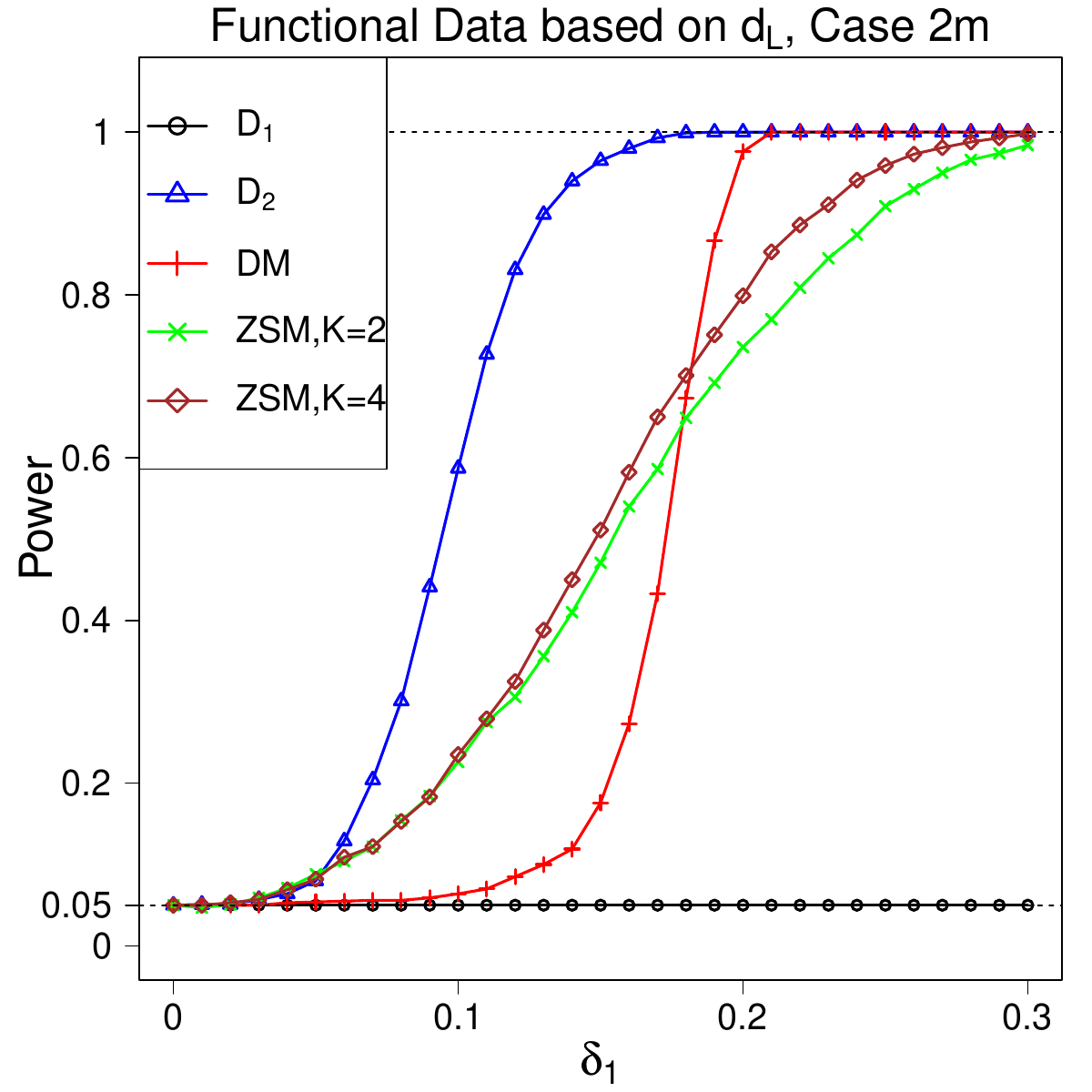}
	\end{subfigure}
	\begin{subfigure}{0.32\textwidth}
		\centering
		\includegraphics[width=1\textwidth]{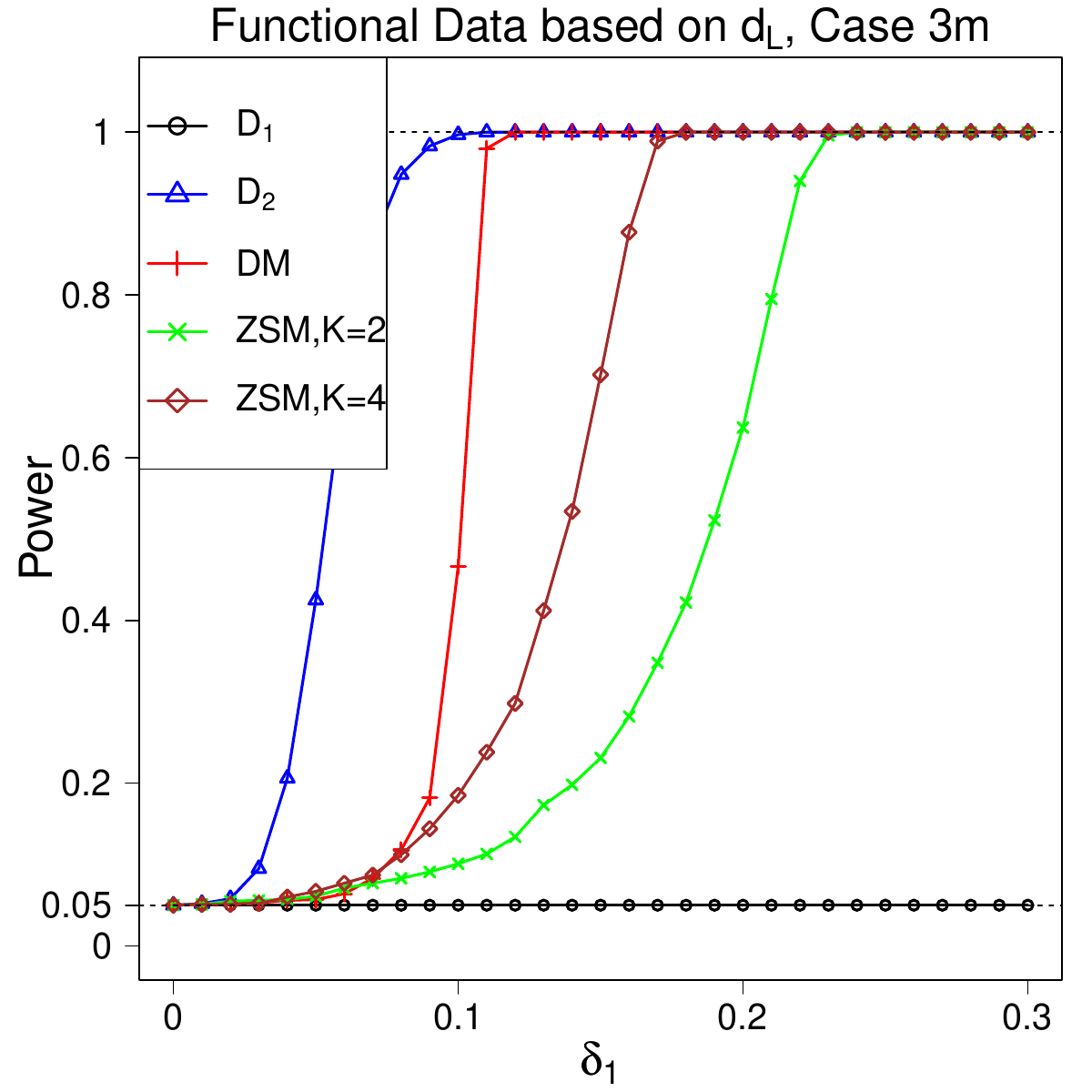}
	\end{subfigure}
	\begin{subfigure}{0.32\textwidth}
		\centering
		\includegraphics[width=1\textwidth]{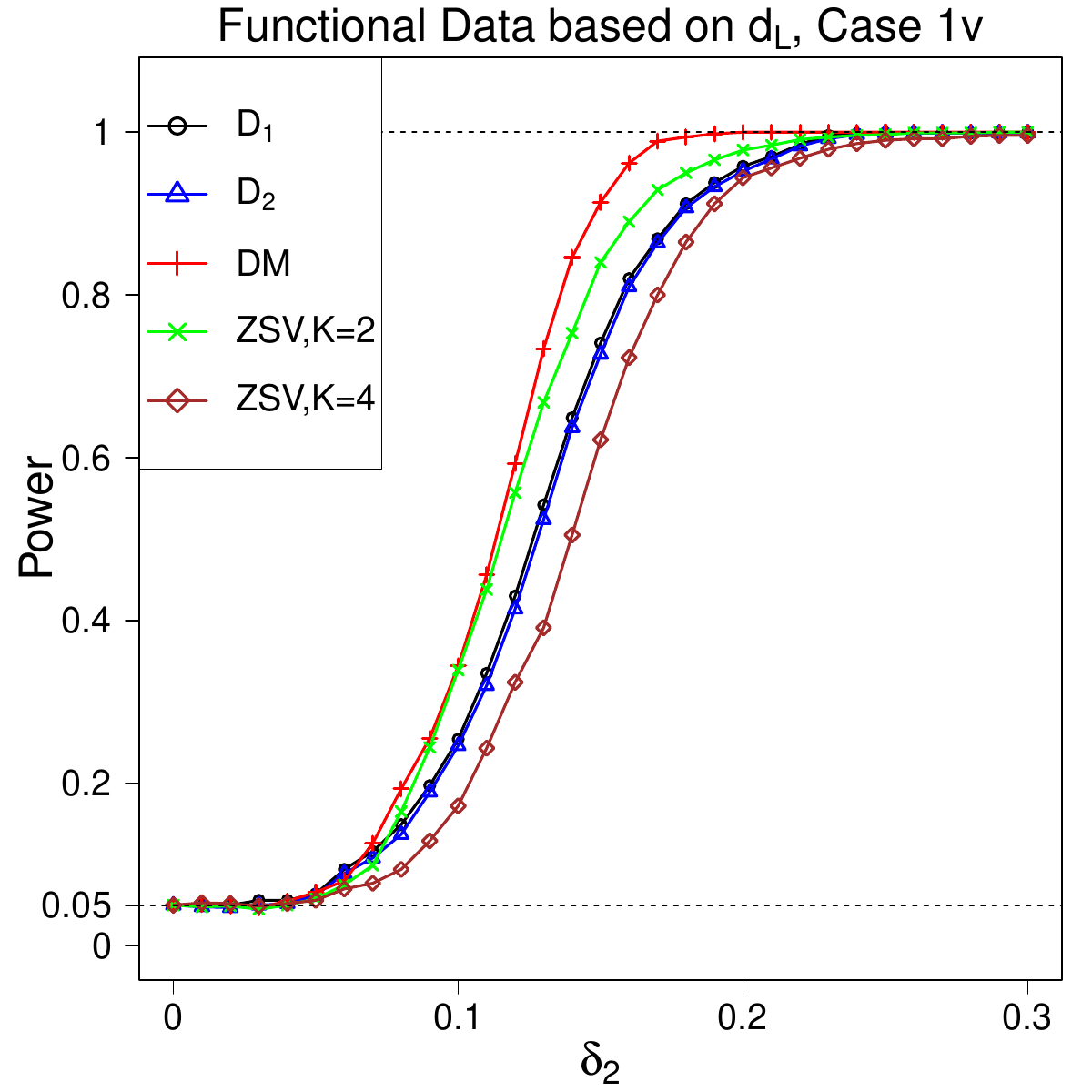}
	\end{subfigure}
	\begin{subfigure}{0.32\textwidth}
		\centering
		\includegraphics[width=1\textwidth]{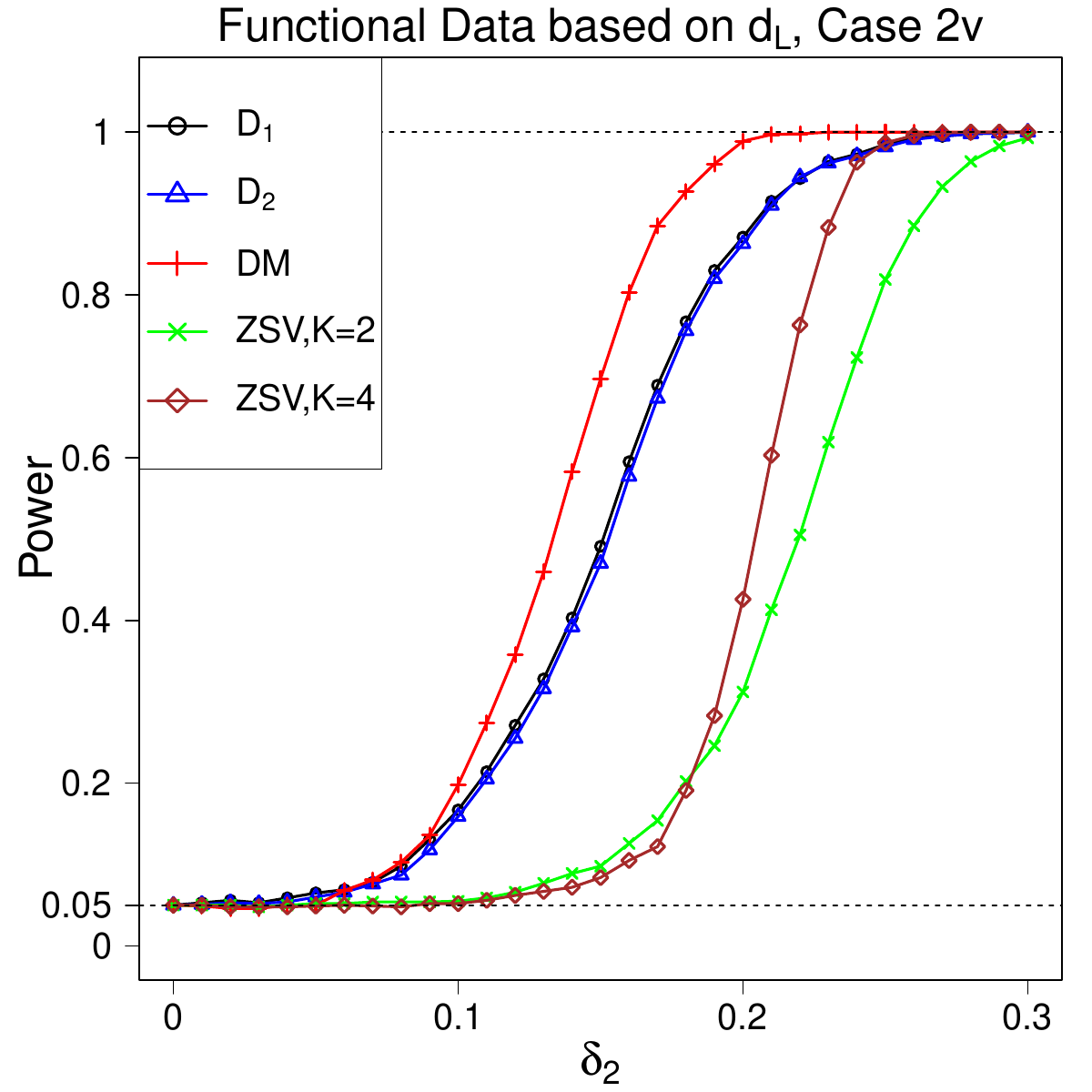}
	\end{subfigure}
	\begin{subfigure}{0.32\textwidth}
		\centering
		\includegraphics[width=1\textwidth]{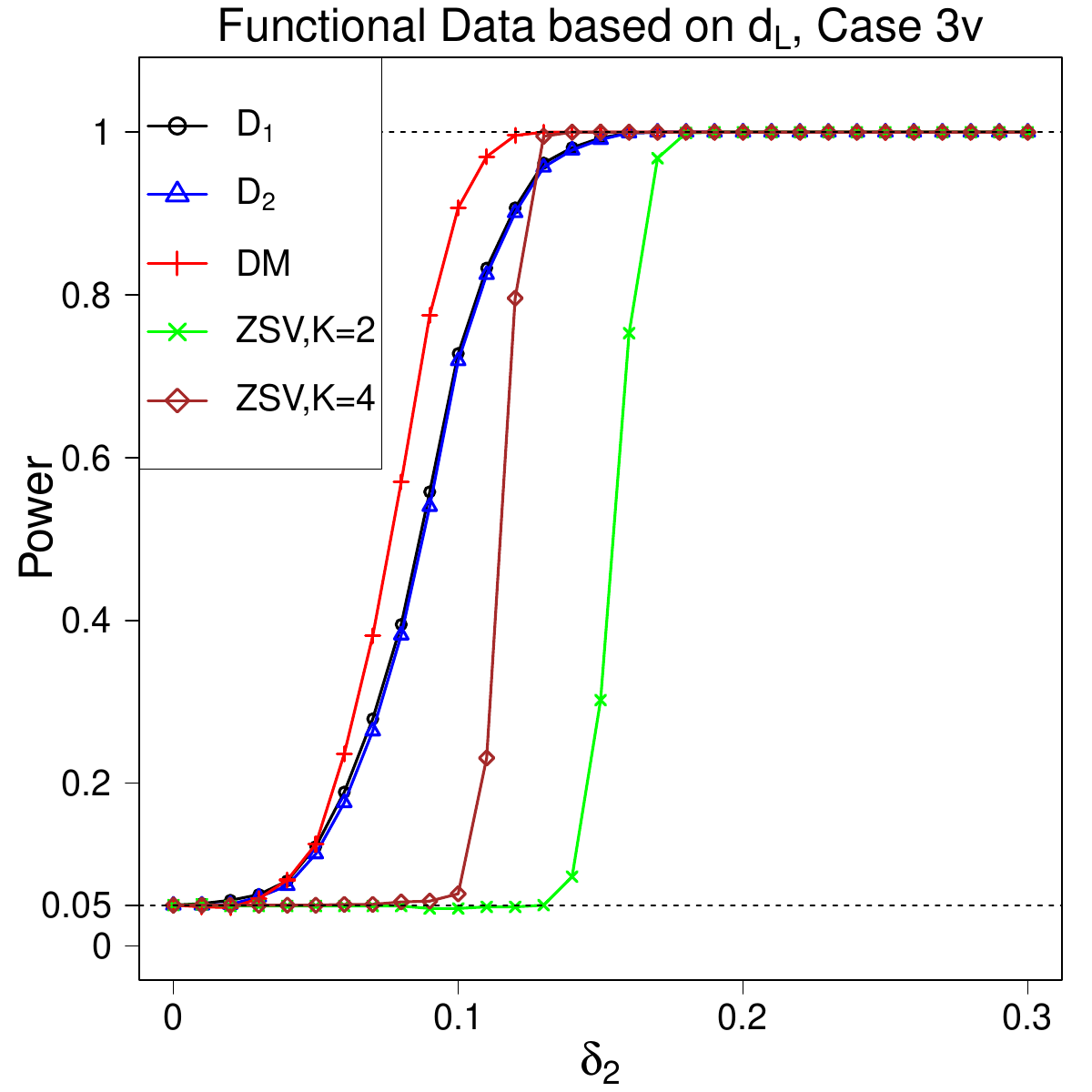}
	\end{subfigure}
	\caption{Size-Adjusted Power $(\times 100\%)$ at $\alpha=5\%$, $n_i=400$ and $\rho=0.4$.  Upper panel: mean difference; bottom panel: covariance operator difference.}
	\label{Fig:functional}
\end{figure}

\bibliographystyle{agsm}
\bibliography{reference}






\end{document}